\documentclass[11pt]{amsart}
\usepackage[english]{babel}
\usepackage[utf8]{inputenc}
\usepackage[T1]{fontenc}
\usepackage{amsmath, amssymb, amsfonts}
\usepackage{amsthm}
\usepackage{mathrsfs}
\usepackage{enumitem}
\usepackage{mathtools}
\usepackage{todonotes}
\usepackage{standalone}
\usepackage{pgfplots}
\pgfplotsset{compat=1.17}
\usepackage{thmtools, thm-restate}
\usepackage{booktabs}
\usepackage{hyperref}
\hypersetup{hidelinks}
\usepackage[capitalise, noabbrev]{cleveref}
\usepackage[font=small]{caption}
\usepackage{microtype}
\DeclarePairedDelimiter\ceil{\lceil}{\rceil}

\DeclarePairedDelimiter\abs{\lvert}{\rvert}

\usepackage{bm}           
\usepackage{soul}
\usepackage{graphicx}
\graphicspath{{./figures/}}
\usepackage{tikz}
\usepackage{tikz-cd}

\usepackage[vlined, ruled,linesnumbered]{algorithm2e}
\usepackage{algpseudocode}

\newcommand{\R}{\mathbb{R}} 

\let\epsilon\varepsilon

\newcommand{\Ball}{\text{Ball}}
\DeclareMathOperator{\Nrv}{Nrv}
\DeclareMathOperator{\V}{\mathcal{V}}
\DeclareMathOperator{\Vor}{Vor}
\DeclareMathOperator{\DelT}{Del}
\newcommand{\DC}{\mathcal{DC}}
\newcommand{\NN}{\mathcal{N}}
\DeclareMathOperator{\Del}{\mathcal{D}}
\DeclareMathOperator{\Tel}{Tel}
\DeclareMathOperator{\TelVor}{\mathcal{T}}

\DeclareMathOperator{\TelOffset}{\mathcal{TO}}

\DeclareMathOperator{\TelVO}{\mathcal U}
\DeclareMathOperator{\TelDC}{\mathcal W}

\DeclareMathOperator{\DelCech}{\mathcal{DC}}
\DeclareMathOperator{\Grid}{\mathrm{Grid}}

\DeclareMathOperator{\im}{im}
\DeclareMathOperator{\colim}{colim}

\DeclareMathOperator*{\minimize}{minimize}

\newcommand{\Top}{\textbf{Top}}

\PassOptionsToPackage{x11names}{xcolor}
\usepackage{xcolor}
\definecolor{darkred}{RGB}{139,0,0}
\definecolor{darkblue}{RGB}{0,0,139}
\definecolor{darkgreen}{RGB}{0,100,0}

\newcommand{\deff}{\emph}

\newcommand{\Offset}[0]{\mathcal{O}}

\newcommand{\Incr}[0]{\mathcal{I}}

\newcommand{\St}{\mathrm{Star}}

\providecommand\given{}
\newcommand\SetSymbol[1][]{%
  \nonscript\:#1\vert
  \allowbreak
  \nonscript\:
  \mathopen{}}
\DeclarePairedDelimiterX\Set[1]\{\}{%
  \renewcommand\given{\SetSymbol[]}
  #1
}

\makeatletter
\DeclareRobustCommand{\subto}{%
  \mathrel{\mathpalette\short@to\relax}%
}

\newcommand{\short@to}[2]{%
  \mkern2mu
  \clipbox{{.35\width} 0 0 0}{$\m@th#1\vphantom{+}{\rightarrow}$}%
  }
\makeatother

\makeatletter
\DeclareRobustCommand{\subleft}{%
  \mathrel{\mathpalette\short@left\relax}%
}

\newcommand{\short@left}[2]{%
  \mkern2mu
  \clipbox{0 0 {.35\width} 0}{$\m@th#1\vphantom{+}{\leftarrow}$}%
  }
\makeatother

\makeatletter
\DeclareRobustCommand{\subup}{%
  \mathrel{\mathpalette\short@up\relax}%
}

\newcommand{\short@up}[2]{%
  \mkern2mu
  \clipbox{0 {.35\height} 0 0}{$\m@th#1\vphantom{+}{\uparrow}$}%
}
\makeatother

\makeatletter
\DeclareRobustCommand{\subdown}{%
  \mathrel{\mathpalette\short@down\relax}%
}
\newcommand{\short@down}[2]{%
  \mkern2mu
  \clipbox{0 0 0 {.45\height}}{$\m@th#1\vphantom{+}{\downarrow}$}%
}
\makeatother

\makeatletter
\DeclareRobustCommand{\subupright}{%
  \mathrel{\mathpalette\short@near\relax}%
}
\newcommand{\short@near}[2]{%
  \mkern2mu
  \clipbox{{.55\width} {.55\height} 0 0}{$\m@th#1\vphantom{+}{\nearrow}$}%
}
\makeatother

\makeatletter
\DeclareRobustCommand{\subdownleft}{%
  \mathrel{\mathpalette\short@sw\relax}%
}
\newcommand{\short@sw}[2]{%
  \mkern2mu
  \clipbox{0 0 {.55\width} {.55\height}}{$\m@th#1\vphantom{+}{\swarrow}$}%
}
\makeatother

\renewcommand{\subparagraph}[1]{\subsubsection*{\textbf{\textup{#1}}}}

\theoremstyle{plain}
\newtheorem{theorem}{Theorem}[section]
\newtheorem{lemma}[theorem]{Lemma}
\newtheorem{corollary}[theorem]{Corollary}
\newtheorem{proposition}[theorem]{Proposition}

\theoremstyle{definition}
\newtheorem{definition}[theorem]{Definition}
\newtheorem{example}[theorem]{Example}

\theoremstyle{remark}
\newtheorem{remark}[theorem]{Remark}

\newcommand{\SizeBound}{{\lceil (d+1)/2 \rceil + 1}}
\newcommand{\CompBound}{{\lceil d/2\rceil +2}}

\title{Bifunction and Interlevel Delaunay Trifiltrations}

\author[A.J.~Alonso]{\'Angel Javier Alonso}
\address{CUNEF Universidad, Madrid, Spain and Institute of Geometry, Graz University of Technology, Austria}
\email{angel.alonso@cunef.edu}

\author[M.~Kerber]{Michael Kerber}
\address{Institute of Geometry, Graz University of Technology, Austria}
\email{kerber@tugraz.at}

\author[T.~Lam]{Tung Lam}
\address{University at Albany, SUNY, USA}
\email{tlam@albany.edu}

\author[M.~Lesnick]{Michael Lesnick}
\address{University at Albany, SUNY, USA}
\email{mlesnick@albany.edu}

\author[A.~Rathod]{Abhishek Rathod}
\address{Ben Gurion University, Israel}
\email{arathod@post.bgu.ac.il}

\thanks{A.~Alonso: Austrian Science Fund (FWF) grants 10.55776/P33765 and 10.55776/W1230.
M.~Kerber: Austrian Science Fund (FWF) grant 10.55776/P33765.
T.~Lam and M.~Lesnick: Simons Foundation Award 963845.
A.~Rathod: European Research Council (ERC) grant PARAPATH (101039913).}

\thanks{Software available at \url{https://bitbucket.org/mkerber/function_delaunay}.
Dataset available at \url{https://doi.org/10.5281/zenodo.19227801}.}

\keywords{Delaunay triangulation, Multiparameter persistent homology, Interlevel, Bowyer-Watson}
\begin{document}

\maketitle

\begin{abstract}
A key property of the Delaunay filtration is that it is topologically (i.e., weakly) equivalent to the offset
  (union-of-balls) filtration. Recently, this filtration has been extended to
  point clouds equipped with an $\mathbb{R}$-valued function, yielding a
  computable 2-parameter filtration that satisfies an analogous weak equivalence.                                                                              Motivated in part by the study of time-varying data, we introduce a
  3-parameter extension of the Delaunay filtration for point clouds equipped with
  an $\mathbb{R}^2$-valued function, also satisfying an analogous weak
  equivalence. For a point cloud $X \subset \mathbb{R}^d$, our trifiltration has
  size $O\bigl(|X|^{\lceil(d+1)/2\rceil+1}\bigr)$. We present an algorithm that
  computes this trifiltration in time
  $O\bigl(|X|^{\lceil d/2\rceil+2}\bigr)$,
  together with an implementation. Our experiments demonstrate that the
  implementation can handle thousands of points in $\mathbb{R}^3$, with memory growth that is nearly linear.
\end{abstract} 

\section{Introduction}
\subparagraph{Background.}
Given a finite point cloud $X \subset \R^d$, the \emph{offset filtration} of $X$ is the nested sequence of spaces $ \Offset(X) = \left(\Offset(X)_r \right)_{r \in \R^+}$, where $\R^+=[0,\infty)$ and $\Offset(X)_r$ is the union of closed balls of radius $r$ centered at the points of $X$ (\Cref{fig:del-offset}). In topological data analysis (TDA), one often computes the persistent homology of $\Offset(X)$.  Directly computing $\Offset(X)$ is difficult, so one usually instead computes the \emph{Delaunay filtration} $\mathcal D(X)$ (also called the $\alpha$-filtration) \cite{Edelsbrunner:alphashape,edelsbrunner2022computational} (\Cref{fig:del-offset}), a filtration of the Delaunay triangulation $\DelT(X)$ of $X$ that is topologically equivalent to $\Offset(X)$ (i.e., weakly equivalent, see \cref{def:top_equiv}), and therefore has the same persistent homology.  For low-dimensional Euclidean data, $\mathcal D(X)$ is the standard choice of filtration in TDA, because its computation is far more efficient than alternatives like the Rips and \v Cech filtrations. Bauer and Edelsbrunner \cite{bauer2017morse} showed that $\Offset(X)$ is also weakly equivalent to a variant of $\mathcal D(X)$ called the \emph{Delaunay-\v Cech} filtration and denoted $\mathcal{DC}(X)$, where the birth index of each simplex in the filtration is the radius of its smallest enclosing ball.

\begin{figure}[h!]
\begin{center}
\centering
\includegraphics[trim={0.3cm 0.5cm 0.3cm 0.5cm},clip,page=1,scale=0.63]{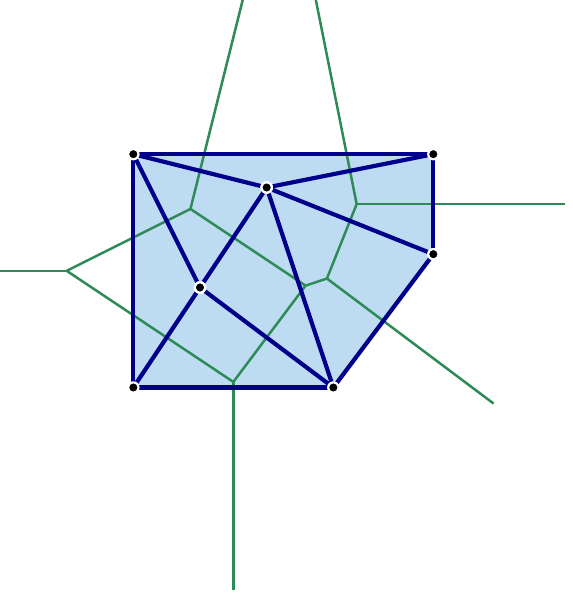}
\hfill
\includegraphics[trim={0.3cm 0.5cm 0.3cm 0.5cm},clip,page=2,scale=0.63]{del-vor}
\end{center}
\caption{Left: For a point set $X \subset \R^2$, the Voronoi decomposition $\Vor(X)$ (green) and the corresponding Delaunay triangulation $\DelT(X)$ (dark blue). Right: The offset $\Offset(X)_r$ (orange) and Delaunay complex $\Del(X)_r \subset \DelT(X)$ for a fixed radius $r$.}
\label{fig:del-offset}
\end{figure}

In many TDA applications, the point cloud $X$ comes equipped with a function $\delta\colon X \to \R$ \cite{chazal2011scalar,carlsson2009theory}.  It is then natural to seek a refinement of the persistent homology of $X$ that is sensitive to the structure of this function.  Functions considered in the TDA literature include (co)density, eccentricity (centrality), discrete curvature \cite{carlsson2009theory}, partial charge \cite{cang2020persistent,cang2018representability}, and time \cite{lin2022temporalTDA}. 
To extend the persistence pipeline to point clouds equipped with multiple functions, Carlsson and Zomorodian \cite{carlsson2009theory} proposed \emph{multiparameter persistence}, and specifically, the \emph{sublevel offset bifiltration} $\Offset(\delta) = \left( \Offset(\delta)_{q,r} \right)_{(q,r)\in \R\times \R^{+}}$, where \[\Offset(\delta)_{q,r}=\Offset(\{x\in X\mid \delta(x)\leq q\})_{r}.\]
The sublevel {\v C}ech and sublevel Rips bifiltrations are defined analogously, and can be computed in essentially the same way as the usual \v Cech and Rips filtrations.  

Defining an analogous sublevel Delaunay bifiltration is more subtle because of the non-monotonicity of Delaunay complexes with respect to insertion of points.  
However, recent work~\cite{alonsoDelaunayBifiltrationsFunctions2024} introduced sublevel Delaunay and sublevel Delaunay-\v Cech bifiltrations that are both weakly equivalent to the sublevel offset bifiltration and amenable to efficient computation.  Experiments with a publicly available implementation showed that, in practice, computing the sublevel Delaunay-\v Cech bifiltration is only modestly more expensive than computing the ordinary Delaunay filtration.  The work~\cite{alonsoDelaunayBifiltrationsFunctions2024} is part of a large body of recent work aiming to realize multiparameter persistence as a computationally viable data analysis tool; for an introduction, see \cite{botnan2023introduction}.  

\subparagraph{Contributions.}
In this paper, we extend the results of \cite{alonsoDelaunayBifiltrationsFunctions2024} on sublevel-Delaunay bifiltrations to functions $\gamma\colon X \to \R^2$, where $X\subset \R^d$ is in general position. Specifically, we introduce the \emph{sublevel Delaunay} and \emph{sublevel Delaunay-\v{C}ech trifiltrations} of $\gamma$, denoted $\Del(\gamma)$ and $\DelCech(\gamma)$, respectively. We show that both trifiltrations have size $O(|X|^{\SizeBound})$, can be computed in time $O(|X|^\CompBound)$, and are weakly equivalent to the \emph{sublevel offset trifiltration}, defined as follows:
 \begin{definition}
The \deff{sublevel offset trifiltration} of $\gamma$, denoted $\Offset(\gamma)$, is the collection of Euclidean subspaces $\bigl(\Offset_{p,r}\bigr)_{(p,r)\in \mathbb{R}^2 \times \mathbb{R}^{+}}$,
where
$\Offset_{p,r} \coloneqq \Offset(X_p)_r$ and 
$X_p \coloneqq \{x \in X \mid \gamma(x) \le p\}.$

Here $\le$ denotes the product partial order on $\mathbb{R}^2$.\end{definition}
Our approach to defining and computing the trifiltrations $\Del(\gamma)$ and $\DelCech(\gamma)$ parallels the approach of \cite{alonsoDelaunayBifiltrationsFunctions2024} for $\R$-valued functions, but is necessarily more involved.  To illustrate the problem, consider four point clouds $X_a\subset X_b,X_c\subset X_d$, as in \cref{fig:non-monotone}.  While for any $r\in \R^+$, the offsets satisfy ${\Offset(X_a)_r\subset \Offset(X_b)_r,\Offset(X_c)_r\subset \Offset(X_d)_r}$, these containment relations do not hold for the corresponding Delaunay triangulations, nor for the Delaunay complexes of radius $r$.  The construction of \cite{alonsoDelaunayBifiltrationsFunctions2024} applies only when the point clouds are totally ordered by containment, hence does not apply here, because neither $X_b$ nor $X_c$ contains the other.  

\begin{figure}
  \centering
  \includegraphics{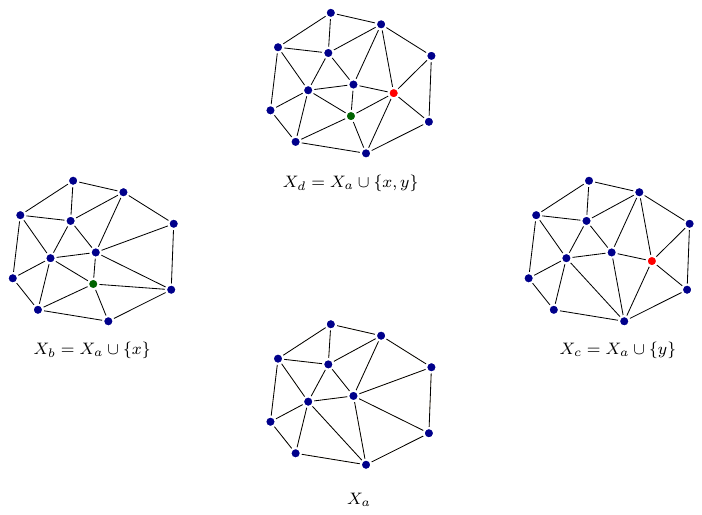}
  \caption{Non-monotonicity of Delaunay triangulations of planar point clouds $X_a$, $X_b=X_a\cup \{x\}$, $X_c=X_a\cup \{y\}$, and $X_d=X_a\cup \{x,y\}$.  The points $x$ and $y$ are shown in green and red.}
    \label{fig:non-monotone}
\end{figure}

    In analogy to the previous work on Delaunay bifiltrations, the crux of our approach is the definition and computation of the maximum complex (i.e., colimit) of the trifiltrations $\Del(\gamma)$  and $\DelCech(\gamma)$. This complex is the same for both trifiltrations; we call it the \emph{incremental complex} and denote it as $\Incr$ (\cref{def:biincremental}).  The complex $\Incr$ has vertex set $X$, is $(d+2)$-dimensional, and contains the union of the Delaunay triangulations of all sublevel sets of $\gamma$ as a subcomplex.  
    
Our main size and complexity bounds hinge on a characterization of $\Incr$ in terms of \emph{conflict pairs} (see \cref{def:conflicts,lem:quasi-purity}), which extends a corresponding result in the bifiltered setting  \cite[Lemma 2.3]{alonsoDelaunayBifiltrationsFunctions2024}.  Assuming that $d$ is constant, we use our characterization of $\Incr$ to show that $\Incr$ has size $O(\abs{X}^{\SizeBound})$.  We also derive a simple algorithm to compute $\Incr$ in time $O(\abs{X}^{\ceil{d/2}+2})$, based on the standard Bowyer-Watson algorithm for incremental computation of Delaunay triangulations.  We then present optimizations to this algorithm, exploiting the fact that Delaunay triangulations are \emph{local}, in a sense made precise by \cref{lem:local-star-star}.  

For $\sigma\subset X$ non-empty, let $\gamma(\sigma)=\bigvee_{x\in \sigma} \gamma(x)$, where $\bigvee$ denotes the \emph{join}, i.e., coordinate-wise maximum. 
Given $\Incr$, the trifiltration $\Del(\gamma)$ is defined by assigning each simplex $\sigma \in \Incr$ the birth index $(\gamma(\sigma), \omega_{\sigma})$, where $\omega_\sigma$ is the \emph{incremental Delaunay radius}, defined in \cref{sec:bifuncdel}.
$\DelCech(\gamma)$ is defined by assigning each $\sigma \in \Incr$ the birth index $(\gamma(\sigma), m_\sigma)$, where $m_\sigma$ is the radius of the smallest enclosing ball of $\sigma$.  We show that the birth indices of all simplices in $\Del(\gamma)$ can be computed in time $O(|\Incr|)$, by a variant of the standard algorithm for computing the birth indices in a Delaunay filtration.\footnote{The analogue of this result for Delaunay bifiltrations was briefly mentioned in \cite{alonsoDelaunayBifiltrationsFunctions2024} but not explained, and has not yet appeared in print.  While we explicitly treat only the trifiltered case, the algorithm and proof are nearly identical in the bifiltered case.}  Similarly, we can compute the birth indices of all simplices in $\DelCech(\gamma)$ in time $O(|\Incr|)$ by computing the smallest enclosing ball of each $\sigma\in\Incr$ in constant time with a standard algorithm~\cite{fischerFastSmallestEnclosingBallComputation2003,fischerSmallestEnclosingBall2004,cgal:fghhs-bv-24a}.  

To show that $\Del(\gamma)$, $\DelCech(\gamma)$, and $\Offset(\gamma)$ are topologically equivalent, we use an argument based on 2-parameter mapping telescopes and a functorial version of the nerve theorem \cite{bauerUnifiedViewFunctorial2023}.  The corresponding result in the bifiltered setting \cite[Corollary 3.4]{alonsoDelaunayBifiltrationsFunctions2024} was proven by extending a discrete Morse theory argument of Bauer and Edelsbrunner \cite{bauer2017morse} for 1-parameter persistence.  While we expect that approach to extend to our trifiltered setting, we find the nerve-based approach used in this paper to be more intuitive. 

We implement our algorithm to compute $\DelCech(\gamma)$ and test this implementation on point clouds in $\R^2$ and $\R^{3}$ with up to 16000 points; our implementation is publicly available~\cite{function_delaunay}.    In our experiments, the memory usage scales nearly linearly, while runtime scales nearly quadratically, suggesting that substantially larger computations are feasible.  While our current implementation does not exploit parallelization, our algorithm is highly parallelizable.

\subparagraph{Motivation.}
While there is a substantial recent literature on computing bifiltrations of point cloud and metric data in TDA \cite{corbet2023computing,alonsoDelaunayBifiltrationsFunctions2024,blaser2024core,lesnick2024nerve, alonso2024sparse,alonso2023filtration}, our work apparently is the first to seriously engage with the algorithmic aspects of trifiltrations.  We hope that our work will lower the barrier to practical data analysis with 3-parameter persistence, which has seen very limited use so far.  

Our specific motivation for introducing computable Delaunay-type models of sublevel-offset trifiltrations is threefold:  First, we wish to enable persistence analysis which simultaneously considers a pair of functions on a point cloud, e.g., density and eccentricity.  
Indeed, if one is interested in doing a topological analysis of a function on a point cloud with noise or non-uniformities in density, then it is important to take density into account, and one natural approach is to treat density as an additional filtration parameter.

Second, the definition of the sublevel-offset bifiltration involves a choice of filtering direction (i.e., whether we filter by sublevel or superlevel sets). While for some functions $\delta\colon X\to \R$, like codensity, it is natural to fix a direction, for other functions like time or partial charge, it can be unnatural to fix a filtering direction, and one would like a more symmetric construction. To this end, one can consider the \emph{interlevel-offset trifiltration} of $\delta$, defined as follows:
\begin{definition}
For a function $\delta: X \to \R$, the \deff{interlevel-offset trifiltration}
of $\delta$, denoted $\Offset^{\updownarrow}(\delta)$, is the collection of Euclidean subspaces
$
(\Offset^{\updownarrow}(\delta)_{p, r})_{(p, r) \in \R^2 \times \R^+}$, where $\Offset^{\updownarrow}(\delta)_{p, r} \coloneqq \Offset\bigl(\{x \in X \mid  -p_1 \leq \delta(x) \leq p_2\}\bigr)_r$.

\end{definition}
Note that $\Offset^{\updownarrow}(\delta)$ is a special case of the sublevel-offset trifiltration, namely $\Offset^{\updownarrow}(\delta)=\Offset(\gamma)$, where $\gamma=(-\delta,\delta)$. We are especially interested in computing the homology of the interlevel-offset trifiltration in the case that $\delta$ is a time function, as this could potentially be useful in the analysis of dynamic data such as that arising in the study of collective motion of animals \cite{adams2020topological}, in dynamical systems, or in the sliding-window analysis of time series \cite{perea2015sliding}.
While $\Offset^{\updownarrow}(\delta)$ is only one of several possible choices of a multifiltration that can be built from time-varying data (e.g., see \cite{kim2021spatiotemporal,strommen2023topological}), it is notable for its flexibility---unlike some alternatives, the definition does not require a consistent labeling of the points at each time step, and it accommodates different numbers of points at different time steps.

Our third motivation is the following: The \emph{degree-Rips bifiltration} has emerged in recent years as a popular choice of density-sensitive bifiltration on metric data \cite{rolle2024stable,blumberg2024stability,lesnick2015interactive}. The construction is appealing in part because it requires no parameter choice, whereas the sublevel-Rips filtration of a (co)density function requires a choice of bandwidth parameter. In the same way, one can define parameter-free \emph{degree-offset} and \emph{degree-\v{C}ech bifiltrations}. We are interested in the problem of computing an analogous \emph{degree-Delaunay bifiltration} which is topologically equivalent to the degree-offset bifiltration and more computable than degree-Rips or degree-\v{C}ech for low-dimensional data. In ongoing work on this problem, we noticed that the problems of computing the sublevel Delaunay(-\v{C}ech) trifiltration and the degree-Delaunay bifiltration are very closely related. As computing the former is simpler yet already quite involved, we view the results of this paper as a key step towards the development of computationally efficient degree-Delaunay bifiltrations. However, degree-Delaunay bifiltrations will not be studied in this paper.

\subparagraph{Other Related Work.}
Our work and the prior work \cite{alonsoDelaunayBifiltrationsFunctions2024} solve particular instances of a general computational problem: Given a filtered point cloud $Z=(Z_p)_{p\in P}$ indexed by a poset $P$, one has an associated offset filtration indexed by the product poset $P\times [0,\infty)$; the problem is to define and compute a generalized Delaunay filtration of reasonable size which is weakly equivalent to this offset filtration.  The case that $Z$ is of the form $X \hookrightarrow X\cup Y \hookleftarrow Y$ was studied by Reani and Bobrowski~\cite{reaniCoupledAlphaComplex2023}, who call the resulting generalized Delaunay filtration the \emph{coupled $\alpha$-filtration}.  
An extension of this construction to $n\geq 2$ 
point clouds and their union, called \emph{chromatic $\alpha$-complexes}, was introduced in \cite{dimontesanoChromaticAlphaComplexes2024,biswasSizeChromaticDelaunay2022}. 

Many works have studied time-varying point cloud or metric data through the lens of persistence.  To avoid the difficulties of multiparameter persistence, such work most often either considers persistence with respect to a scale parameter and ignores persistence across the time parameter \cite{munch2013:thesis,adams2020topological, adams:crockerplot,guzel2022detecting,morozov2008:homological, cohen2006vines}, or fixes a scale parameter and considers extended or zigzag persistence across time \cite{tymochko2020:zigzagbifurcation, kim2020analysis, kim2024extracting}.  In particular, the problem of extending Delaunay complexes to time-varying data has been studied in \cite{kerber20133d,edelsbrunner2012medusa}  using 1-parameter (extended) persistent homology with a fixed radius parameter. 
It is natural to work in the multiparameter setting and consider persistence with respect to both scale and time simultaneously.
This was previously explored by Kim and Mémoli \cite{kim2021spatiotemporal}, who introduce and study a Rips trifiltration for time-varying metric data with consistent labels across time, and by Saunders \cite{Saunders2025ZigzagPH} and Dey and Samaga \cite{dey2026quasizigzagpersistencetopological}, who study a version of multiparameter persistence 
indexed over a product of a zigzag poset and a totally ordered set.

Finally, we note that a construction similar to the one used in \cite{alonsoDelaunayBifiltrationsFunctions2024} was previously used by Sheehy to construct sparse Delaunay filtrations \cite{sheehy:LIPIcs.SoCG.2021.58}.  

\subparagraph{Outline.} 
\cref{Sec:Preliminaries} covers preliminaries used throughout the paper.  \Cref{sec:bifuncdel}  defines the incremental Delaunay complex $\Incr$ of a function $\gamma\colon X\to\R^{2}$, thereby completing the definition of the trifiltrations  $\Del(\gamma)$ and $\DelCech(\gamma)$.    \Cref{Sec:Conflict_Pairs} introduces conflict pairs and presents our characterization of $\Incr$ in terms of these.  \Cref{sec:computation} applies this characterization to obtain an algorithm for computing $\Incr$, and also discusses optimizations to this algorithm.  \Cref{sec:top_equiv} establishes that $\Del(\gamma)$ and $\DC(\gamma)$ are weakly equivalent to the sublevel offset trifiltration $\Offset(\gamma)$. 
\Cref{sec:implement}  describes our implementation and reports the results of our experiments.  Appendix~\ref{sec:radiusDel} gives our algorithm to compute incremental Delaunay radii.  
\section{Preliminaries}\label{Sec:Preliminaries}
A \emph{sphere} is a set of the form $\{w\in \R^d\mid \|w-x\|=r\}$ for some choices of $x\in \R^d$ and $r\geq 0$.  In particular, for $x\in \R^d$, we regard $\{x\}$ as a sphere of radius 0.
Given a set $Q\subset \R^d$, we say a sphere $S\subset \R^d$ is a \emph{circumsphere} of $Q$ if all points of $Q$ lie on $S$, and we say  $S$ is \emph{$Q$-empty} if no point of $Q$ lies inside $S$.  

We fix a finite set $X\subset\R^{d}$ in \emph{general position}, i.e., every non-empty subset $Q \subset X$ with at most $d+1$ points is affinely
independent, and no point of $X \setminus Q$ lies on the smallest circumsphere
of $Q$.  The \emph{Delaunay triangulation} of $X$ is the simplicial complex \[\DelT(X)=\{\sigma\subset X\mid \sigma \textup{ is non-empty and has an $X$-empty circumsphere}\}.\]  
  The \deff{Voronoi cell} of a point $x\in X$ is the region
  \[\Vor(x, X) = \{w \in \R^d \mid  \|w - x\| \leq \|w - y\| \textup{ for all } y\in X\}.\]
It is easily checked that $\DelT(X)$ is equal to the \emph{nerve} of the collection of Voronoi cells $\left\{\Vor(x, X) \mid x\in X \right\}$; see \cref{sec:top_equiv} for the definition of the nerve.  

For $x\in X$ and $r\geq 0$, the \deff{Voronoi ball} $\V(x,X)_r$ is the region
 \[\V(x,X)_r = \Vor(x, X) \cap \Ball_r(x),\]
 where $\Ball_r(x)\coloneqq \{w\in \R^d \mid \|x-w\|\leq r\}$ is the closed ball of radius $r$ centered at $x$.     
Letting $\Del(X)_r$ denote the nerve of the collection of Voronoi balls $\left\{\V(x,X)_r \right\}_{x\in X}$, we call $\Del(X)\coloneqq (\Del(X)_r)_{r\in \R^+}$ the \emph{Delaunay filtration} of $X$.

We also fix a function $\gamma = (\gamma_1, \gamma_2) \colon X\to \R^{2}$. 
For simplicity, throughout the paper, we assume that both $\gamma_1$ and $\gamma_2$ are injective.  
This assumption entails no loss of generality: one can perturb $\gamma$ to enforce such injectivity, carry out our trifiltration constructions for the perturbed function, and then readily recover correct constructions for the unperturbed function. 
For $i=1,2$, let $<_i$ denote the total order on $X$ induced by $\gamma_{i}$, and  
let $\max_{i}(\sigma)$ be  the maximum element of $\sigma$ with respect to $<_i$.

Let $\Grid(\gamma)=\im \gamma_1\times \im \gamma_2$, and consider $p=(p_1, p_2)\in \Grid(\gamma)$.  For $p_1<\max(\im \gamma_1)$, let $p{\subto}=(p_1^+,p_2)$, where $p_1^+$ is the element of $\im \gamma_1$ immediately after $p_1$.  
We define the variants of this notation $p{\subup}$, $p{\subleft}$, and $p{\subdown}$ in the analogous way.  For $p\in \Grid(\gamma)$ with $p_1<\max(\im \gamma_1)$ and $p_2<\max(\im \gamma_2)$, let $p{\subupright}=(p{\subup}){\subto}$, and define the notation $p{\subdownleft}$ analogously.

Throughout the paper, we consider the product partial order on $\R^2$, given by $p\leq q$ if and only if $p_1\leq q_1$ and $p_2\leq q_2$.  We say $p,q$ are \emph{comparable} if $p \leq q$ or $q \leq p$; otherwise, they are \emph{incomparable}.  For $p\in\R^{2}$, let $X_{p}\coloneqq \Set{x\in X\given \gamma(x) \leq p}$ denote the $p$-sublevel set of $\gamma$.

\section{Delaunay trifiltrations} \label{sec:bifuncdel}
\begin{definition}\label{def:biincremental}
  The \deff{incremental Delaunay complex} $\Incr$ 
  is the simplicial complex whose simplices are the non-empty subsets $\sigma
  \subset X$ for which there exists a sphere $S$ such that 
    \begin{itemize}
    \item $S$ is a circumsphere of $\sigma \setminus
  \{x,y\}$, where $x=\max_1(\sigma)$ and $y=\max_2(\sigma)$,
        \item $x$ and $y$ are either inside or on $S$,
        \item $S$ is $(X_{\gamma(\sigma)}\setminus \{x,y\})$-empty.  
    \end{itemize}
We call $S$ a \emph{witness} of $\sigma$.
\end{definition}
Note that in \cref{def:biincremental}, we can have $\max_1(\sigma)=\max_2(\sigma)$, in which case $\gamma(\sigma)=\gamma(\max_1(\sigma))$.

\begin{proposition}\label{Prop:Min_Rad_Witness}
Any simplex $\sigma\in \Incr$ has a unique smallest (i.e., minimum-radius) witness.
\end{proposition}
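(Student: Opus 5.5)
The plan is to reduce to a convex optimization over sphere centers. Let $\tau=\sigma\setminus\{x,y\}$, with $x=\max_1(\sigma)$ and $y=\max_2(\sigma)$, and let $Y=X_{\gamma(\sigma)}\setminus\{x,y\}$. A sphere with center $c$ and radius $\rho$ is a witness exactly when three things hold:
\begin{itemize}
\item $\|c-t\|=\rho$ for all $t\in\tau$;
\item $\|c-x\|\le\rho$ and $\|c-y\|\le\rho$;
\item $\|c-z\|\ge\rho$ for all $z\in Y$.
\end{itemize}
I will work with squared distances. Fix some $t_0\in\tau$ when $\tau\neq\emptyset$. Each condition $\|c-a\|^2\le\|c-t_0\|^2$ (or $\ge$, or $=$) is affine in $c$, because the quadratic terms cancel. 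So the set $C$ of feasible centers is the intersection of an affine subspace $A$ (the centers equidistant from the points of $\tau$) with finitely many closed half-spaces. Hence $C$ is a closed convex polyhedron, and it is non-empty because $\sigma\in\Incr$.

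On $C$ the squared radius is $f(c)=\|c-t_0\|^2$, a strictly convex function. It is also coercive, since it grows quadratically. A coercive, strictly convex function on a non-empty closed convex set attains its minimum at exactly one point $c^*$. The radius $\rho=\sqrt f$ is a monotone transform of $f$, so $c^*$ is also the unique minimizer of the radius. A witness is determined by its center, because the radius is forced by $\tau$. Therefore the witness centered at $c^*$ is the unique smallest witness.

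The main obstacle is the degenerate case $\tau=\emptyset$, which happens when $\sigma\subseteq\{x,y\}$. Then there is no $t_0$ to normalize against, and the radius is a free variable. I would handle it directly.
\begin{itemize}
\item If $\sigma=\{x\}$, the sphere $\{x\}$ of radius $0$ is a witness. It is the unique radius-$0$ sphere containing $x$ inside or on it, so it is the unique smallest witness.
\item If $\sigma=\{x,y\}$ with $x\ne y$, the witnesses are spheres with $\max(\|c-x\|,\|c-y\|)\le\rho\le\min_{z\in Y}\|c-z\|$. The smallest admissible radius at center $c$ is $g(c)=\max(\|c-x\|,\|c-y\|)$, and $g^2$ is strictly convex as the maximum of strictly convex functions. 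The feasible centers are those with $g(c)\le\|c-z\|$ for every $z\in Y$. This set is convex: $\|c-x\|^2\le\|c-z\|^2$ is affine in $c$, so the condition is an intersection of half-spaces indexed by pairs in $\{x,y\}\times Y$. It is also closed and non-empty, and $g^2$ is coercive, so again there is a unique minimizer.
\end{itemize}

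Equivalently, one can treat the problem uniformly by lifting spheres to the variables $(c,\rho^2-\|c\|^2)$, in which all conditions become linear and the objective $\rho^2$ is strictly convex in $c$. This gives uniqueness of the center, with the radius then determined, in every case at once. Finally I would confirm that the minimizer is a genuine sphere, meaning $\rho\ge0$; this is automatic from $\rho\ge\|c-x\|\ge0$ or $\rho=\|c-t_0\|\ge0$.
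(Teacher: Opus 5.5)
Your proof is correct. For $\tau\neq\emptyset$ and for $\sigma=\{x\}$ it matches the paper's argument: the same strictly convex, coercive objective over a closed convex polyhedron of centers in $A$, and the radius-$0$ sphere $\{x\}$.

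The genuine difference is the subcase $\sigma=\{x,y\}$ with $x\neq y$. The paper first uses a shrinking argument. It shrinks a witness concentrically until one of $x,y$ touches, then applies a homothety centered at that point until the other also touches. This shows that a smallest witness must pass through both $x$ and $y$. The question then reduces to uniqueness of the smallest $X_{\gamma(\sigma)}$-empty circumsphere of a Delaunay simplex, which the paper cites as well known.

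You instead minimize $g^2=\max(\|c-x\|^2,\|c-y\|^2)$ directly over the polyhedron cut out by the half-spaces indexed by $\{x,y\}\times Y$. This is self-contained and needs no external fact. Your lifting to $(c,\rho^2-\|c\|^2)$ goes further and handles all cases uniformly. That is not in the paper.

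What the paper's route buys is extra structure, recorded as \cref{Prop:Empty_Tau_Delaunay}: the smallest witness of $\{x,y\}$ is a circumsphere of $\sigma$, so $\sigma\in\DelT(X_{\gamma(\sigma)})$. This fact is reused later, in Case 3 of the proof of \cref{lem:quasi-purity}, in \cref{thm:deleq}, and in the appendix.

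Your argument does not state this, but it follows in one line from your setup. Suppose $\|c^*-x\|>\|c^*-y\|$ at the minimizer. Moving $c^*$ slightly toward $x$ keeps it feasible, because the half-spaces $\{c : \|c-x\|\le\|c-z\|\}$ contain $x$ and the $y$-constraints are slack. The same move strictly decreases $g$, a contradiction.
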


\begin{proof}
As above, we write $x=\max_1(\sigma)$ and $y=\max_2(\sigma)$.  We first assume that $\tau \coloneqq \sigma \setminus \{x, y\}$ is non-empty.  Choose an arbitrary point $z\in \tau$.  There exists a unique smallest witness of $\sigma$ if and only if the following optimization problem has a unique solution:
\begin{equation}\label{eq:opt_Prob}
\begin{aligned}
\minimize_{c \in A} \quad & \|c - z\|^2 \\
\text{subject to} \quad & \|c - x\| \leq \|c - z\|, \\
& \|c - y\| \leq \|c - z\|, \\
& \|c - w\| \geq \|c - z\| \quad \text{for all } w \in X_{\gamma(\sigma)} \setminus \{x, y\},
\end{aligned}
\end{equation}
where \[A \coloneqq \{c\in \R^d : \|c-w\| = \|c-z\| \; \textup{ for all } w\in \tau\}.\]  Note that $A$ is an affine subspace 
of dimension $d-\dim(\tau)$.  The objective function of this problem is \emph{coercive} (i.e., $\lim_{\|c\|\to \infty}  \|c - z\|^2=\infty$) and strictly convex.  In addition, the feasible set ${F}$ of this problem is a closed convex set 
in $A$, and ${F}$ is non-empty since $\sigma \in \Incr$.  Thus, by a standard result of convex optimization theory \cite[Section 3.1]{bertsekas2009convex}, the minimizer of this optimization problem exists and is unique.

Now assume that  $\tau = \emptyset$.  If $x = y$, then the unique smallest witness of $\sigma=\{x\}$ is $\{x\}$.  If $x\ne y$, then for any witness $S$ of $\sigma$ with $x$ and $y$ not both lying on $S$, there is a witness $S'$ of $\sigma$ with smaller radius such that $x$ and $y$ lie on $S'$.  (To obtain $S'$, we first shrink $S$ while keeping its center fixed, until a point of $\sigma$, say $x$, touches the sphere.  We then further shrink the sphere, keeping the point $x$ on the sphere and moving the sphere's center towards $x$, until $y$ also touches the sphere.)  This implies that $\sigma\in \Del(X_{\gamma(\sigma)})$ and that  $\sigma$ has a unique smallest witness if and only if $\sigma$ has a unique smallest $X_{\gamma(\sigma)}$-empty circumsphere.  It is well known that any simplex of a Delaunay triangulation has a unique smallest empty circumsphere; this can be verified using a convex optimization argument similar to the one above.  As  $\sigma\in \Del(X_{\gamma(\sigma)})$, it follows that $\sigma$ has a unique smallest $X_{\gamma(\sigma)}$-empty circumsphere, so $\sigma$ has a unique smallest witness.
\end{proof}

\begin{definition}
For $\sigma\in \Incr$, the \emph{incremental Delaunay radius of $\sigma$},  denoted $\omega_\sigma$, is the minimum radius of a sphere witnessing $\sigma$. 
 \end{definition}

The last part of the proof of \cref{Prop:Min_Rad_Witness} establishes the following useful result.

\begin{proposition}\label{Prop:Empty_Tau_Delaunay}
For $\sigma\in \Incr$ with $\dim(\sigma)=1$ and $\max_1(\sigma)\ne \max_2(\sigma)$, the smallest witness of $\sigma$ is a circumsphere of $\sigma$.  In particular, $\sigma\in \Del(X_{\gamma(\sigma)})$.  
\end{proposition}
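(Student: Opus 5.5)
The plan is to replay the $\tau=\emptyset$, $x\neq y$ case from the proof of \cref{Prop:Min_Rad_Witness}. Since $\dim(\sigma)=1$ and $\max_1(\sigma)\ne\max_2(\sigma)$, we must have $\sigma=\{x,y\}$ with $x=\max_1(\sigma)$ and $y=\max_2(\sigma)$. Hence $\tau=\sigma\setminus\{x,y\}=\emptyset$. Every sphere is then vacuously a circumsphere of $\tau$. A witness of $\sigma$ is therefore exactly a sphere $S$ that contains $x$ and $y$ inside or on it and is $(X_{\gamma(\sigma)}\setminus\{x,y\})$-empty.

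First I show that the smallest witness $S_0$ (unique by \cref{Prop:Min_Rad_Witness}) passes through both $x$ and $y$. Suppose not. I apply the shrinking procedure from that proof, which has two steps. Fix the center and shrink until the first of $x,y$, say $x$, hits the sphere. Then move the center toward $x$ along the segment, keeping $x$ on the sphere, until $y$ is also on it.

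During both steps the current ball is contained in the previous closed ball, because a ball internally tangent at $x$ lies inside the larger one. So no point of $X_{\gamma(\sigma)}\setminus\{x,y\}$ enters the interior, and the resulting sphere is still a witness. The radius strictly decreases unless $x$ and $y$ were already on $S_0$. This contradicts minimality, so $S_0$ is a circumsphere of $\sigma$.

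The one degenerate point to check is that the process terminates before the radius reaches $0$. This holds because $x\ne y$: at radius $0$ the ball is $\{x\}$, which excludes $y$, so by continuity $y$ reaches the sphere at some positive radius.

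For the second claim, $S_0$ is a circumsphere of $\sigma$, and its interior avoids $X_{\gamma(\sigma)}\setminus\{x,y\}$. Its interior also avoids $x$ and $y$, since they lie on $S_0$. So $S_0$ is an $X_{\gamma(\sigma)}$-empty circumsphere of $\sigma$. Moreover $\sigma\subset X_{\gamma(\sigma)}$ by definition of the join. Thus $\sigma\in\DelT(X_{\gamma(\sigma)})$.

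The statement writes $\Del(X_{\gamma(\sigma)})$, which I read as membership in the Delaunay complex (the union over $r$ of $\Del(X_{\gamma(\sigma)})_r$). This follows from the standard equivalence between having an empty circumsphere and the Voronoi cells of $x$ and $y$ meeting: the center of $S_0$ is equidistant from $x$ and $y$ and no closer to any other point, so it lies in $\Vor(x,X_{\gamma(\sigma)})\cap\Vor(y,X_{\gamma(\sigma)})$, and hence in the corresponding Voronoi balls at $r$ equal to the radius of $S_0$.

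The main obstacle is the containment claim in the shrinking step, which guarantees that no new point enters the ball; everything else is bookkeeping.
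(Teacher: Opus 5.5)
Your proof is correct and is essentially the paper's argument: the paper just points to the $\tau=\emptyset$, $x\neq y$ case in the proof of \cref{Prop:Min_Rad_Witness}, where a witness not passing through both $x$ and $y$ is shrunk, first concentrically and then internally tangent at $x$, to a strictly smaller witness through both points. You fill in the details the paper leaves implicit, namely that the open interiors are nested (so emptiness is preserved) and that $y$ reaches the sphere at positive radius because $x\neq y$.
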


\begin{definition}
For $\gamma: X\to \R^2$, the \deff{sublevel Delaunay trifiltration} $\Del(\gamma)$ and \deff{sublevel Delaunay-\v{C}ech trifiltration} $\DelCech(\gamma)$ are defined over $\R^{2}\times\R^+$ as follows:
\begin{align*}
    \Del(\gamma)_{p, r} & \coloneqq \Set{\sigma\in\Incr \mid  \gamma(\sigma)\leq p\text{ and }  \omega_\sigma \leq r}, \\
    \DelCech(\gamma)_{p, r} & \coloneqq \Set{\sigma\in\Incr  \mid  \gamma(\sigma)\leq p\text{ and }  m_\sigma \leq r },
\end{align*}
where $m_\sigma$ is the radius of the smallest enclosing ball of $\sigma$.
\end{definition}

\begin{proposition}\label{prop:contains3}
For $r \geq 0$ and $p\in \R^2$, we have $\Del(X_p)_r\subset \Del(\gamma)_{p, r}$.
\end{proposition}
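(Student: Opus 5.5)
Take $\sigma\in \Del(X_p)_r$; we must show $\sigma\in\Incr$, $\gamma(\sigma)\le p$, and $\omega_\sigma\le r$. Since $\sigma\subset X_p$, every vertex satisfies $\gamma(v)\le p$, so $\gamma(\sigma)\le p$ immediately, and hence $X_{\gamma(\sigma)}\subset X_p$.

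Next we turn nerve membership into a sphere. By definition of $\Del(X_p)_r$, there is a point $c\in\bigcap_{v\in\sigma}\V(v,X_p)_r$. Then all $v\in\sigma$ are equidistant from $c$, say at common distance $\rho\le r$, and no point of $X_p$ is closer to $c$ than $\rho$. Let $S$ be the sphere of center $c$ and radius $\rho$. It is a circumsphere of $\sigma$ that is $X_p$-empty and has radius at most $r$.

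We then check that $S$ witnesses $\sigma$ in the sense of \cref{def:biincremental}. Write $x=\max_1(\sigma)$ and $y=\max_2(\sigma)$. Because $S$ passes through all of $\sigma$, it is a circumsphere of $\sigma\setminus\{x,y\}$, and $x,y$ lie on $S$. Emptiness follows from $X_{\gamma(\sigma)}\setminus\{x,y\}\subset X_p$ and the $X_p$-emptiness of $S$. So $\sigma\in\Incr$, and $\omega_\sigma\le\rho\le r$ because $\omega_\sigma$ is the minimum witness radius, which exists by \cref{Prop:Min_Rad_Witness}. Together with $\gamma(\sigma)\le p$, this gives $\sigma\in\Del(\gamma)_{p,r}$.

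The only mildly delicate point is the degenerate case $\sigma\setminus\{x,y\}=\emptyset$. There a "circumsphere of the empty set" is any sphere, so the conditions still hold verbatim for $S$. Likewise, $\sigma=\{x\}$ with $\rho=0$ is covered by regarding $\{x\}$ as a sphere of radius $0$. Otherwise the argument is a direct unwinding of definitions, and no step is a genuine obstacle.
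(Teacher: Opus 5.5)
Your proof is correct and follows the same route as the paper: an $X_p$-empty circumsphere of $\sigma$ with radius at most $r$ is itself a witness for $\sigma\in\Incr$, because $X_{\gamma(\sigma)}\setminus\{x,y\}\subset X_p$. You also spell out details the paper leaves implicit, namely extracting the sphere from the nerve condition, checking $\gamma(\sigma)\le p$, and handling the degenerate cases.
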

\begin{proof}
If $\sigma\in \Del(X_p)_r$, then there exists
an $X_p$-empty circumsphere $S$ of $\sigma$ with radius at most $r$.
The same sphere $S$ witnesses that $\sigma\in \Incr$, and hence that $\sigma\in \Del(\gamma)_{p,r}$. 
\end{proof}

Taking $r$ sufficiently large, \Cref{prop:contains3} implies that $\Incr$ contains the Delaunay triangulation $\DelT(X_p)$ for all $p\in \R^2$.

\section{Conflict pairs and triples}\label{Sec:Conflict_Pairs}
We now characterize the simplices of $\Incr$ in a way that is conducive to computation.

\begin{definition}[Conflict pairs and triples]\label{def:conflicts} \mbox{}
\begin{itemize}
\item[(i)]
For $p\in\Grid(\gamma)$,  a \emph{conflict pair at $p$} is a pair $(\sigma,x)$ such that 
\begin{enumerate}
\item  $\sigma$ is a $d$-simplex in $\DelT(X_p)$,
\item either 
\begin{itemize}
\item[{}] $(p_1<\max(\im \gamma_1)$ and $x\in X_{p\subto})$ or
\item[{}] $(p_2<\max(\im \gamma_2)$ and $x\in X_{p\subup})$,
\end{itemize}
\item $x$ lies inside the circumsphere of $\sigma$.
\end{enumerate}

\item[(ii)] For distinct conflict pairs $(\sigma, x)$ and $(\sigma, y)$ at $p$, we call $(\sigma, x,y)$ a \deff{conflict triple at $p$}.
\item[(iii)] We call $(\sigma, x)$ a \deff{conflict pair} if it is a conflict pair at some $p$, and similarly for conflict triples.
\end{itemize}
\end{definition}

\cref{def:conflicts} is illustrated in \cref{fig:conflict_triple}.  Note that if $(\sigma, x,y)$ is a conflict triple at $p$, then by the injectivity of $\gamma_1$ and $\gamma_2$, either $x\in X_{p{\subto}}$ and $y\in X_{p{\subup}}$, or else $y\in X_{p{\subto}}$ and $x\in X_{p{\subup}}$.

\begin{figure}[htpb]
    \centering
    \includegraphics[width=0.50\textwidth, page=1]{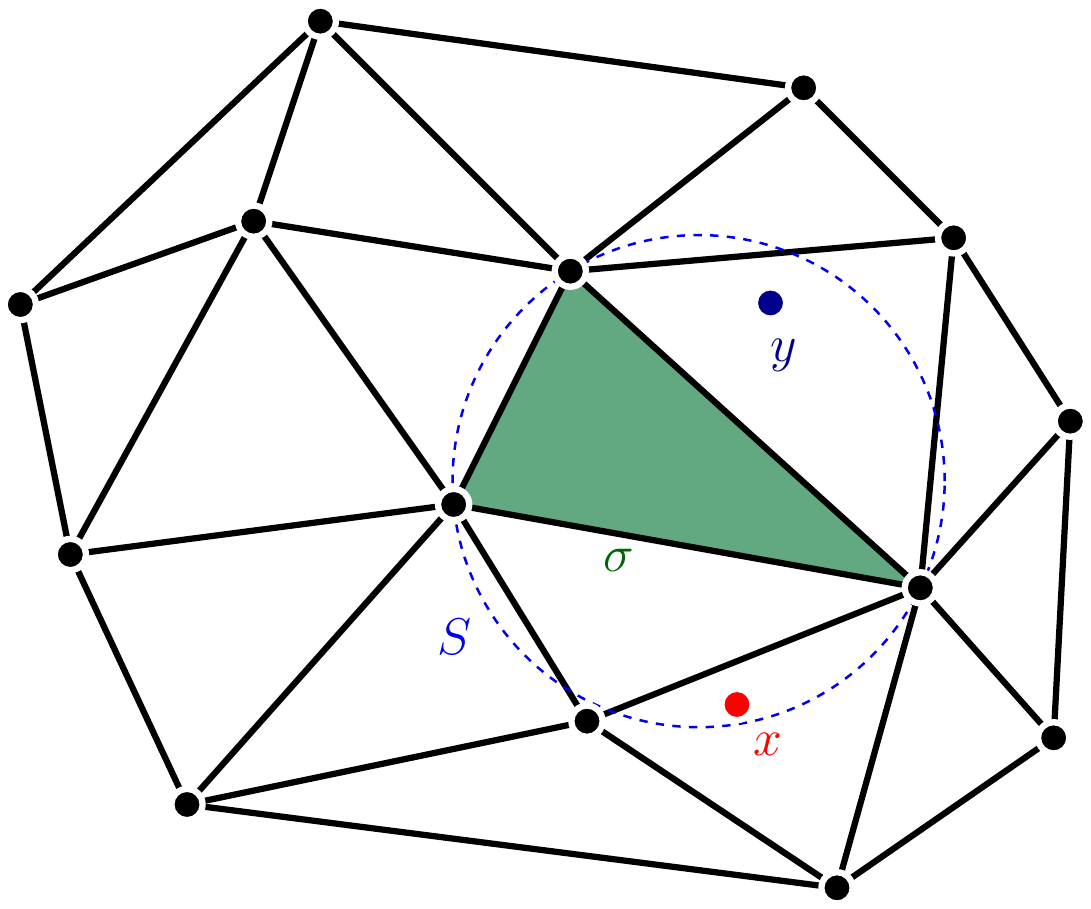}
    \hfill
    \includegraphics[width=0.40\textwidth, page=2]{conflict}%
    \caption{Illustration of a conflict triple.  Let $\sigma \in \DelT(X_{(2,1)})$ be a 2-simplex with circumsphere $S$, as shown in green on the left.  Assume that $\gamma(\sigma)=(2,1)$  and that $x$ (red) and $y$ (blue) are points of $X$ lying inside $S$, with $\gamma(x)=(3,2)$ and $\gamma(y)=(1,3)$.  Then letting $p=(2,2)$, the injectivity of $\gamma_2$ implies that $X_{(2,1)}=X_{p}$, so $\sigma \in \Del(X_p)$.  In addition, $x\in X_{p\subto} = X_{(3,2)}$, so $(\sigma, x)$ is a conflict pair at $p$.  Similarly, $y\in X_{p\subup} = X_{(2,3)}$, so $(\sigma, y)$ is also a conflict pair at $p$.  Since $(\sigma, x)$ and $(\sigma, y)$ are distinct conflict pairs at $p$, $(\sigma, x, y)$ is a conflict triple at $p$.}
    \label{fig:conflict_triple}
\end{figure}

\begin{lemma}\label{lem:incomparable_conflicts}
    If $(\sigma,x)$ and $(\sigma,y)$ are distinct conflict pairs, then $\gamma(x)$ and $\gamma(y)$ are incomparable.
\end{lemma}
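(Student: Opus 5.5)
The plan is to use only the emptiness of the circumsphere together with the injectivity of $\gamma_1$ and $\gamma_2$. The key observation is that a $d$-simplex $\sigma$ in general position has a \emph{unique} circumsphere $S$. So all conflict pairs with first entry $\sigma$ refer to the same sphere $S$, whichever grid point they are conflict pairs at.

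Suppose $(\sigma,x)$ is a conflict pair at $p$ and $(\sigma,y)$ is a conflict pair at $q$. Since the pairs are distinct, $x\neq y$. Assume for contradiction that $\gamma(x)$ and $\gamma(y)$ are comparable, say $\gamma(x)\le\gamma(y)$ after relabeling.

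Because $\sigma\in\DelT(X_q)$ is a $d$-simplex, its unique circumsphere $S$ is $X_q$-empty. Both $x$ and $y$ lie inside $S$: $y$ by the conflict condition at $q$, and $x$ by the conflict condition at $p$, using that the sphere is the same $S$. Hence $x,y\notin X_q$, that is, $\gamma(x)\not\le q$ and $\gamma(y)\not\le q$.

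By condition (2) at $q$, either $\gamma(y)\le q{\subto}$ or $\gamma(y)\le q{\subup}$.

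In the first case, $\gamma(x)\le\gamma(y)\le q{\subto}=(q_1^+,q_2)$. Since $\gamma(x)\not\le q$ while $\gamma_2(x)\le q_2$, we must have $\gamma_1(x)>q_1$. As $\gamma_1(x)\in\im\gamma_1$ and $\gamma_1(x)\le q_1^+$, this forces $\gamma_1(x)=q_1^+$. Then
\[
q_1^+=\gamma_1(x)\le\gamma_1(y)\le q_1^+,
\]
so $\gamma_1(x)=\gamma_1(y)$. Injectivity of $\gamma_1$ gives $x=y$, a contradiction.

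The case $\gamma(y)\le q{\subup}$ is symmetric: the same argument in the second coordinate gives $\gamma_2(x)=\gamma_2(y)=q_2^+$, and injectivity of $\gamma_2$ again yields $x=y$.

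The only step needing care is noticing that $x$ also fails to lie in $X_q$, even though $(\sigma,x)$ is a conflict pair at a possibly different grid point $p$. This is exactly where uniqueness of the circumsphere of a $d$-simplex under general position enters. After that, the argument reduces to the grid bookkeeping above, where ``$\gamma(\cdot)\le q{\subto}$ but $\not\le q$'' pins down the first coordinate exactly.
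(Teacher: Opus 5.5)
Your proof is correct and follows essentially the same route as the paper. In both, comparability plus injectivity of $\gamma_1,\gamma_2$ forces $x$ into the sublevel set at the grid point where $(\sigma,y)$ is a conflict pair; this contradicts $x$ lying inside the unique, and hence empty, circumsphere of $\sigma$. You only reverse the order of the steps (first $x\notin X_q$, then the coordinate bookkeeping), and you state explicitly the uniqueness of the circumsphere that the paper's proof uses without comment.
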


\begin{proof}
To arrive at a contradiction, suppose that $\gamma(x)<\gamma(y)$ and that $(\sigma,y)$ is a conflict pair at $p$.  Then either $y\in X_{p{\subto}}$ or $y\in X_{p{\subup}}$, so by the injectivity of $\gamma_1$ and $\gamma_2$, we have $\gamma(x)\leq p.$  Since $(\sigma,x)$ is a conflict pair, we have $\sigma\not\in \DelT(\gamma(x))$, hence $\sigma\not\in\DelT(X_p)$.  This contradicts the assumption that $(\sigma,y)$ is a conflict pair at $p$. 
\end{proof}

\Cref{lem:incomparable_conflicts} implies that for any $d$-simplex $\sigma$ in $\bigcup_p \DelT(X_p)$, the set  \[T_{\sigma}\coloneqq\{x \mid (\sigma,x)\textup{ is a conflict pair}\}\] is totally ordered by $\gamma_1$-coordinate.  Henceforth, we assume that $T_{\sigma}$ is given this order.

\begin{lemma}\label{lem:consec_order_conflict_triple}
For $x,y\in T_{\sigma}$, $(\sigma,x,y)$ is a conflict triple if and only if $x,y$ are ordered consecutively in $T_{\sigma}$.
\end{lemma}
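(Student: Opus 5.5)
The plan is to reduce everything to a statement about the fixed set $C_\sigma$ of points of $X$ lying strictly inside the circumsphere $S_\sigma$ of $\sigma$. By general position, $S_\sigma$ is the unique circumsphere of the $d$-simplex $\sigma$. Hence for $p\in\Grid(\gamma)$ we have $\sigma\in\DelT(X_p)$ if and only if $\gamma(\sigma)\le p$ and $X_p\cap C_\sigma=\emptyset$. Therefore $x\in T_\sigma$ if and only if $x\in C_\sigma$ and there is $q\ge\gamma(\sigma)$ in $\Grid(\gamma)$ with $X_q\cap C_\sigma=\emptyset$ and $x\in X_{q\subto}\cup X_{q\subup}$. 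By injectivity, $x\in X_{q\subto}\setminus X_q$ means $\gamma_1(x)=q_1^+$ and $\gamma_2(x)\le q_2$, and symmetrically for $q\subup$. Throughout I use \cref{lem:incomparable_conflicts}, so ordering $T_\sigma$ by $\gamma_1$ makes $\gamma_2$ strictly decreasing along $T_\sigma$.

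\emph{Forward direction.} Let $(\sigma,x,y)$ be a conflict triple at $p$, say $x\in X_{p\subto}$ and $y\in X_{p\subup}$. Then $\gamma_1(x)=p_1^+$, $\gamma_2(x)\le p_2$, $\gamma_2(y)=p_2^+$ and $\gamma_1(y)\le p_1$. In particular $\gamma_1(y)<\gamma_1(x)$. Suppose some $z\in T_\sigma$ satisfies $\gamma_1(y)<\gamma_1(z)<\gamma_1(x)$. Monotonicity of $\gamma_2$ along $T_\sigma$ gives $\gamma_2(x)<\gamma_2(z)<\gamma_2(y)$. Hence $\gamma_1(z)\le p_1$ and $\gamma_2(z)\le p_2$, so $z\in X_p\cap C_\sigma$. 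This contradicts $\sigma\in\DelT(X_p)$, so $x$ and $y$ are consecutive.

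\emph{Converse.} Let $y<x$ be consecutive in $T_\sigma$, so $\gamma_1(y)<\gamma_1(x)$ and $\gamma_2(y)>\gamma_2(x)$. I take $p=(\gamma_1(x)^-,\gamma_2(y)^-)$, using predecessors in $\im\gamma_i$. Then $x\in X_{p\subto}$ and $y\in X_{p\subup}$, and I must check two things.

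The first is $\gamma(\sigma)\le p$, that is, $\gamma_1(\sigma)<\gamma_1(x)$ and $\gamma_2(\sigma)<\gamma_2(y)$. Suppose for instance $\gamma_2(y)\le\gamma_2(\sigma)$. Let $q\ge\gamma(\sigma)$ be a grid point witnessing $x\in T_\sigma$. In both cases $x\in X_{q\subto}$ and $x\in X_{q\subup}$ one checks $\gamma_1(y)\le q_1$, and also $\gamma_2(y)\le\gamma_2(\sigma)\le q_2$. So $y\in X_q\cap C_\sigma$, a contradiction. The other inequality is symmetric.

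The second is $X_p\cap C_\sigma=\emptyset$, which I expect to be the main obstacle, because a point $c\in C_\sigma\cap X_p$ need not itself lie in $T_\sigma$. My plan is to prove an auxiliary characterization first. Let $M$ be the set of $\le$-minimal elements of $C_\sigma\setminus X_{\gamma(\sigma)}$. I will show that $T_\sigma$ consists of exactly those elements of $M$ that are not dominated once joined with $\gamma(\sigma)$; equivalently, $x\in T_\sigma$ if and only if no $c\in C_\sigma$ satisfies $\gamma(c)\le\gamma(x)\vee\gamma(\sigma)$ with $c\neq x$, and I prove this by taking $q$ to be the predecessor grid point of $\gamma(x)\vee\gamma(\sigma)$ in the appropriate coordinate. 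Given this, suppose $c\in C_\sigma\cap X_p$. Among such points, choose $c$ with $\gamma(c)\vee\gamma(\sigma)$ minimal, and use the characterization to produce an element of $T_\sigma$ lying weakly below $c$ in the joined order. Its $\gamma_1$-value then lies strictly between $\gamma_1(y)$ and $\gamma_1(x)$, since it lies in the box below $p$ and cannot equal $x$ or $y$. This contradicts consecutiveness. If the minimal-element bookkeeping becomes awkward, I would instead argue directly by induction on $\gamma_1(c)$, repeatedly replacing $c$ by a point of $C_\sigma$ dominated by $\gamma(c)\vee\gamma(\sigma)$ until one lands in $T_\sigma$.
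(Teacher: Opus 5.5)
Your proposal is correct and follows the paper's proof. The direction "conflict triple implies consecutive" is the same argument: an intermediate $z$ would be forced into $X_p$. For the other direction you use the same witness grid point $p$, the point immediately below and to the left of $\gamma(x)\vee\gamma(y)$. The one difference is that you actually carry out the verification the paper calls ``easily checked.'' In particular, you rule out points of $C_\sigma\setminus T_\sigma$ lying in $X_p$ by characterizing $T_\sigma$ as the points $x\in C_\sigma$ for which $\gamma(x)\vee\gamma(\sigma)$ is minimal; this is a sound and worthwhile addition.
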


\begin{proof}
If $x,y$ are ordered consecutively, then it is easily checked that $(\sigma,x,y)$ is a conflict triple at $(\gamma(x) \vee \gamma(y)){\subdownleft}$.  Conversely, if there exists $z\in T_{\sigma}$ with $x<z<y$ in the order on $T_{\sigma}$, then for any $p$ with $x\in X_{p\subup}$ and  $y\in X_{p\subto}$, we have $z\in X_{p}$, so $\sigma\not \in \DelT(X_p)$.  Hence $(\sigma,x,y)$ is  not a conflict triple.   
\end{proof}

The next lemma follows immediately from the definition of $\Incr$.

\begin{lemma}\label{lem:conflicts}\mbox{}
\begin{enumerate}
\item[(i)] If $(\sigma, x)$ is a conflict pair, then  $\sigma \cup \{x\}$  is a $(d+1)$-simplex in $\Incr$. 
\item[(ii)] If $(\sigma, x, y)$ is a conflict triple, then $\sigma \cup \{x, y\}$ is a $(d+2)$-simplex in~$\Incr$.
\end{enumerate}
\end{lemma}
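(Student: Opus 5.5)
The plan is to exhibit, in each case, an explicit witness sphere $S$ in the sense of \cref{def:biincremental}. The natural candidate is the circumsphere of the $d$-simplex $\sigma$. Since $X$ is in general position, a $d$-simplex has a unique circumsphere, which we denote $S_\sigma$.

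\emph{Part (i).} Let $(\sigma,x)$ be a conflict pair at $p$, and set $\rho=\sigma\cup\{x\}$. By \cref{def:conflicts}, $\sigma\in\DelT(X_p)$, so $\sigma\subset X_p$. Without loss of generality $x\in X_{p\subto}$; the case $x\in X_{p\subup}$ is symmetric. Since $x$ lies inside $S_\sigma$, we have $x\notin X_p$, so $\gamma_1(x)=p_1^+>p_1\ge\gamma_1(w)$ for all $w\in\sigma$. Hence $\max_1(\rho)=x$. Let $y=\max_2(\rho)$; this is either $x$ or some vertex of $\sigma$. Then $\rho\setminus\{x,y\}\subset\sigma$, so $S_\sigma$ is a circumsphere of $\rho\setminus\{x,y\}$. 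Moreover, $x$ is inside $S_\sigma$ and $y$ is inside or on $S_\sigma$. For emptiness, note $\gamma(\rho)\le p{\subto}$, so $X_{\gamma(\rho)}\subset X_{p\subto}$. Since $\gamma_1$ is injective, $X_{p\subto}\setminus X_p\subset\{x\}$. Thus $X_{\gamma(\rho)}\setminus\{x,y\}\subset X_p$, and $S_\sigma$ is $X_p$-empty because $\sigma\in\DelT(X_p)$. So $S_\sigma$ witnesses $\rho\in\Incr$. Finally, $x\notin\sigma$, so $\rho$ has $d+2$ vertices, and $\rho$ is a $(d+1)$-simplex.

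\emph{Part (ii).} Let $(\sigma,x,y)$ be a conflict triple at $p$ and set $\rho=\sigma\cup\{x,y\}$. By the remark after \cref{def:conflicts}, after relabeling we may assume $x\in X_{p\subto}$ and $y\in X_{p\subup}$. Both points lie inside $S_\sigma$, hence outside $X_p$. So $\gamma_1(x)=p_1^+$ and $\gamma_2(x)\le p_2$, while $\gamma_2(y)=p_2^+$ and $\gamma_1(y)\le p_1$. This gives $\max_1(\rho)=x$ and $\max_2(\rho)=y$, and in particular $x\neq y$. Therefore $\rho\setminus\{x,y\}=\sigma$, and $S_\sigma$ is its circumsphere, with $x,y$ inside it. Now $\gamma(\rho)=\gamma(x)\vee\gamma(y)\le p{\subupright}$. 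By injectivity of $\gamma_1$ and $\gamma_2$, every point of $X_{p\subupright}\setminus X_p$ has $\gamma_1=p_1^+$ or $\gamma_2=p_2^+$, so it is $x$ or $y$. Hence $X_{\gamma(\rho)}\setminus\{x,y\}\subset X_p$, which is disjoint from the interior of $S_\sigma$. The simplex $\rho$ has $d+3$ vertices, so it has dimension $d+2$.

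The only mildly delicate step is the bookkeeping in each part: identifying $\max_1$ and $\max_2$ of $\rho$, and showing that the sublevel set $X_{\gamma(\rho)}$ adds nothing beyond $X_p\cup\{x,y\}$. Both follow directly from injectivity and the grid-step definitions, so I expect no real obstacle.
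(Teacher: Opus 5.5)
Your proof is correct and follows the paper's approach. The paper only says the lemma "follows immediately from the definition of $\Incr$", with the circumsphere of $\sigma$ as the witness; you have written out the bookkeeping it leaves implicit, namely identifying $\max_1$ and $\max_2$ and using injectivity of $\gamma_1,\gamma_2$ to show that $X_{\gamma(\rho)}$ adds nothing to $X_p$ beyond the new vertices.
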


We refer to the $(d+1)$-simplices and  $(d+2)$-simplices of \cref{lem:conflicts} as \emph{conflict simplices}.  In what follows, we show that each simplex of $\Incr$ is a face of a conflict simplex, provided $X$ has the following property.  

\begin{definition}
We say that $X$ has the \deff{$\Delta$-property} if the total orders $<_1$
and $<_2$ coincide on the first $d+1$ points and $X$ is contained in
the convex hull of these $d+1$ points. 
\end{definition}

If $X$ has the $\Delta$-property, then we let $\Delta$ denote the set of the first $d+1$ points.  

The $\Delta$-property can be easily enforced in practice by simply adding $d+1$ artificial points ``at
infinity.''  The analogous condition for a single total order (or a
slight variant thereof) is commonly exploited in the incremental computation of
Delaunay triangulations \cite{edelsbrunnerIncrementalTopologicalFlipping1996,
  boissonnat2009incremental,boissonnat2000triangulations}, and is also used in
the previous work on sublevel-Delaunay
bifiltrations~\cite{alonsoDelaunayBifiltrationsFunctions2024}. 

\begin{proposition}
\label{lem:quasi-purity} If $X$ has the $\Delta$-property and
  $|X| \geq d+2$, then every $\tau \in \Incr( \gamma)$ is either
  \begin{itemize}[noitemsep,topsep=0pt]
    \item a face of a $(d+2)$-dimensional conflict simplex or
    \item a face of a $(d+1)$-dimensional conflict simplex $\sigma \cup \{x\}$ where $|T_{\sigma}|=1$.
  \end{itemize}
\end{proposition}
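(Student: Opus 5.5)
Fix $\tau\in\Incr$ with witness $S$, write $x=\max_1(\tau)$, $y=\max_2(\tau)$, $\rho=\tau\setminus\{x,y\}$, and $p=\gamma(\tau)$. The plan is to enlarge $\tau$, in two stages, into a conflict simplex that contains it.

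\textbf{Stage 1: pass to a Delaunay $d$-simplex.} Work at a grid point $q\le p$ with $\rho\subset X_q$ and $x,y\notin X_q$. If $x\ne y$, take $q=p{\subdownleft}$ when both maxima are needed, and the appropriate one-step predecessor otherwise; then $X_q = X_p\setminus\{x,y\}$ up to points we can ignore. The sphere $S$ is an $X_q$-empty circumsphere of $\rho$, so $\rho\in\DelT(X_q)$ (for $\rho=\emptyset$, use instead the Delaunay region of $X_q$ containing $x$). Since $X_q\supseteq\Delta$, its Delaunay triangulation covers $\operatorname{conv}(X)$ and is pure of dimension $d$. We then choose a $d$-simplex $\sigma\supseteq\rho$ of $\DelT(X_q)$ whose circumsphere contains $x$ (and $y$) in its interior. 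To do this, grow $S$ continuously, keeping $\rho$ on the sphere and never letting it contain points of $X_q$ in its interior; this is the standard empty-sphere pencil argument. At each step it adds points of $X_q$ until it becomes a circumsphere of a $d$-simplex $\sigma\supseteq\rho$. Containment of $x$ is preserved if we move the center in the direction that keeps $x$ inside, i.e.\ the direction used in the Bowyer--Watson cavity argument. Alternatively, we can argue that $x$ lies in the union of circumballs of the Delaunay $d$-simplices of $X_q$ whose circumspheres contain $x$ (the Bowyer--Watson cavity), that the cavity is star-shaped from $x$, and that $\rho$ is a face of the cavity; the face $\rho$ then lies on some cavity simplex $\sigma$.

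\textbf{Stage 2: identify the conflict simplex.} With $\sigma$ in hand, $(\sigma,x)$ is a conflict pair at $q$, because $x\in X_{q\subto}$ or $X_{q\subup}$ by the choice of $q$ and injectivity. If $y\ne x$ and $y$ is also inside the circumsphere of $\sigma$, then $(\sigma,x,y)$ is a conflict triple and $\tau\subset\sigma\cup\{x,y\}$. If instead $x=y$, or $\tau$ omits one of them, then $\tau\subset\sigma\cup\{x\}$. If $|T_\sigma|\ge2$, \cref{lem:consec_order_conflict_triple} gives a neighbor $z$ of $x$ in $T_\sigma$, so $\tau\subset\sigma\cup\{x,z\}$, which is a $(d+2)$ conflict simplex. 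Otherwise $|T_\sigma|=1$, which is the second case of the statement. Degenerate situations, such as $\tau\subset\Delta$ or $\tau$ a single vertex, need $|X|\ge d+2$. In those cases some point lies outside $\Delta$, and some Delaunay $d$-simplex at the first grid point has a conflict; we then repeat the argument above with that point in place of $x$.

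\textbf{Main obstacle.} The hardest step is making sure that one $d$-simplex $\sigma\supseteq\rho$ of $\DelT(X_q)$ has both $x$ and $y$ strictly inside its circumsphere. Growing the sphere can push $y$ out while keeping $x$ in. My first attempt will be a continuity/pencil argument on the family of $X_q$-empty spheres through $\rho$ that contain both $x$ and $y$. This family is convex in the center parameter, exactly as the feasible set in the proof of \cref{Prop:Min_Rad_Witness}. We move the center along a ray that maximizes the power-distance slack for both $x$ and $y$ simultaneously until $\rho$ extends to a $d$-simplex. Strictness should follow from general position. If that fails, the fallback is to accept a $\sigma$ containing only $x$ and use the consecutive-neighbor argument from \cref{lem:consec_order_conflict_triple}; this shows that $y$ still appears among the conflicts of some Delaunay $d$-simplex containing $\rho\cup\{x\}$ at the later grid point $q{\subto}$.
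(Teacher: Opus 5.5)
\textbf{There is a genuine gap, in the case $x\neq y$, at exactly the step you flag as the main obstacle.} Your primary plan there cannot succeed. It is false in general that some $d$-simplex $\sigma\supseteq\rho$ of $\DelT(X_{p{\subdownleft}})$ has both $x$ and $y$ inside its circumsphere.

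A counterexample: take $d=2$ and the points $a=(-1,0)$, $b=(1,0)$, $c=(0,1+\sqrt2)$, $e=(0,-1-\sqrt2)$, $x=(0,\tfrac12)$, $y=(0,-\tfrac12)$, perturbed slightly into general position, together with a huge far-away triangle $\Delta$. Give $\Delta,a,b,c,e$ distinct diagonal values $(i,i)$ with $i\le 4$, those of $\Delta$ smallest, and set $\gamma(x)=(6,5)$, $\gamma(y)=(5,6)$.
\begin{itemize}
\item The unit circle witnesses $\tau=\{a,b,x,y\}\in\Incr$.
\item $X_{p{\subdownleft}}=\Delta\cup\{a,b,c,e\}$, and there the only triangles on $ab$ are $abc$ and $abe$.
\item The circumcircle of $abc$ (center $(0,1)$, radius $\sqrt2$) contains $x$ but not $y$. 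The circumcircle of $abe$ contains $y$ but not $x$.
\end{itemize}
Here $\tau$ lies in the conflict simplex coming from the conflict pair $(\{a,b,x\},y)$ at $p{\subdown}$. So the fallback must carry the argument.

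The justification you give for the fallback does not apply. \cref{lem:consec_order_conflict_triple} concerns only the ordered conflict list $T_\sigma$ of one fixed $\sigma$. It never produces a simplex of $\DelT(X_p\setminus\{y\})$ containing $\rho\cup\{x\}$, let alone one in conflict with $y$. What is missing is a proof that one of your two alternatives always occurs.

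The paper obtains this by a case analysis:
\begin{itemize}
\item If $\rho$ is destroyed both by inserting $x$ and by inserting $y$ into $X_p\setminus\{x,y\}$, any $d$-coface of $\rho$ gives a conflict triple.
\item Otherwise, say $\rho$ survives insertion of $x$. Then \cref{Lem:Conflict_Technicality_1} gives $\rho\cup\{x\}\in\DelT(X_p\setminus\{y\})$.
\item An intermediate-value argument on a convex set of centers shows that some $(X_p\setminus\{y\})$-empty circumsphere of $\rho\cup\{x\}$ has $y$ inside or on it, \emph{or} the same holds with $x$ and $y$ exchanged.
\item \cref{Lem:Conflict_Technicality_2} handles the subcase where the resulting simplex also survives the second insertion.
\end{itemize}
Your recipe for $\rho=\emptyset$ also misses $\tau=\{x,y\}$, because the triangle containing $x$ need not conflict with $y$. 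The paper handles this case with \cref{Prop:Empty_Tau_Delaunay}.

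\textbf{How to close the gap with your fixed order.} Your plan can be completed with the order you propose, using a convexity argument that also makes Stage~1 rigorous. As written, ``grow the sphere'' and ``$\rho$ is a face of the cavity'' are asserted, not proved.

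For $\rho\neq\emptyset$, let $P$ be the set of centers of $(X_p\setminus\{x,y\})$-empty circumspheres of $\rho$. Fix $z_0\in\rho$ and set $g_u(c)=\|c-u\|^2-\|c-z_0\|^2$. This is affine in $c$ and negative exactly when $u$ is inside the sphere about $c$ through $\rho$.

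Because $\Delta\subset X_p\setminus\{x,y\}$ and $u\in\mathrm{conv}(\Delta)$, $P$ is a pointed polyhedron along whose recession directions $g_u$ is nondecreasing. So $\min_P g_u$ is attained at a vertex, i.e.\ at the circumcenter of a $d$-simplex $\sigma\supseteq\rho$ of $\DelT(X_p\setminus\{x,y\})$. General position makes the sign there strict. This settles $x=y\notin\Delta$ directly.

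For $x\neq y$ there are two cases:
\begin{itemize}
\item If some $c\in P$ has $g_x(c)=0\ge g_y(c)$, apply the same vertex argument to $\rho\cup\{x\}$ in $X_p\setminus\{y\}$. This gives a conflict pair $(\sigma',y)$ at $p{\subdown}$ with $\sigma'\supseteq\rho\cup\{x\}$.
\item Otherwise, the convex set $P\cap\{g_y\le 0\}$ contains the witness center but misses $\{g_x=0\}$, so $g_x<0$ on all of it. A vertex of $P$ minimizing $g_y$ lies in this set and yields a conflict triple $(\sigma,x,y)$ at $p{\subdownleft}$.
\end{itemize}
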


\cref{lem:quasi-purity} is an analogue of \cite[Lemma 2.3]{alonsoDelaunayBifiltrationsFunctions2024} and has a similar but more involved proof. Our proof  will use the following two technical lemmas.

\begin{lemma}\label{Lem:Conflict_Technicality_1}
Given $Z\subset \R^d$ in general position, $\eta\in \DelT(Z)$, and $z\in Z\setminus \eta$, if there exists a $(Z\setminus \{z\})$-empty circumsphere $S^+$ of $\eta$ such that $z$ lies inside or on $S^+$, then $\eta\cup\{z\}\in \DelT(Z)$.  

\end{lemma}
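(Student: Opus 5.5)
We prove \cref{Lem:Conflict_Technicality_1} by continuously deforming the sphere $S^+$ until it passes through $z$. The resulting sphere will be a $Z$-empty circumsphere of $\eta\cup\{z\}$.

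If $z$ lies on $S^+$, we are done at once. Indeed, $S^+$ passes through every point of $\eta\cup\{z\}$. It has no point of $Z\setminus\{z\}$ inside it, and $z$ itself is on it, so $S^+$ is $Z$-empty. Hence $\eta\cup\{z\}\in\DelT(Z)$.

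So assume $z$ lies strictly inside $S^+$. Since $\eta\in\DelT(Z)$, there is a $Z$-empty circumsphere $S^-$ of $\eta$, and $z$ lies strictly outside or on $S^-$; if on, we are again done. We interpolate between the two spheres using the standard pencil (lifting-map) description. A sphere with center $c$ and radius $r$ corresponds to the affine function
\[
f_S(w)=\|w-c\|^2-r^2,
\]
which is negative inside, zero on, and positive outside the sphere. Set
\[
f_t=(1-t)f_{S^-}+t f_{S^+},\qquad t\in[0,1].
\]
Since the quadratic term $\|w\|^2$ has coefficient $1$ in both functions, each $f_t$ is again of the form $\|w-c_t\|^2-r_t^2$. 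Moreover $f_t$ vanishes on $\eta$, so $r_t^2\ge0$ whenever $\eta\neq\emptyset$. Thus each $f_t$ defines a genuine circumsphere $S_t$ of $\eta$. If $\eta=\emptyset$, the statement concerns only $\{z\}$, which trivially lies in $\DelT(Z)$.

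Now fix a point $w\in Z\setminus\{z\}$. Then $f_{S^-}(w)\ge0$ and $f_{S^+}(w)\ge0$, so by linearity $f_t(w)\ge0$ for every $t\in[0,1]$. That is, every $S_t$ is $(Z\setminus\{z\})$-empty. On the other hand, $f_0(z)>0$ and $f_1(z)<0$. Since $t\mapsto f_t(z)$ is continuous (indeed affine), there is some $t^*$ with $f_{t^*}(z)=0$. The sphere $S_{t^*}$ is then a circumsphere of $\eta\cup\{z\}$ with no point of $Z$ strictly inside. Hence $\eta\cup\{z\}\in\DelT(Z)$.

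One caveat concerns how $\DelT(Z)$ is defined. Under the paper's definition via empty circumspheres, the above already finishes the proof, with no affine-independence hypothesis needed. If a definition via simplices is intended, general position guarantees that $\eta\cup\{z\}$ is affinely independent with at most $d+1$ points. This follows because an empty circumsphere of $\eta\cup\{z\}$ exists and general position forbids $d+2$ cospherical points.

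The main point to check carefully is that the convex combination $f_t$ really stays a sphere function with nonnegative squared radius. This is ensured by the matching quadratic term together with the fact that $f_t$ vanishes on the nonempty set $\eta$.
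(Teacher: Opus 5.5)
Your proof is correct and takes essentially the same route as the paper's. Both arguments pass from the $Z$-empty circumsphere $S^-$ to $S^+$ through $(Z\setminus\{z\})$-empty circumspheres of $\eta$ and apply the intermediate value theorem at $z$. Your pencil $f_t$ is the circumsphere of $\eta$ whose center moves along the straight segment between the centers of $S^-$ and $S^+$. So the linearity of $f_t(w)$ in $t$ plays the role of the convexity of the paper's set $W=\bigcap_{y\in\eta}\Vor(y,Z\setminus\{z\})$ of admissible centers.
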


\begin{proof}
Since $\eta\in \DelT(Z)$, there exists a $Z$-empty circumsphere $S^-$ of $\eta$.  Note that $S^-$ is a $(Z\setminus \{z\})$-empty circumsphere of $\eta$ such that $z$ lies on or outside $S^-$.  Let $W\subset \R^d$ denote the set of all centers of $(Z\setminus \{z\})$-empty circumspheres of $\eta$.  We have $W=\bigcap_{y\in \eta} \Vor(y, Z\setminus \{z\})$, so $W$ is convex, hence path connected.  Consider the function $\operatorname{rad}\colon W\to \R$ that maps each element of $W$ to the radius of the associated circumsphere of $\eta$, and note that $\operatorname{rad}$ is continuous.  Therefore, by applying the intermediate value theorem to the function $g\colon W\to \R$ given by $g(w)=\|z-w\|-\operatorname{rad}(w)$, we see that there exists a $(Z\setminus \{z\})$-empty circumsphere $S$ of $\eta$ such that $z$ lies on $S$.  Then $S$ is a $Z$-empty circumsphere of $\eta\cup\{z\}$, so $\eta\cup\{z\}\in \DelT(Z)$.  
\end{proof}

\begin{lemma}\label{Lem:Conflict_Technicality_2}
If $X$ has the $\Delta$-property, $\tau\in \DelT(X_{\gamma(\tau)})$, and $\tau\not \subset \Delta$, then $\tau$ is a face of a conflict simplex.
\end{lemma}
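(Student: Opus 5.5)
The plan is to peel off the $\gamma_1$-maximal vertex of $\tau$ and use the Bowyer--Watson description of how a Delaunay triangulation changes when one point is inserted. Write $p=\gamma(\tau)$ and $x=\max_1(\tau)$, so that $\gamma_1(x)=p_1$.

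First, I will show that $x\notin\Delta$. Since $\tau\not\subset\Delta$, some vertex of $\tau$ lies outside $\Delta$. The $\Delta$-property says the points of $\Delta$ are the first $d+1$ points in the order $<_1$, so the $<_1$-maximum of $\tau$ is not in $\Delta$. Hence $p_1$ is not the minimum of $\im\gamma_1$. Let $q=p{\subleft}\in\Grid(\gamma)$, so that $q{\subto}=p$. Because $\gamma_1$ is injective, $X_q=X_p\setminus\{x\}$ and $\Delta\subset X_q$. In particular, $X_q$ has at least $d+1$ affinely independent points, $\DelT(X_q)$ is a triangulation of the convex hull of $\Delta$, and $x$ lies in that hull.

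Next, I will relate $\DelT(X_p)$ to $\DelT(X_q)$, since $X_p=X_q\cup\{x\}$. Consider the set $C$ of $d$-simplices $\sigma\in\DelT(X_q)$ whose circumsphere contains $x$ in its interior. The step I would prove is the following.

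\emph{Claim.} For every $\eta\in\DelT(X_q)$ such that $\eta\cup\{x\}\in\DelT(X_p)$, there is a $\sigma\in C$ with $\eta\subset\sigma$.

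Apply this with $\eta=\tau\setminus\{x\}$, which lies in $\DelT(X_q)$ as a face of $\tau\in\DelT(X_p)$ whose $X_p$-empty sphere is also $X_q$-empty. The claim gives $\sigma\in C$ with $\eta\subset\sigma$, hence $\tau\subset\sigma\cup\{x\}$. Finally, $(\sigma,x)$ is a conflict pair at $q$: $\sigma$ is a $d$-simplex of $\DelT(X_q)$, $x\in X_{q{\subto}}$, and $x$ lies inside the circumsphere of $\sigma$. So $\tau$ is a face of the conflict simplex $\sigma\cup\{x\}$ (see \cref{lem:conflicts}). In the degenerate case $\tau=\{x\}$, we have $\eta=\emptyset$ and only need $C\neq\emptyset$. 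This holds because $x$ lies in some $d$-simplex of $\DelT(X_q)$, and a point inside a simplex lies inside its circumsphere by general position.

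The main obstacle is the claim itself. This is the standard cavity lemma of Bowyer--Watson, and I would prove it directly with a sphere-pivoting argument. Start from an $X_p$-empty circumsphere $S$ of $\eta\cup\{x\}$. Then $S$ is an $X_q$-empty circumsphere of $\eta$ with $x$ on $S$. By general position, moving the center slightly within the affine subspace of centers equidistant from $\eta$ yields an $X_q$-empty circumsphere $S'$ of $\eta$ with $x$ strictly inside; this should be the same kind of argument as in \cref{Lem:Conflict_Technicality_1}. Then grow $S'$ by moving its center continuously away from $x$'s side while keeping $\eta$ on the sphere. Each time the sphere meets a new point of $X_q$, add that point to $\eta$ and continue in the smaller affine subspace of admissible centers. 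At every stage $x$ stays inside the sphere and the sphere stays $X_q$-empty. The process ends with a full $d$-simplex $\sigma\supset\eta$ of $\DelT(X_q)$ whose circumsphere contains $x$, so $\sigma\in C$.

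The delicate points are two: that a direction of motion keeping $x$ inside always exists, and that the process cannot run off to infinity. For both I would use that $x$ lies in the convex hull of $\Delta\subset X_q$. A sphere that grows without bound while staying $X_q$-empty approaches a halfspace that avoids the vertices of $\Delta$ and so cannot contain the hull. Choosing the motion so that the sphere is monotone nondecreasing in the direction of $x$ should rule this out.
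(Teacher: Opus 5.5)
Your proof is correct in outline but takes a different route from the paper's. The paper's argument is combinatorial. With $q=\gamma(\tau)$ (your $p$), it takes a $d$-dimensional coface $\mu$ of $\tau$ in $\DelT(X_q)$ and the $(d-1)$-simplex $\nu=\mu\setminus\{x\}\in\DelT(X_{q\subleft})$. Both $\DelT(X_{q\subleft})$ and $\DelT(X_q)$ triangulate $\operatorname{conv}(\Delta)$, so $\nu$ has the same number of $d$-cofaces (one or two) in each. Since $\mu\notin\DelT(X_{q\subleft})$, some $d$-coface $\sigma\supset\nu$ in $\DelT(X_{q\subleft})$ is missing from $\DelT(X_q)$, and $(\sigma,x)$ is the conflict pair. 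This needs no separate case for $\tau=\{x\}$.

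You instead prove the Bowyer--Watson cavity lemma geometrically. That is a stronger statement: it is exactly the fact the paper later invokes informally in \cref{lem:local-reinsert}. It also avoids the purity and pseudomanifold facts the paper uses. The cost is that the pivoting step carries all the weight, and two points in your sketch need fixing.
\begin{itemize}
\item \emph{Direction of motion.} The power of $x$ with respect to the sphere through $\eta'$ centered at $c\in A_{\eta'}$ is affine in $c$. You should move in a direction where this power is nonincreasing, which means moving the center toward $x$'s side of $\operatorname{aff}(\eta')$. Read literally, ``away from $x$'s side'' pushes $x$ out of the sphere.
\item \emph{Boundedness.} Fix $y_0\in\eta'$ and suppose the center runs off along $c_0+tu$ while the sphere stays $X_q$-empty and the power of $x$ is nonincreasing. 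Then $\langle u,z-y_0\rangle\le 0$ for all $z\in X_q$, and $\langle u,x-y_0\rangle\ge 0$. General position places $x$ in the interior of $\operatorname{conv}(\Delta)$, so $x$ is a convex combination of $\Delta$ with all weights positive. Expanding $\langle u,x-y_0\rangle$ in these weights forces $\langle u,\delta-y_0\rangle=0$ for every $\delta\in\Delta$, hence $u=0$.
\end{itemize}
Alternatively, you can state the whole claim as one linear program in lifted coordinates: maximize the depth of $x$ over $X_q$-empty spheres through $\eta$. The program is bounded because $x\in\operatorname{conv}(X_q)$, and its feasible polyhedron is pointed because $X_q$ affinely spans $\R^d$. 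So an optimal vertex exists, and it is a $d$-simplex $\sigma\supset\eta$ of $\DelT(X_q)$ with $x$ strictly inside its circumsphere.
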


\begin{proof}
We use the following notation:
\begin{itemize}
\item $x=\max_1(\tau)$, 
\item $q=\gamma(\tau)$, 
\item $\mu$ is a $d$-dimensional coface of $\tau$ in $\DelT(X_{q})$,
\item $\nu=\mu\setminus \{x\}$.  
\end{itemize}
Note that $\nu\in \DelT(X_{q{\subleft}})$.   By the $\Delta$-property, $\nu$ has the same number of $d$-dimensional cofaces in $\DelT(X_{q{\subleft}})$ and in $\DelT(X_{q})$, namely, either one (when $\nu\subset \Delta$) or two such cofaces.  Thus, since $\mu\in \DelT(X_{q})$ but $\mu\notin \DelT(X_{q{\subleft}})$, there must exist a $d$-dimensional coface $\sigma$ of $\nu$ in $\DelT(X_{q{\subleft}})$ such that $\sigma \notin \DelT(X_{q})$.  Then $(\sigma, x)$ is a conflict pair at $q{\subleft}$ and $\tau$ is a face of the conflict simplex~$\sigma\cup\{x\}$.   
\end{proof}

\begin{proof}[Proof of \cref{lem:quasi-purity}]
By \cref{lem:consec_order_conflict_triple}, if $\sigma \cup \{x\}$ is a $(d+1)$-dimensional conflict simplex with $|T_{\sigma}|>1$, then $(\sigma, x)$ is a face of some $(d+2)$-dimensional conflict simplex $\sigma \cup \{x, y\}$.  Thus, it suffices to show that every $\tau \in \Incr$ is a face of a conflict simplex.

\medskip\noindent\textbf{Case 1: $\tau\subset \Delta$.}\\
Let $x$ be a minimal element of $X\setminus \Delta$ with respect to the product partial order on $\R^2$.  By the $\Delta$-property, $x$ lies in the convex hull of $\Delta$, so the circumsphere
of $\Delta$ contains $x$ in its interior. Since $x \notin \Delta$, we
have $\gamma_1(x) > \gamma_1(\Delta)$ and $\gamma_2(x) > \gamma_2(\Delta)$.  Thus, 
$(\Delta, x)$ is a conflict pair at both $\gamma(x){\subdown}$ and $\gamma(x){\subleft}$, and $\tau$ is a face of the conflict simplex $\Delta\cup\{x\}$.

\medskip\noindent\textbf{Case 2:  $\max_1(\tau)=\max_2(\tau)\notin \Delta$.}\\
Our argument here is essentially the same as the one used to prove \cite[Lemma 2.3]{alonsoDelaunayBifiltrationsFunctions2024}.  We give a streamlined version which avoids proof by contradiction.  

Let $x=\max_1(\tau)$, $\eta=\tau\setminus\{x\}$, and $q=\gamma(\tau)$.  If $\dim(\tau)=0$ (i.e., $\tau=\{x\}$), then the $\Delta$-property and general position assumption together imply that  $\{x\}$ lies in the interior of some $d$-simplex $\sigma\in \DelT(X_{q{\subleft}})$, so $(\sigma,x)$ is a conflict pair at $q{\subleft}$ and $\tau$ is a face of the conflict simplex $\sigma\cup\{x\}$.  If $\dim(\tau)>0$, then $\eta\ne\emptyset$ and $\eta\in \DelT(X_{q{\subleft}})$.  If $\eta\notin \DelT(X_{q})$, let $\sigma\in \DelT(X_{q{\subleft}})$ be a $d$-dimensional coface of $\eta$.  Then $\sigma\notin \DelT(X_{q})$, so $(\sigma, x)$ is a conflict pair at $q{\subleft}$ and $\tau$ is a face of the conflict simplex $\sigma\cup\{x\}$.  If $\eta\in \DelT(X_{q})$, then since $\tau\in \Incr$, \cref{Lem:Conflict_Technicality_1} implies that $\tau\in \DelT(X_{q})$, so \cref{Lem:Conflict_Technicality_2} implies that $\tau$ is a face of a conflict simplex.  

\medskip\noindent\textbf{Case 3:  $\max_1(\tau)\ne\max_2(\tau)$}.\\
Let $x=\max_1(\tau)$ and $y=\max_2(\tau)$.  Note that neither $x$ nor $y$ is in $\Delta$.  Let $\eta=\tau\setminus\{x,y\}$ and $q=\gamma(\tau)$.  If $\dim(\tau)=1$ (i.e., $\tau=\{x,y\}$), then \cref{Prop:Empty_Tau_Delaunay} tells us that $\tau\in \DelT(X_q)$, so \cref{Lem:Conflict_Technicality_2} implies that $\tau$ is a face of a conflict simplex.  If $\dim(\tau)>1$, then $\eta\ne\emptyset$ and $\eta\in \DelT(X_{q{\subdownleft}})$.  First, suppose that $\eta\notin \DelT(X_{q{\subleft}})$ and $\eta\notin \DelT(X_{q{\subdown}})$.  Let $\sigma\in \DelT(X_{q{\subdownleft}})$ be a $d$-dimensional coface of $\eta$.    Note that $\sigma\not\in \DelT(X_{q{\subleft}})$ and $\sigma\notin \DelT(X_{q{\subdown}})$.  Thus, $(\sigma, x)$ and $(\sigma, y)$ are both conflict pairs at $q{\subdownleft}$, and $\tau$ is a face of the conflict simplex $\sigma\cup\{x,y\}$.  

We next show that if $\eta\in \DelT(X_{q{\subdown}})$, then $\tau$ is a face of a conflict simplex; the symmetric argument shows that if $\eta\in \DelT(X_{q{\subleft}})$, then $\tau$ is a face of a conflict simplex.  Suppose that $\eta\in \DelT(X_{q{\subdown}})$.  Since $\tau\in \Incr$, there exists an $(X_{q}\setminus \{x,y\})$-empty circumsphere $S^+$ of $\eta$ with $x$ and $y$ lying inside or on $S^+$.  Since $X_{q{\subdown}}= X_{q}\setminus \{y\}$, we see that $S^+$ is in particular an $(X_{q{\subdown}}\setminus \{x\})$-empty circumsphere of $\eta$ with $x$ lying inside or on $S^+$.  Therefore, \cref{Lem:Conflict_Technicality_1} implies that $\eta\cup\{x\}\in \DelT(X_{q{\subdown}})=\DelT(X_{q}\setminus \{y\})$.

We claim that there exists either 
\begin{enumerate}
\item an $(X_q\setminus \{y\})$-empty circumsphere $S$ of $\eta\cup\{x\}$ with $y$ inside or on $S$, or else 
\item an $(X_q\setminus \{x\})$-empty circumsphere $S$ of $\eta\cup\{y\}$ with $x$ inside or on $S$.
\end{enumerate}

To prove the claim, note that since $\eta\cup\{x\} \in \DelT(X_{q}\setminus \{y\})$, there exists an $(X_q\setminus \{y\})$-empty circumsphere $S^-$ of $\eta\cup\{x\}$.   If $y$ lies inside or on $S^-$, then the first condition of the claim holds with $S=S^-$, so assume that $y$ lies outside $S^-$.  Note that each of $S^+$ and $S^-$ is an $(X_q\setminus \{x,y\}$)-empty circumsphere of $\eta$ with $x$ lying inside or on the sphere.  Letting $W$ denote the set of centers of all such spheres, we have \[W=\left(\bigcap_{z\in \eta} \Vor(z, X_q\setminus \{x,y\})\right) \cap \Vor(x,X_q\setminus \{y\}),\] so $W$ is convex, hence path connected.    Since $y$ lies inside or on $S^+$ and $y$ lies outside $S^-$, an intermediate value theorem argument similar to the one of \cref{Lem:Conflict_Technicality_1} shows that there exists an $(X_q\setminus \{x,y\})$-empty circumsphere $S$ of $\eta$ with $y$ lying on $S$ and $x$ inside or on $S$; equivalently, $S$ satisfies the second condition of the claim.  This proves the claim.  
   
We next show that if the first condition of the claim holds, then $\tau$ is a face of a conflict simplex.  If $\eta\cup\{x\}\not \in \DelT(X_{q})$, then let  $\sigma\in \DelT(X_{q{\subdown}})$ be a $d$-dimensional coface of $\eta\cup\{x\}$.  Then $\sigma\not\in \DelT(X_{q})$, so $(\sigma,y)$ is a conflict pair at $q{\subdown}$ and  $\tau$ is a face of the conflict simplex $\sigma\cup \{y\}$.  If $\eta\cup\{x\}\in \DelT(X_{q})$, then assuming that the first condition of the claim holds, \cref{Lem:Conflict_Technicality_1} implies that $\tau\in \DelT(X_q)$, so \cref{Lem:Conflict_Technicality_2} implies that $\tau$ is a face of a conflict simplex.

If the second condition of the claim holds, then we have in particular that $\eta\cup\{y\}\in  \DelT(X_{q}\setminus \{x\})=\DelT(X_{q{\subleft}})$, so essentially the same argument (replacing $q{\subdown}$ with $q{\subleft}$ and exchanging $x$ with $y$) shows that $\tau$ is a face of a conflict simplex.
\end{proof}

\subparagraph{Size.}
We apply the above results, together with a result about the incremental construction of Delaunay triangulations~\cite{edelsbrunnerIncrementalTopologicalFlipping1996}, to bound the size of~$\Incr$:

\begin{theorem}\label{thm:incr-size}
 For constant dimension $d$, $\Incr$ has $O(\abs{X}^{\SizeBound})$ simplices.
\end{theorem}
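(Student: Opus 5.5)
By \cref{lem:quasi-purity}, after enforcing the $\Delta$-property with $d+1$ extra points, every simplex of $\Incr$ is a face of a conflict simplex. A conflict simplex has dimension at most $d+2$, so it has at most $2^{d+3}=O(1)$ faces. It therefore suffices to bound the number of conflict simplices by $O(\abs{X}^{\SizeBound})$. Adding $d+1$ points does not change the asymptotics, and removing them at the end only deletes simplices. If the $\Delta$-property is not assumed, I would embed $X$ into an augmented set $X'$ and check that $\Incr(\gamma)\subset\Incr(\gamma')$ for the extended function $\gamma'$. This holds because the added points come first in both orders and lie far away, so witnesses for $X$ remain witnesses for $X'$. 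I will assume this reduction; if it causes trouble, the theorem is in any case stated under the standing conventions.

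I would count conflict pairs first and then conflict triples. By \cref{lem:consec_order_conflict_triple}, for each $d$-simplex $\sigma$ the conflict triples $(\sigma,x,y)$ correspond to consecutive pairs in the totally ordered set $T_\sigma$. Hence the number of conflict triples is at most $\sum_\sigma (|T_\sigma|-1)$, which is at most the number of conflict pairs. The number of $(d+1)$- and $(d+2)$-dimensional conflict simplices is therefore at most twice the number of conflict pairs, and it suffices to bound the number of conflict pairs $(\sigma,x)$.

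For each $d$-simplex $\sigma\in\bigcup_p\DelT(X_p)$, the set $T_\sigma$ is an antichain in the product order by \cref{lem:incomparable_conflicts}. I would bound the number of conflict pairs in two steps.
\begin{enumerate}
\item The number of distinct $d$-simplices appearing in $\bigcup_p \DelT(X_p)$ is $O(\abs{X}^{\SizeBound})$.
\item Each such $\sigma$ has $|T_\sigma|=O(1)$ amortized.
\end{enumerate}
Together these give the total $O(\abs{X}^{\SizeBound})$.

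For the first step, fix $p_2\in\im\gamma_2$ and insert the points of $X_{(\infty,p_2)}$ in $<_1$ order. This is an incremental Delaunay construction, and by the cited result of \cite{edelsbrunnerIncrementalTopologicalFlipping1996}, the total number of simplices ever created in any insertion order is $O(n^{\lceil d/2\rceil})$ in the worst case. Doing this for each of the $n$ values of $p_2$ gives $O(n^{\lceil d/2\rceil+1})$ distinct $d$-simplices. Every $\DelT(X_p)$ appears as some intermediate triangulation in this process.

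The main obstacle is the exponent. The target is $\lceil (d+1)/2\rceil+1$, which equals $\lceil d/2\rceil+1$ when $d$ is odd but one more when $d$ is even. I would count conflict pairs directly instead. Each conflict pair $(\sigma,x)$ at $p$ corresponds to a simplex destroyed, together with the inserted vertex $x$, during one step of one of the incremental sequences for fixed $p_2$ or fixed $p_1$. By the Bowyer–Watson and flip correspondence, such a pair is a $(d+1)$-simplex of the lifted structure, namely the upper hull of the lifted points. The total structural change over an incremental construction in $\R^d$ is bounded by the complexity of the lifted $(d+1)$-dimensional configuration, which is $O(n^{\lceil (d+1)/2\rceil})$ by the upper bound theorem. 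This is the form of the bound in \cite{edelsbrunnerIncrementalTopologicalFlipping1996}. Multiplying by the $2n$ sequences, one for each row and column of the grid, gives $O(n^{\lceil (d+1)/2\rceil+1})$ conflict pairs. Every conflict pair at $p$ arises in the row sequence through $p$ if $x\in X_{p\subto}$, and in the column sequence otherwise. This is the step I would verify most carefully.
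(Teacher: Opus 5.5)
Your final argument is exactly the paper's proof. You reduce via \cref{lem:quasi-purity} and \cref{lem:consec_order_conflict_triple} to counting conflict pairs, charge each conflict pair at $p$ to a Bowyer--Watson conflict on the horizontal or vertical grid line through $p$, and apply the Edelsbrunner--Shah bound $O(\abs{X}^{\lceil (d+1)/2\rceil})$ per line over $2\abs{X}$ lines.

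You should, however, delete the abandoned two-step plan. Its claim that an incremental Delaunay construction creates only $O(n^{\lceil d/2\rceil})$ simplices in any insertion order is false for even $d$: already in the plane, a bad insertion order can create $\Theta(n^2)$ triangles, which is why the correct per-line exponent is $\lceil (d+1)/2\rceil$. Its amortized bound $|T_\sigma|=O(1)$ is unjustified and not needed.
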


\begin{proof}
By~\cref{lem:quasi-purity} and the discussion above, the number of simplices in $\Incr$ is $O(C)$, where $C$
is the number of conflict pairs and conflict triples.  Moreover, by \cref{lem:consec_order_conflict_triple} the number of conflict triples is at most the number of conflict pairs.  Therefore, to prove the result, it suffices to bound the number of conflict pairs.  
By a result in \cite[Section 9]{edelsbrunnerIncrementalTopologicalFlipping1996} (see also~\cite[Theorem 2.6]{alonsoDelaunayBifiltrationsFunctions2024}), the total number of conflict pairs along any fixed vertical or horizontal line is $O(|X|^{\ceil{(d+1)/2}})$.  Thus, the total number of conflict pairs is $O(\abs{X}^{\ceil{(d+1)/2} + 1})$.
\end{proof}

\section{Computation}\label{sec:computation}
By \cref{lem:quasi-purity}, to compute  $\Incr$ it suffices to compute all conflict pairs and conflict triples.  
In the following, we explain how to compute all conflict pairs.   Once we have done so, we can easily compute all conflict triples via \cref{lem:consec_order_conflict_triple}, by iterating through the set of conflict pairs $T_{\sigma}$ for each $d$-simplex $\sigma\in \bigcup_{p\in \Grid(\gamma)} \DelT(X_p)$.    
  
\subparagraph{The Bowyer-Watson algorithm.}  
We make essential use of the \deff{Bowyer-Watson
algorithm}~\cite{bowyerComputingDirichletTessellations1981,watsonComputingNdimensionalDelaunay1981}, a standard incremental algorithm for computing Delaunay triangulations.  Given the Delaunay triangulation  $\DelT(Z)$ of a point cloud $Z\subset \R^d$  and a point $z\in {\R^d}$, the Bowyer-Watson algorithm 
computes $\DelT(Z \cup \{z\} )$.  To do so, it first identifies each $d$-simplex $\sigma$ such that $z$ lies inside the circumsphere of $\sigma$.  Each such simplex $\sigma$ is then removed from $\DelT(Z)$, resulting in a star-convex
region $\Sigma$ centered at $z$. Finally, to obtain $\DelT(Z \cup \Set{z})$, the region $\Sigma$ is retriangulated by forming the simplicial cone with base the boundary of $\Sigma$ and apex $z$ 
(See~\cref{fig:BWalgorithm}).  For $\sigma$ a $d$-simplex as above, we call the pair $(\sigma,z)$ a \deff{BW-conflict (for $Z$)}.  

\begin{figure}[htpb]
    \centering
    \includegraphics[width=0.28\textwidth,page=1]{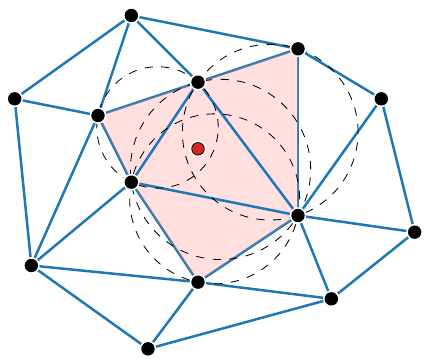}
    \hfill
    \includegraphics[width=0.28\textwidth, page=2]{bowyer_watson.pdf}
    \hfill
    \includegraphics[width=0.28\textwidth, page=3]{bowyer_watson.pdf}
    \caption{
        The Delaunay triangulation $\DelT(Z)$ (blue), and the new point $z$
        (red). Triangles whose circumcircles contain $z$ are shaded.  We remove
        all these triangles, leaving an untriangulated star-convex region $\Sigma$ centered at $z$.
$\Sigma$ is then retriangulated by the simplicial cone with base the boundary of $\Sigma$ and apex $z$, yielding $\DelT(Z \cup \Set{z})$.
}
    \label{fig:BWalgorithm}
\end{figure}

\begin{remark}
Note that the definition of BW-conflict is distinct from, but closely related to, that of a conflict pair given in \cref{def:conflicts}\,(i).  Namely, every conflict pair $(\sigma,x)$ at $p$ is a BW-conflict for $X_p$, and conversely, if $x\in X_{p{\subto}}\cup X_{p{\subup}}$ and $(\sigma,x)$ is a BW-conflict for $X_p$, then $(\sigma,x)$ is a conflict pair.  
\end{remark}

\subparagraph{Naive Computation of Conflict Pairs.}
To compute all conflict pairs, a natural strategy is to traverse each vertical line in $\Grid(\gamma)$ in increasing order, iteratively applying the Bowyer-Watson algorithm, and to then do the same for each horizontal line.  To explain this, we focus on the traversal of vertical lines, as the case of horizontal lines is symmetric.  A naive version of the strategy, which already satisfies our main complexity bound (\cref{Thm:Time_Bound}), is as follows: For each $a\in \im \gamma_1$ in increasing order, iterate through $\im \gamma_2=\{b_1,\ldots,b_n\}$, also in increasing order.  For each $i < n$, let $p=(a,b_i)$.   Given $\DelT(X_{p})$,  if $X_{p{\subup}}\setminus X_{p}$ is non-empty, hence equal to $\{x\}$ for some $x\in X$, then we apply Bowyer-Watson to compute both $\DelT(X_{p{\subup}})$ and all conflict pairs of the form $(\sigma,x)$ at $p$.  

\begin{theorem}\label{Thm:Time_Bound}
The above algorithm computes $\Incr$ in time $O(\abs{X}^{\ceil{d/2}+2})$.
\end{theorem}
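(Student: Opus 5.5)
We split the claim into correctness and running time, with the time bound the substantive part.

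\emph{Correctness.} Every conflict pair $(\sigma,x)$ at $p$ has $x\in X_{p\subup}\setminus X_p$ or $x\in X_{p\subto}\setminus X_p$. By the remark preceding the theorem, such pairs are exactly the BW-conflicts produced when Bowyer-Watson inserts $x$ into $\DelT(X_p)$. The vertical sweeps visit every grid step $p\to p{\subup}$, and the horizontal sweeps visit every step $p\to p{\subto}$, so all conflict pairs are enumerated. Conflict triples then come from \cref{lem:consec_order_conflict_triple}: we sort each $T_\sigma$ by $\gamma_1$ and pair consecutive elements. By \cref{lem:quasi-purity}, $\Incr$ is the downward closure of the resulting conflict simplices. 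Since $d$ is constant, each conflict simplex has $O(1)$ faces, so the closure costs $O(C)$, where $C$ is the number of conflict pairs and triples (using hashing, or sorting with a log factor that is absorbed below).

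\emph{Time.} Fix a vertical line $a\in\im\gamma_1$. Along it, the sets $X_{(a,b_i)}$ form a nested sequence in which the points of $X_{(a,\infty)}$ are inserted in $\gamma_2$-order, starting from $X_{(a,b_1)}$. We construct the initial triangulation $\DelT(X_{(a,b_1)})$, which has at most $|\Delta|+O(1)$ points, in $O(1)$ or at worst $O(|X|^{\lceil d/2\rceil})$ time. Each Bowyer-Watson step costs time proportional to the number of destroyed and created simplices, plus the cost of locating one conflicting simplex.

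\emph{Main obstacle: point location.} Without randomization we cannot cite expected bounds, so we use a deterministic bound instead. We locate a conflicting simplex by scanning all $d$-simplices of the current triangulation. That triangulation has $O(|X|^{\lceil d/2\rceil})$ simplices by the upper bound theorem, so each insertion costs $O(|X|^{\lceil d/2\rceil})$. Once one conflict is found, the conflict region is star-shaped and connected, so a graph search over adjacent simplices finds all BW-conflicts in time linear in their number plus the size of the boundary. The new cone is the same size, and every created simplex is destroyed at most once, so the total restructuring work along the line is bounded by the total number of simplices ever created. That number is $O(|X|\cdot|X|^{\lceil d/2\rceil})$, since each insertion creates at most as many simplices as a triangulation has. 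Hence one line costs $O(|X|\cdot|X|^{\lceil d/2\rceil})=O(|X|^{\lceil d/2\rceil+1})$.

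Summing over the $O(|X|)$ vertical lines and, symmetrically, the $O(|X|)$ horizontal lines gives $O(|X|^{\lceil d/2\rceil+2})$. This also dominates $C$, which is $O(|X|^{\lceil(d+1)/2\rceil+1})\le O(|X|^{\lceil d/2\rceil+2})$ by \cref{thm:incr-size}, so the triple and closure steps fit within the bound.
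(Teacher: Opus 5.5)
Your accounting for the conflict pairs matches the paper's: $O(|X|^2)$ Bowyer--Watson insertions, each costing $O(|X|^{\lceil d/2\rceil})$ once point location is done by a scan. Your correctness discussion is also fine.

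The gap is in the conflict-triple step. You sort each $T_\sigma$ by $\gamma_1$ but charge that step only $O(C)$. You justify this by saying any logarithmic factor is absorbed, because $C=O(|X|^{\lceil (d+1)/2\rceil+1})\le O(|X|^{\lceil d/2\rceil+2})$. That inequality has slack only for odd $d$. For even $d$, in particular $d=2$, we have $\lceil (d+1)/2\rceil+1=\lceil d/2\rceil+2$, so the only available bound on $C$ already equals the target. Comparison sorting costs $O(\sum_\sigma |T_\sigma|\log|T_\sigma|)$, and the best you can then say is $O(|X|^{\lceil d/2\rceil+2}\log|X|)$. The same objection applies to the ``sorting'' option you offer for forming the face closure. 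As written, the argument does not give the stated bound for even $d$.

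The second half of the paper's proof addresses exactly this point, and avoids sorting altogether. The sweeps already discover each $T_\sigma$ in $\gamma_1$-order:
\begin{itemize}
\item Processing the vertical lines in increasing $\gamma_1$ finds every element of $T_\sigma$ except possibly the last one, in order.
\item The last element $y$ is found on the vertical line through $\gamma(y)$ exactly when $\gamma_2(y)>\max_{x\in\sigma}\gamma_2(x)$. Otherwise it is found during the later horizontal sweep through $\gamma(y)$.
\end{itemize}
Either way $y$ is discovered last, so the triples cost only $O(\sum_\sigma |T_\sigma|)$.

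If you prefer to keep your sort-based route, replace comparison sorting by a radix sort of all recorded pairs $(\sigma,x)$. Key each pair by the vertex indices of $\sigma$ followed by the $\gamma_1$-rank of $x$. These keys are $(d+2)$-tuples of integers in $\{1,\dots,|X|\}$, so for constant $d$ one radix sort groups the pairs by $\sigma$, removes duplicates, and orders each $T_\sigma$. It runs in deterministic time linear in the number of recorded pairs plus $|X|$, and the number of recorded pairs is already paid for by the Bowyer--Watson work. The same device handles the closure without hashing. That also removes a small inconsistency: hashing gives only an expected bound, while you insisted on deterministic point location.
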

\begin{proof}
The algorithm inserts a point into a Delaunay triangulation $O(\abs{X}^{2})$ times.  As observed in~\cite[Section 2]{alonsoDelaunayBifiltrationsFunctions2024},  the cost of each such insertion is bounded by the number of $d$-simplices in the triangulation, which in this case is $O(\abs{X}^{\ceil{d/2}})$.  Thus the cost of computing all conflict pairs is $O(\abs{X}^\CompBound)$.  

If for each $d$-simplex $\sigma\in \bigcup_{p\in \Grid(\gamma)} \DelT(X_p)$, we are given the set $T_{\sigma}$ in order, then we can compute all conflict triples in time \[O\left(\sum_{\sigma} |T_{\sigma}|\right)=O(\abs{X}^{\SizeBound})=O(\abs{X}^\CompBound).\]  Thus, to finish the proof, it suffices to observe that the algorithm discovers the elements of each set $T_{\sigma}$ in order, so that no additional sorting is required.  To see this, note that as we process the vertical lines, we discover all elements of $T_{\sigma}$ in order,  except possibly the last element.  Indeed, the last element $y$ of $T_{\sigma}$ is discovered as we traverse the vertical line containing $\gamma(y)$ if and only if $\gamma_2(y)> \max_{x\in \sigma} \gamma_2(x)$.  If $y$ is not discovered as we traverse this vertical line, then it is discovered afterwards, when we traverse the horizontal line containing $\gamma(y)$.  Either way, $y$ is the last element of $T_{\sigma}$ that we discover.
\end{proof}

\subparagraph{Computing trifiltrations.} 
Once $\Incr$ is computed, completing the computation of either $\Del(\gamma)$ or
$\DelCech(\gamma)$ amounts to computing the birth index of each simplex $\sigma\in\Incr$ in the trifiltration.
Recall that the birth index of $\sigma$ in $\DelCech(\gamma)$ is $(\gamma(\sigma), m_\sigma)$, where $m_\sigma$ is the radius of the smallest enclosing ball of $\sigma$.  Computation of $\gamma(\sigma)$  is trivial.  In general, smallest enclosing balls can be computed via standard algorithms with efficient implementations that, assuming  $d$ constant, run in constant time on point sets in $\R^d$ of constant size~\cite{welzlSmallestEnclosingDisks1991,fischerFastSmallestEnclosingBallComputation2003,fischerSmallestEnclosingBall2004,gartnerFastRobustSmallest1999,cgal:fghhs-bv-24a} .  Thus, $m_\sigma$ can be computed in constant time because $\dim(\sigma)\leq d+2$.  Further, recall that the birth index of $\sigma$ in $\Del(\gamma)$ is $(\gamma(\sigma), \omega_\sigma)$, where $\omega_\sigma$ is the incremental Delaunay radius of \Cref{def:biincremental}.  In Appendix \ref{sec:radiusDel}, we give an algorithm to compute $\omega_\sigma$  for all $\sigma\in\Incr$ in time $O(|\Incr|)$.  From this and \cref{Thm:Time_Bound}, we obtain the following:

\begin{corollary}\label{Thm:Filtration_Time_Bound}
Both  $\Del(\gamma)$ and $\DelCech(\gamma)$ can be computed in time $O(\abs{X}^{\ceil{d/2}+2})$.
\end{corollary}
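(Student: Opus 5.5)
The proof will simply combine \cref{Thm:Time_Bound} with the per-simplex cost of computing birth indices. First, compute $\Incr$ by the algorithm described before \cref{Thm:Time_Bound}. This takes time $O(\abs{X}^{\CompBound})$ and outputs all conflict pairs and conflict triples. By \cref{lem:quasi-purity}, every simplex of $\Incr$ is a face of a conflict simplex of dimension at most $d+2$. Each conflict simplex has at most $2^{d+3}=O(1)$ faces, so we can enumerate the simplices of $\Incr$ by listing the faces of all conflict simplices and removing duplicates.

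Deduplication can be done with hashing (expected time) or by sorting vertex tuples with radix sort over labels in $\{1,\dots,\abs{X}\}$ (worst case). Either way the cost is linear in the number of generated faces, which is $O(\abs{X}^{\SizeBound})$ by \cref{thm:incr-size} and its proof. Since $\ceil{(d+1)/2}\le \ceil{d/2}+1$, this is within $O(\abs{X}^{\CompBound})$.

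Next, compute the birth indices.
\begin{itemize}
\item For each $\sigma$, compute $\gamma(\sigma)$ as a coordinatewise maximum over at most $d+3$ vertices, in $O(1)$ time.
\item For $\DelCech(\gamma)$, compute $m_\sigma$ with a standard smallest enclosing ball routine in $O(1)$ time per simplex, since $d$ is constant and $\abs{\sigma}\le d+3$.
\item For $\Del(\gamma)$, compute all $\omega_\sigma$ together using the algorithm of Appendix~\ref{sec:radiusDel}, which the text asserts runs in time $O(\abs{\Incr})$.
\end{itemize}
All three steps cost $O(\abs{\Incr})=O(\abs{X}^{\SizeBound})$.

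Adding everything up gives $O(\abs{X}^{\CompBound})$ for each trifiltration. The only point that needs care is that the incremental radius computation in the appendix may need auxiliary data, such as the witness structure or coface relations among simplices of $\Incr$. I would argue that these relations can be built in time linear in $\abs{\Incr}$, because each simplex has $O(1)$ faces. I would also use that the witnesses of conflict simplices come directly from the Bowyer–Watson step, through the circumspheres of the $d$-simplices $\sigma$. This is the step I expect to be the main obstacle, and I would rely on the appendix's claimed bound to handle it.
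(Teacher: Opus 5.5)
Your proposal is correct and follows the paper's route: compute $\Incr$ via \cref{Thm:Time_Bound}, then obtain $\gamma(\sigma)$ and $m_\sigma$ in constant time per simplex and all $\omega_\sigma$ in $O(|\Incr|)$ time via the appendix algorithm, using $|\Incr| = O(\abs{X}^{\SizeBound}) \subseteq O(\abs{X}^{\CompBound})$. Your extra detail on enumerating and deduplicating the faces of conflict simplices is fine. The worry about witnesses is unnecessary: the appendix algorithm only needs the spheres $S_\sigma$, which a miniball routine computes directly, and the facet--cofacet incidences, which you already build in linear time.
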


\subsection{Optimized computation of conflict pairs}
\label{Sec:Optimizations}
The above approach to computing conflict pairs involves considerable redundant computation across different lines.  We next discuss strategies for reducing this redundancy.    Whereas the unoptimized algorithm described above uses only point insertions in Delaunay triangulations, our optimizations use both insertions and deletions.  Our aim in selecting these optimizations is to reduce  the total number $\zeta(X)$ of insertions and deletions, which can be seen heuristically as a proxy for the total cost of the algorithm in practice.  Our optimizations do not reduce $\zeta(X)$ in the worst case, and do not improve the asymptotic worst-case complexity.
But in some classes of examples they yield an asymptotic reduction of $\zeta(X)$.

In what follows, we consider only optimizations of the computations along vertical lines, as the case of horizontal lines is symmetric.  First, we note that for any $x\in X$, it suffices to begin the traversal of the vertical line containing $\gamma(x)$ at index $p=\gamma(x){\subdown}$, rather than at index $(\gamma_1(x),\min_{y\in X}\gamma_2(y))$.  However, to exploit this idea, one must select an algorithm to compute $\DelT(X_{\gamma(x){\subdown}})$.  A simple approach is to use the Bowyer-Watson algorithm with randomized insertion of points, which is known to be highly efficient in expectation \cite{Clarkson1989}.  However, this requires the same number of point insertions as the unoptimized algorithm, which constructs $\DelT(X_{\gamma(x){\subdown}})$ via point insertions in order of $\gamma_2$-value.  
We next introduce an alternative algorithm to compute $\DelT(X_{\gamma(x){\subdown}})$ which exploits the work done along the previous vertical line to reduce $\zeta(X)$ in favorable cases.

\subparagraph{Computing $\DelT(X_{\gamma(x){\subdown}})$ via deletions.}
Consider $x\in X$, and let $W$ denote the final sublevel set of $\gamma$ encountered just before we traverse the vertical line containing $\gamma(x)$, i.e., $W=\{w\in X\mid \gamma_1(w)<\gamma_1(x)\}$.  Noting that $X_{\gamma(x){\subdown}}\subset W$, let $R= W \setminus X_{\gamma(x){\subdown}}.$
We compute $\DelT(X_{\gamma(x){\subdown}})$ from $\DelT(W)$ by iteratively removing the points of $R$, using the standard algorithm of \cite{Devillers1999} to repair the Delaunay triangulation after each removal.  

The number of vertex removals required to obtain $\DelT(X_{\gamma(x){\subdown}})$ from $\DelT(W)$ is at most the number of insertions required for the subsequent traversal of the vertical line containing $\gamma(x)$, because each vertex removed must be reinserted.  Thus, in all instances, this strategy increases $\zeta(X)$ by at most a factor of two, compared to the unoptimized algorithm.  Further, the next example shows that this strategy can lead to an asymptotic reduction in $\zeta(X)$ on a certain family of inputs.   

\begin{example}
If the product partial order on $\R^2$ restricts to a total order on $\gamma(X)$, then $\zeta=O(|X|)$ for the optimized strategy described above, but $\zeta=O(|X|^2)$ for the unoptimized strategy.  On the other hand, if the elements of $X$ are pairwise incomparable, then $\zeta=O(|X|^2)$ for both strategies.  
\end{example}

\subparagraph{Reducing the number of insertions and deletions.} 
In fact, to compute the conflict pairs first appearing on the vertical line containing $\gamma(x)$, it is not necessary to explicitly remove all points of $R$, but only a subset $Q\subset R$.  This idea leads to an additional optimization, which we call the \deff{local algorithm}, that can further reduce $\zeta(X)$.  To distinguish the approaches, we call the optimized algorithm described above the \deff{non-local algorithm}.

The local algorithm is illustrated in \cref{fig:geometric_optimization}.  The algorithm proceeds as follows:  We maintain a Delaunay triangulation $T$, to be modified by point removals and insertions.  First, we compute $T=\DelT(W\cup \{x\})$ from $\DelT(W)$ via the Bowyer-Watson algorithm.  We next iteratively remove from $T$ the neighbors of $x$ that belong to $R$, in arbitrary order, until $x$ has no neighbor in $R$.  Letting $Q$ denote the set of removed points, after these removals, we have $T=\DelT((W\cup\{x\})\setminus Q)$.  We then remove $x$ from $T$.  Finally, we insert the points of $Q\cup \{x\}$ into $T$ in increasing order of $\gamma_2$-value, noting the BW-conflicts that arise.  Each BW-conflict $(\sigma, z)$ such that $x\in \sigma\cup\{z\}$ is recorded as a conflict pair.

\begin{figure}[h]
    \centering
    \includegraphics[scale=0.6, page=1]{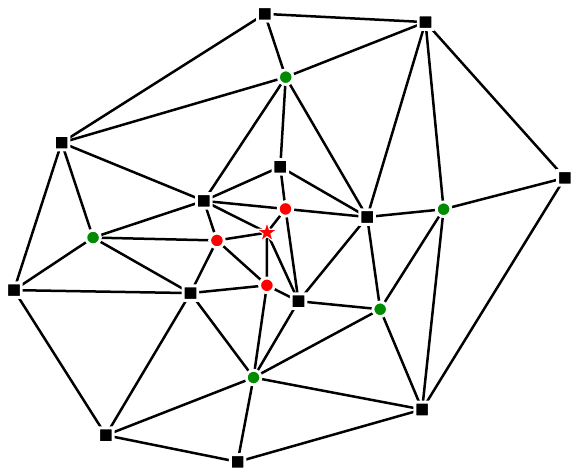}%
    \hfill
    \includegraphics[scale=0.6, page=2]{alg_visualization}
  \caption{
After inserting $x$ into $\DelT(W)$ to obtain $\DelT(W\cup\{x\})$, the first phase of the local algorithm iteratively removes points of $R=W\setminus X_{\gamma(x){\subdown}}$ that are Delaunay neighbors of $x$, 
until there are no such neighbors.  The left figure shows $\DelT(W\cup \{x\})$ and  the right figure shows $\DelT(W\cup \{x\} \setminus Q)$, where $Q$ is the set of removed points.  The point $x$ is shown as a red star, while the points of $Q$, $R\setminus Q$, and $X_{\gamma(x){\subdown}}$ are shown as red circles, green circles, and squares, respectively. 
}
\label{fig:geometric_optimization}
\end{figure}

The correctness of this algorithm hinges on the fact that Delaunay triangulations are local, in the sense of the following lemma.  For a simplicial complex $K$ and vertex $v\in K$, the \deff{neighborhood} of $v$ in $K$, denoted $\NN(v,K)$, is the set of vertices incident to $v$ in the 1-skeleton of $K$, together with $v$ itself.  The \deff{star} of $v$ in $K$, denoted $\St(v, K)$, is the set of simplices of $K$ that contain $v$.

\begin{lemma}\label{lem:local-star-star}
  For  $Z\subset\R^d$ a finite point set in general position and $z\in Z$, let $\NN_z=\NN(z, \DelT(Z))$.
For any $Y$ such that $\NN_z \subset Y \subset Z$, we have
  \begin{equation*}
    \St(z, \DelT(\NN_z)) = \St(z, \DelT(Y)).
  \end{equation*}
\end{lemma}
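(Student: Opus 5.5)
The plan is to prove the two inclusions between $\St(z, \DelT(\NN_z))$ and $\St(z, \DelT(Y))$ directly from the empty-circumsphere definition of $\DelT$, using convexity of the Voronoi cell of $z$.

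\emph{The inclusion $\St(z,\DelT(Y)) \subset \St(z,\DelT(\NN_z))$.} First, $\DelT(Z)$ restricted to $Y$-simplices behaves monotonically: any simplex $\sigma\subset Y$ with a $Z$-empty circumsphere has a $Y$-empty one. So $\St(z,\DelT(Z))\subset \St(z,\DelT(Y))\subset\St(z,\DelT(\NN_z))$, where the second inclusion uses $\NN_z\subset Y$ and again emptiness with respect to a smaller set. Strictly, we need simplices of $\St(z,\DelT(Y))$ to have vertices in $\NN_z$ for the latter to make sense. So it suffices to prove the single equality
\[
\St(z,\DelT(\NN_z))=\St(z,\DelT(Z)),
\]
since applying it to $Y$ (whose neighborhood of $z$ is then shown to be $\NN_z$) sandwiches everything.

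\emph{The key step: $\Vor(z,\NN_z)=\Vor(z,Z)$.} The inclusion $\supseteq$ is trivial because $\NN_z\subset Z$. For $\subseteq$, suppose $w\in\Vor(z,\NN_z)\setminus\Vor(z,Z)$ and pick any $c\in\Vor(z,Z)$, for instance $c=z$. Move along the segment from $c$ to $w$. Both $c$ and $w$ lie in the convex set $\Vor(z,\NN_z)$, so the whole segment does. Take the first point $u$ where the segment leaves $\Vor(z,Z)$. Then $u$ lies on the bisector of $z$ and some $y\in Z\setminus\NN_z$, and $u$ still lies in $\Vor(z,Z)$. Using general position and a generic choice of $w$ (the set of bad $w$ is open, so we may perturb), $u$ lies in the relative interior of the facet $\Vor(z,Z)\cap\Vor(y,Z)$. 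This means $\{z,y\}$ is a Delaunay edge of $Z$, so $y\in\NN_z$, a contradiction. This step is where the real content lies, and it is the main obstacle. The difficulty is making the genericity/perturbation argument clean. Alternatively, one can argue that $\Vor(z,Z)$ is the intersection of halfspaces indexed only by its facets, and each facet-defining $y$ is a Delaunay neighbor by general position.

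\emph{From Voronoi cells to stars.} A simplex $\sigma\ni z$ lies in $\DelT(A)$, for $A$ equal to $\NN_z$ or $Z$, iff $\bigcap_{v\in\sigma}\Vor(v,A)\neq\emptyset$. Equivalently, there is a point $c\in\Vor(z,A)$ equidistant from all of $\sigma$ with no point of $A$ closer. This holds iff $c\in\Vor(z,A)$ and $\|c-v\|=\|c-z\|$ for all $v\in\sigma$. By the key step, the Voronoi cells of $z$ coincide, so $c\in\Vor(z,Z)$ iff $c\in\Vor(z,\NN_z)$. Hence $\sigma\in\DelT(Z)$ iff $\sigma\in\DelT(\NN_z)$, given that $\sigma\subset\NN_z$. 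Every $\sigma\in\St(z,\DelT(Z))$ has vertices in $\NN_z$ by definition of $\NN_z$. Conversely, $\sigma\in\St(z,\DelT(\NN_z))$ has vertices in $\NN_z$ automatically. This gives $\St(z,\DelT(\NN_z))=\St(z,\DelT(Z))$.

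For $Y$, we have $\Vor(z,\NN_z)\supseteq\Vor(z,Y)\supseteq\Vor(z,Z)$, and these are all equal by the key step. Running the same argument with $Y$ in place of $Z$ (allowing $\sigma\subset Y$; nonemptiness of the cell intersection and the $c$-characterization still apply) gives $\St(z,\DelT(Y))=\St(z,\DelT(Z))$. This completes the proof.
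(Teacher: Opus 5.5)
Your proof is correct and follows essentially the same route as the paper. You first establish $\Vor(z,\NN_z)=\Vor(z,Y)=\Vor(z,Z)$, and then use that a circumsphere of a simplex containing $z$ is $A$-empty exactly when its center lies in $\Vor(z,A)$. The genericity/perturbation step you flag as the main obstacle is unnecessary (the paper simply asserts the cell equality): your exit point $u$ already satisfies $\|u-y\|=\|u-z\|\le\|u-a\|$ for all $a\in Z$. So the sphere centered at $u$ through $z$ and $y$ is $Z$-empty, and $\{z,y\}\in\DelT(Z)$ follows directly, with no need for $u$ to lie in the relative interior of a facet.
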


\begin{proof}
We claim that $\Vor(z,\NN_z)=\Vor(z,Y)=\Vor(z,Z)$. Indeed, \[\Vor(z,Z)=\bigcap_{y\in \NN_z} \{w\in \R^d\mid \|w-z\|\leq \|w-y\|\}=\Vor(z,\NN_z).\]
Since $\NN_z \subset Y \subset Z$, we have $\Vor(z,Z)\subset \Vor(z,Y)\subset \Vor(z,\NN_z)$, and the claim follows.

If $\sigma\in\St(z, \DelT(Y))$, then there is a $Y$-empty circumsphere $S$ of $\sigma$.  Since $\NN_z\subset Y$, in fact $S$ is also $\NN_z$-empty. To establish that $\sigma\in\St(z, \DelT(\NN_z))$, it remains to check that $\sigma\subset\NN_z$. Because $S$ is a $Y$-empty circumsphere of $\sigma$ and $z\in \sigma$, its center $c$ lies in $\Vor(z, Y) = \Vor(z, Z)$.  Thus, $S$ is also $Z$-empty, so~$\sigma\subset\NN_z$.

Conversely, if $\sigma\in\St(z, \DelT(\NN_z))$, then there is a circumsphere $S$ of $\sigma$ that is $\NN_z$-empty.  Letting $c$ denote the center of $S$, we have $c\in \Vor(z, \NN_z)=\Vor(z, Y)$.  Thus, $S$ is also $Y$-empty, so $\sigma\in\St(z,\DelT(Y))$.
\end{proof}

\begin{proposition}\label{lem:local-reinsert} 
The local algorithm correctly computes all conflict pairs.
\end{proposition}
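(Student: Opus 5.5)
We fix $x\in X$ and restrict attention to the vertical line containing $\gamma(x)$, i.e.\ indices $p=(\gamma_1(x),b)$ with $b\ge\gamma_2(x)$, since the horizontal lines are symmetric. The conflict pairs first appearing on this line are exactly the pairs $(\sigma,z)$ that are BW-conflicts for $X_p$ with $z$ the point of $X_{p\subup}\setminus X_p$ and $x\in\sigma\cup\{z\}$. Indeed, pairs with $x\notin\sigma\cup\{z\}$ already arise on earlier lines, because $X_p\setminus\{x\}$ equals a sublevel set at first coordinate $\gamma_1(x){}^-$. The same is true for pairs where $z=x$ at the start of the line, namely at $p=\gamma(x){\subdown}$. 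Hence it suffices to show two things. First, for every $p$ on this line, the star of $x$ in the triangulation maintained by the local algorithm agrees with $\St(x,\DelT(X_p))$ (once $x$ is inserted). Second, the BW-conflicts at the moment each point $z$ is inserted agree with those of $X_p$ whenever $x\in\sigma\cup\{z\}$.

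\textbf{Step 1.} After the removal phase we have $T=\DelT(V)$ with $V=(W\cup\{x\})\setminus Q$, and $x$ has no neighbor in $R\setminus Q$. So $\NN_x\coloneqq\NN(x,\DelT(V))\subset X_{\gamma(x){\subdown}}\cup\{x\}=X_{\gamma(x)}$. Since $\NN_x\subset X_{\gamma(x)}\subset V$, \cref{lem:local-star-star} gives $\St(x,\DelT(X_{\gamma(x)}))=\St(x,\DelT(V))$. I would then prove by induction along the line that, with $V_b\coloneqq V\cup X_{(\gamma_1(x),b)}$ (the current point set after reinserting the points of $Q$ with $\gamma_2$-value at most $b$), we have $X_{(\gamma_1(x),b)}\subset V_b$ and $\NN(x,\DelT(V_b))\subset X_{(\gamma_1(x),b)}$. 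The second claim holds because the points of $V_b\setminus X_{(\gamma_1(x),b)}$ lie in $R\setminus Q$, and such points never become neighbors of $x$. Then \cref{lem:local-star-star} again yields $\St(x,\DelT(V_b))=\St(x,\DelT(X_{(\gamma_1(x),b)}))$.

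\textbf{Step 2 (main obstacle).} We must justify the invariant that points of $R\setminus Q$ never become neighbors of $x$ as points of $Q$ are reinserted. My plan is to argue via Voronoi cells: a point $w$ is a neighbor of $x$ in $\DelT(V_b)$ iff $\Vor(x,V_b)\cap\Vor(w,V_b)\neq\emptyset$. Adding points only shrinks $\Vor(x,\cdot)$ and $\Vor(w,\cdot)$, so if $w$ was not a neighbor of $x$ in $\DelT(V)$, it is not one in $\DelT(V_b)\supset$-sets either. Concretely, since $V\subset V_b$, the cells only shrink, and disjointness is preserved.

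\textbf{Step 3.} BW-conflicts upon inserting $z$ into $\DelT(V_b)$ with $x\in\sigma$ are precisely those $d$-simplices in $\St(x,\DelT(V_b))$ whose circumsphere contains $z$. By Step 1 these coincide with the corresponding simplices for $X_{(\gamma_1(x),b)}$. When $z=x$ is inserted first into $\DelT(V\setminus\{x\})$, the conflicts are the simplices destroyed by $x$; these are the $d$-simplices whose circumsphere contains $x$, which are exactly the simplices meeting $\Vor(x,V)$. I would show these are the same as for $X_{\gamma(x){\subdown}}$, again via Voronoi cells and \cref{lem:local-star-star} applied to the star of $x$ after insertion, since the destroyed simplices are faces of the link of the new star. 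Finally, points inserted later than the last element of $Q$ are handled by the unmodified traversal, so every conflict pair involving $x$ on the line is recorded.
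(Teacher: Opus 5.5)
Your Steps 1 and 2, and the part of Step 3 handling the points $z\in Q$, are correct and match the paper. Your $V_b$ and $X_{(\gamma_1(x),b)}$ are the paper's $L_k$ and $N_k$. The invariant $\St(x,\DelT(V_b))=\St(x,\DelT(X_{(\gamma_1(x),b)}))$ is derived exactly as there, from \cref{lem:local-star-star} together with the fact that adding points cannot make two existing vertices adjacent.

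The genuine gap is the case of the conflict pairs $(\sigma,x)$. The simplices destroyed when $x$ is inserted into $\DelT(V\setminus\{x\})$ are $d$-simplices that do not contain $x$. They are not in the star of $x$, and they are not faces of its $(d-1)$-dimensional link either; only their vertices lie there. So \cref{lem:local-star-star} says nothing about them directly. Equality of the new stars gives you, via Bowyer--Watson, only that the two cavities (in $\DelT(V\setminus\{x\})$ and in $\DelT(X_{\gamma(x)}\setminus\{x\})$) are the same region with the same vertex set. Such a region admits many triangulations.

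The paper closes this with an inclusion. A $d$-simplex destroyed in the local computation has a $(V\setminus\{x\})$-empty circumsphere containing $x$. That sphere is then also $(X_{\gamma(x)}\setminus\{x\})$-empty, so the simplex is destroyed in $\DelT(X_{\gamma(x)}\setminus\{x\})$ as well. Two triangulations of the same region, one contained in the other, coincide.

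Your phrase ``the simplices meeting $\Vor(x,V)$'' is correct only if read as ``whose circumcenter lies in $\Vor(x,V)$''. With that reading, a Voronoi argument in your spirit also works:
\begin{itemize}
\item For $c\in\Vor(x,V)$, every nearest point $y$ of $c$ in $V\setminus\{x\}$ is a neighbor of $x$ in $\DelT(V)$. To see this, shrink the ball about $c$ through $y$, keeping $y$ on the boundary, until $x$ reaches it.
\item Hence $y\in X_{\gamma(x)}$. So for $c\in\Vor(x,V)=\Vor(x,X_{\gamma(x)})$ (by the proof of \cref{lem:local-star-star}), the nearest points of $c$ in $V\setminus\{x\}$ and in $X_{\gamma(x)}\setminus\{x\}$ coincide.
\item In either computation, a simplex destroyed by $x$ has its circumcenter in this cell, and its vertices are exactly these nearest points.
\end{itemize}
Therefore the two sets of destroyed simplices agree.

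You also misdescribe how the local algorithm treats $R\setminus Q$. It never reinserts these points; they stay in $T$ throughout, and no ``unmodified traversal'' follows the reinsertion of $Q$. What you need is that no $z\in R\setminus Q$ forms a conflict pair $(\sigma,z)$ with $x\in\sigma$ on this line. This includes points whose $\gamma_2$-values fall between those of points of $Q$. It follows at once from your invariant: such a $\sigma$ would lie in $\St(x,\DelT(V_b))$, so its circumsphere is $V_b$-empty, whereas $z\in V_b$.
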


\begin{proof} 
    It suffices to show that for any $x\in X$, the set of conflict pairs first recorded while traversing the vertical line containing $\gamma(x)$ is the same for both algorithms.  Note that for either algorithm and any such conflict pair $(\sigma, z)$, we have $x\in \sigma\cup \{z\}$; for the local algorithm, this holds by definition, while for the non-local algorithm, it holds because if $x\not\in \sigma\cup \{z\}$, then the conflict pair $(\sigma, z)$ would have been recorded along a previous vertical line.   
   
   Let us write $R= \{z_1 <_2 z_2 <_2 \cdots <_2 z_m\}.$
    For $k\in \{0,\dots,m\}$, let $N_k$ and $L_k$ be the point sets maintained by the 
    non-local and local algorithms, respectively, after the first $k$ points of $R$ have been inserted, i.e., 
    \[
    N_k = X_{ \gamma(x)} \cup \{z_1,\dots,z_{k}\},
    \qquad \textup{ and } \qquad
    L_k = N_k \cup (R\setminus Q).
    \]
    We claim that for each $k\in \{0,\ldots, m\}$, 
    \begin{equation}\label{eq:star-invariant} 
        \St(x,\DelT(N_k))=\St(x,\DelT(L_k)).  
    \end{equation}
    To prove the claim, we check that $\NN(x, \DelT(L_k)) \subset N_k$; the claim then follows from~\cref{lem:local-star-star} by taking $z = x$, $Y=N_k$ and $Z = L_k$.  Let  $w \in \NN(x, \DelT(L_k))$.   If $w \notin L_0=X_{ \gamma(x)}\cup (R\setminus Q)$, then $w \in L_k \setminus L_0 = \{z_1, \dots, z_{k}\} \subset N_k$.  
    If $w \in L_0$, then since $L_0 \subset L_k$, any  $L_k$-empty circumsphere of 
    $\{x, w\}$ is also $L_0$-empty, giving $w \in \NN(x, \DelT(L_0))$.   Thus, by the way $Q$ is constructed, we have $\NN(x, \DelT(L_0)) \cap R = \varnothing$,
    so $w \in X_{ \gamma(x)} \subset N_k$.  Thus, $\NN(x, \DelT(L_k)) \subset N_k$, as desired.

 We now use~\cref{eq:star-invariant} to establish the result.  We first consider conflict pairs of the form $(\sigma,z_k)$, where $z_k\in R$ and $x\in \sigma$.   By~\Cref{eq:star-invariant},
 $\sigma$ is a simplex in $\DelT(N_k)$ if and only if it is a simplex in $\DelT(L_k)$.  Thus, the conflict pair $(\sigma,z_k)$ is recorded by one algorithm if and only if it is recorded by the other.

It remains to consider conflict pairs of the form $(\sigma,x)$.  Let 
\[N_{-1}=X_{\gamma(x)}{\subdown}\quad \textup{ and } \quad L_{-1}=X_{\gamma(x)}{\subdown}\cup (R\setminus Q).\]
A conflict pair of the form $(\sigma,x)$ arises in the non-local algorithm exactly when $\sigma\in \DelT(N_{-1}) \setminus \DelT(N_0)$, and in the local algorithm exactly when 
 $\sigma \in \DelT(L_{-1}) \setminus \DelT(L_0)$.  
        It therefore suffices to show that
        \[\DelT(L_{-1}) \setminus \DelT(L_0) = \DelT(N_{-1}) \setminus \DelT(N_0).\] 
        Note that, by the Bowyer-Watson algorithm, $\DelT(N_{-1}) \setminus \DelT(N_0)$ and $\St(x, \DelT(N_0))$
        triangulate the same star-convex region, as do $\DelT(L_{-1}) \setminus \DelT(L_0)$
        and $\St(x, \DelT(L_0))$. 
        By~\Cref{eq:star-invariant}, we have
        \[\St(x, \DelT(N_0)) = \St(x, \DelT(L_0)),\]
        so $\DelT(N_{-1}) \setminus \DelT(N_0)$ and $\DelT(L_{-1}) \setminus \DelT(L_0)$
        triangulate the same star-convex region on the same vertex set.
        Moreover, since $N_{-1} \subset L_{-1}$, every $d$-simplex $\sigma \in \DelT(L_{-1}) \setminus \DelT(L_0)$
        lies in $\DelT(N_{-1}) \setminus \DelT(N_0)$, as its circumsphere is $(L_{-1})$-empty
        and hence $(N_{-1})$-empty. 
        This gives
        \[\DelT(L_{-1}) \setminus \DelT(L_0) \subset \DelT(N_{-1}) \setminus \DelT(N_0).\]
        Since both $\DelT(L_{-1}) \setminus \DelT(L_0)$ and $\DelT(N_{-1}) \setminus \DelT(N_0)$
        triangulate the same region, we have:
        \[\DelT(L_{-1}) \setminus \DelT(L_0) = \DelT(N_{-1}) \setminus \DelT(N_0).\]        
    This concludes the proof.
\end{proof}

\section{Topological equivalence of trifiltrations}\label{sec:top_equiv}
Our main theoretical justification for computing Delaunay-(\v Cech) trifiltrations is the following:  

\begin{theorem}\label{thm:main} The trifiltrations $\Del(\gamma)$, $\DelCech(\gamma)$, and $\Offset(\gamma)$  are all weakly equivalent.
\end{theorem}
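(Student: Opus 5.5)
We will prove \cref{thm:main} with a zigzag of objectwise weak equivalences of $\R^2\times\R^+$-indexed diagrams. The intermediate objects are two-parameter mapping telescopes and nerves, and the functorial nerve theorem of \cite{bauerUnifiedViewFunctorial2023} transports weak equivalences between covers and nerves. Since $\gamma$ takes finitely many values, every diagram involved is constant on the cells of $\Grid(\gamma)$. We may therefore restrict to the discrete index poset $\Grid(\gamma)\times\R^+$ (with $p$ ranging over the grid) and extend back at the end.

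\textbf{Step 1 (the telescope of offsets).} For $p\in\Grid(\gamma)$ and $r\geq 0$, define a \emph{Voronoi telescope} $\TelVO_{p,r}$. It is the union over all grid points $q\leq p$ of the pieces $\{q\}'\times\V(x,X_q)_r$ with $x\in X_q$, glued along the poset $\{q\le p\}$. Concretely, take the subspace of $\Delta(\{q\le p\})\times\R^d$, where $\Delta$ is the order complex, lying over chains $q_0<\dots<q_k$ and given there by $\bigcup_{x\in X_{q_0}}\V(x,X_{q_0})_r$. Since $\bigcup_x \V(x,X_q)_r=\Offset(X_q)_r$, the fiber over each chain is $\Offset(X_{q_0})_r$. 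Because $\{q\le p\}$ has maximum $p$, projecting to $\R^d$ gives a homotopy equivalence $\TelVO_{p,r}\to\Offset(X_p)_r$. I would prove this using the standard fact that a homotopy colimit over a poset with a terminal object is equivalent to the value at that object, or by a direct deformation retraction of the telescope onto the fiber over $p$. This map is natural in $(p,r)$.

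\textbf{Step 2 (nerve of the telescope cover).} Cover $\TelVO_{p,r}$ by the sets $U_x$, $x\in X_p$, where $U_x$ is the part of the telescope coming from $\V(x,X_q)_r$ over all $q$ with $x\in X_q$. Each intersection $U_{x_0}\cap\dots\cap U_{x_k}$ is again a telescope of convex sets $\bigcap_i\V(x_i,X_q)_r$, indexed by the subposet of those $q\le p$ where this intersection is nonempty. I would show that this subposet, when nonempty, is contractible; for example, it may have a maximal element or be a down-closed region with good convexity. Given that, each intersection is contractible or empty. The functorial nerve theorem then gives a natural weak equivalence $\TelVO_{p,r}\simeq\Nrv\{U_x\}$. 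The nerve contains $\sigma$ exactly when there is some $q\le p$ with $\sigma\subset X_q$ and $\bigcap_{x\in\sigma}\V(x,X_q)_r\neq\emptyset$, that is, $\sigma\in\Del(X_q)_r$ for some $q\le p$. So the nerve is $\bigcup_{q\le p}\Del(X_q)_r$.

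\textbf{Step 3 (compare with $\Del(\gamma)_{p,r}$).} By \cref{prop:contains3}, this union is a subcomplex of $\Del(\gamma)_{p,r}$. I expect the main obstacle to be proving that the inclusion $\bigcup_{q\le p}\Del(X_q)_r\hookrightarrow\Del(\gamma)_{p,r}$ is a natural weak equivalence. The extra simplices are the $\sigma\in\Incr$ whose witness contains $\max_1(\sigma)$ or $\max_2(\sigma)$ strictly inside.

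My first approach would be to collapse these extra simplices in a way compatible with the filtration. For each extra simplex, pair $\sigma$ with $\sigma\setminus\{x\}$ or $\sigma\cup\{x\}$, so that paired simplices share the same smallest witness, and hence the same radius $\omega_\sigma$ and the same value of $\gamma$. This yields a discrete gradient whose critical cells are exactly the union, as in Bauer and Edelsbrunner \cite{bauer2017morse}. The convexity and intermediate-value arguments of \cref{Lem:Conflict_Technicality_1} and of the claim in Case 3 of \cref{lem:quasi-purity} should supply the geometric input. Here the smallest witness from \cref{Prop:Min_Rad_Witness} determines which point, $x$ or $y$, to pair along.

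A cleaner alternative would be to enlarge the cover in Step 2 to incremental Voronoi regions. These are the centers of spheres that are empty of $X_q\setminus\{x,y\}$ and of radius at most $r$, which would make the nerve equal to $\Del(\gamma)_{p,r}$ directly. One would then need contractibility of their intersections, which again follows from convexity of the witness-center sets $W$.

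\textbf{Step 4 (Delaunay--\v{C}ech).} For $\DelCech(\gamma)$, I would follow the same collapse argument applied to $m_\sigma$, as in \cite{bauer2017morse}. Alternatively, I would compare $\DelCech(\gamma)_{p,r}$ with the \v{C}ech-type nerve of the telescope by balls restricted to witness regions. In either case, combining the two steps gives a zigzag of natural weak equivalences, which proves \cref{thm:main}.
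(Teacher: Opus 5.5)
\textbf{There is a genuine gap in Steps~2--3.} Your Step~1 is fine. The argument breaks in Step~2, and Step~3 then tries to prove a false statement.

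\emph{Why Step~2 fails.} In your telescope, the fiber of $U_x$ over the open simplex of a chain $q_0<\dots<q_k$ is $\V(x,X_{q_0})_r$. Over the face obtained by deleting $q_0$, it is the smaller set $\V(x,X_{q_1})_r$. So $U_x$ is neither closed nor open, and the functorial nerve theorem does not apply. This non-closedness is exactly why your nerve came out as $\bigcup_{q\le p}\Del(X_q)_r$. That complex is in general \emph{not} equivalent to $\Offset(X_p)_r$.

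\emph{Counterexample.} Take $d=1$, $X=\{0,1,2\}$, $\gamma(0)=(1,1)$, $\gamma(2)=(2,2)$, $\gamma(1)=(3,3)$, $p=(3,3)$ and $r\ge 1$. The union consists of the edge $\{0,2\}$ from $\Del(X_{(2,2)})_r$ and the edges $\{0,1\},\{1,2\}$ from $\Del(X_{(3,3)})_r$. This is a circle, whereas $\Offset(X_p)_r=[-r,2+r]$. In your model, every nonempty intersection of the $U_x$ is contractible in this example, so the failure is due purely to non-closedness.

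\emph{Consequence for Step~3.} $\Del(\gamma)_{p,r}$ additionally contains the $2$-simplex $\{0,1,2\}$. Its witness is the circumsphere $\{0,2\}$ of $\{0,2\}$, which has $1$ inside, and this simplex fills the circle. So the inclusion in Step~3 is not a weak equivalence. No acyclic matching with critical set $\bigcup_{q\le p}\Del(X_q)_r$ can exist; here the single extra simplex cannot be paired. The simplices with $\max_1$ or $\max_2$ strictly inside their witness are not an artifact to collapse away; they are what makes the model correct. Step~4 rests on the same comparison and inherits the problem.

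\textbf{The missing idea.} Index the telescope by $q\in\R^2$ itself, $\TelOffset_{p,r}=\bigcup_x\Ball_r(x)\times[\gamma(x),p]$. Cover it by the \emph{closed} sets $\TelVor(x)_{p,r}=\operatorname{cl}\bigl(\bigcup_{\gamma(x)\le q\le p}\V(x,X_q)_r\times\{q\}\bigr)$.

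\emph{Why the closure gives exactly $\Del(\gamma)$.} Because $q$ varies continuously, the closure at $q=\gamma(\sigma)$ only adds limits from indices just below $\gamma(\sigma)$. There the ambient point sets are $X_{\gamma(\sigma)}\setminus\{x,y\}$, $X_{\gamma(\sigma)}\setminus\{y\}$ and $X_{\gamma(\sigma)}\setminus\{x\}$ (for the vertices of $\sigma\setminus\{x,y\}$, for $x=\max_1(\sigma)$, and for $y=\max_2(\sigma)$, respectively). A common point of these enlarged fibers is exactly the center of a witness of radius at most $r$. Hence $\Nrv(\TelVor)=\Del(\gamma)$ on the nose. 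This is the precise form of your ``cleaner alternative'', and it makes Step~3 unnecessary.

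\emph{Goodness.} Slide the $\R^2$-coordinate of an intersection down to $\gamma(\sigma)$; this works because $\bigcap_i\V(x_i,X_q)_r$ shrinks as $q$ grows.

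\emph{Caution about the obvious fix.} Simply closing up your $U_x$ in the order-complex model would not achieve this. Every $q_0<q$ spans an edge with $q$, so the closure imports $\V(x,X_{\gamma(x)})_r$ over every simplex above $\gamma(x)$. The resulting nerve is that of ``birth-time'' Voronoi balls, a different and in general much larger complex than $\Del(\gamma)$.

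\emph{For $\DelCech(\gamma)$.} The paper also avoids collapses here and uses the Blaser--Brun product trick. It covers $\bigcup_x\Ball_r(x)\times\TelVor(x)_{p,\infty}\subset\R^d\times\R^d\times\R^2$ by the products $\Ball_r(x)\times\TelVor(x)_{p,\infty}$. Their nerve is $\DelCech(\gamma)$ because intersections of products factor. Projection onto the first factor is a pointwise homotopy equivalence by star-convexity of the fibers.
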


This section is devoted to the proof of \cref{thm:main}.  We begin by recalling the definition of weak equivalence; in homotopy theory, this is a standard notion of ``homotopical equivalence'' of diagrams of topological spaces.  
For a poset $\mathcal{P}$, a \deff{$\mathcal{P}$-indexed filtration} is a collection of topological spaces $F=(F_p)_{p \in \mathcal{P}}$ 
such that $F_p \subset F_q$, whenever $p \leq q$.  
Equivalently, $F$ is a functor from the poset category of $\mathcal{P}$ to the category of topological spaces $\Top$.   Note that each of $\Del(\gamma)$, $\DelCech(\gamma)$, and $\Offset(\gamma)$ is an $\R^2\times [0,\infty)$-indexed filtration, where $\R^2\times [0,\infty)$ is given the product partial order.  

\begin{definition}\label{def:top_equiv}\mbox{}
\begin{itemize}
\item[(i)] Given  $\mathcal{P}$-indexed filtrations $F,F'$, a \deff{natural transformation} $\psi\colon F\to F'$ is a collection 
 of continuous maps $(\psi_p:F_p \to F'_p)_{p \in \mathcal{P}}$ such that the following diagram commutes for all $p\leq q$:
 \[
\begin{tikzcd}
    F_p \arrow[r, hook] \arrow[d,"\psi_p",swap] & F_q \arrow[d,"\psi_q"] \\
    F'_p \arrow[r, hook] & F'_q
\end{tikzcd} \]
\item[(ii)] We call $\psi$ a \deff{pointwise homotopy equivalence} if each $\psi_p$ is a homotopy equivalence.  
\item[(iii)] We say that $F$ and $F'$ are \deff{weakly equivalent} and write $F\simeq F'$ if there exists a zigzag of pointwise homotopy equivalences connecting $F$ and $F'$:
\[
\begin{tikzcd}[row sep=small, column sep=small]
 & G^1 \arrow[dr] \arrow[dl] & & G^3  \arrow[dr] \arrow[dl] & & G^k  \arrow[dr] \arrow[dl] & \\
F  & & G^2 & & \cdots & & F'.
\end{tikzcd}
\]
\end{itemize}
\end{definition}

To show that  $\Del(\gamma)\simeq \Offset(\gamma)$, we characterize $\Del(\gamma)$ as the nerve of a cover of a certain trifiltration $\TelOffset$ in $\R^{d}\times\R^{2}$, which we call the \emph{telescopic offset trifiltration}.   A functorial version of the nerve theorem \cite{bauerUnifiedViewFunctorial2023} then implies that $\Del(\gamma)\simeq \TelOffset$.  In addition, a simple deformation retraction argument gives that $\TelOffset\simeq \Offset(\gamma)$.  Hence, $\Del(\gamma)\simeq \Offset(\gamma)$.  Similar constructions have previously been used to establish weak equivalence of 1-parameter and 2-parameter filtrations in~\cite{cavanna2015geometric, corbet2023computing}.  Our proof that $\DelCech(\gamma)\simeq \Offset(\gamma)$ is similar, and in particular, uses a description of $\DelCech(\gamma)$ as a nerve; this extends a description of 1-parameter Delaunay-\v Cech filtrations as nerve due to Blaser and Brun \cite{blaser2022relative}, which we review in \cref{Sec:WE_DelCech_1_param}.

\subparagraph{Functorial nerve theorem for closed, semi-algebraic sets.}
A \deff{cover}  of a set  $X\subset \R^d$ is a set $C = \{C(i)\}_{i \in I} $ of subsets of $X$ whose union is $X$.
The \deff{nerve} of $C$ is the simplicial complex \[ \Nrv(C) = \left\{ \sigma \subset I \textup{ finite and non-empty} :  \bigcap\limits_{i\in \sigma}{ C(i) } \neq \emptyset \right\}.\]
We say the cover $C$ is \deff{good} if every intersection of cover elements is either empty or contractible. 

These definitions extend naturally to $\mathcal{P}$-indexed filtrations, as follows: For $F$ a $\mathcal{P}$-indexed filtration, a \deff{cover} of $F$ is a collection of $\mathcal{P}$-indexed filtrations $\mathcal C = \{\mathcal C{(i)}\}_{i \in I} $, such that for each $p\in \mathcal P$, $\{\mathcal C(i)_p\}_{i\in I}$ is a cover of $F_p$.  We say that $\mathcal C $ is \deff{good} if for each $p$, $\{\mathcal C(i)_p\}_{i\in I}$ is a good cover of $F_p$. We define $\Nrv(\mathcal C)$, the \emph{nerve of $\mathcal C$}, to be the $\mathcal{P}$-indexed filtration given by 
$\Nrv(\mathcal C)_p=\Nrv(\{\mathcal C(i)_p\}_{i\in I})$, with structure maps the inclusions.

We will use the following functorial version of the nerve theorem for semi-algebraic sets; see also  \cite{bauerUnifiedViewFunctorial2023} for a thorough treatment of other variants of the nerve theorem.  

\begin{theorem}\label{Thm:Nerve_Semi_algebraic}
Let $\mathcal C=\{\mathcal C(i)\}_{i\in I}$ be a finite, good cover of a $\mathcal{P}$-indexed filtration $F$ such that 
for each $p\in \mathcal {P}$ and $i\in I$, $\mathcal{C}(i)_p$ is closed and semi-algebraic.  Then $F\simeq \Nrv(\mathcal C)$.  
\end{theorem}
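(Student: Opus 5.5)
The plan is to reduce the theorem to the functorial nerve theorem of \cite{bauerUnifiedViewFunctorial2023}, in its version for \emph{closed} covers. That version requires some tameness hypothesis so that closed covers behave like open ones; for finite closed covers of subsets of Euclidean space, triangulability of semi-algebraic sets supplies it. Concretely, I would first fix a finite subposet $\mathcal P'\subset\mathcal P$ through which everything factors, or else argue pointwise over all of $\mathcal P$. Since $I$ is finite, every finite intersection $\bigcap_{i\in\sigma}\mathcal C(i)_p$ is again closed and semi-algebraic. By the semi-algebraic triangulation theorem, for each fixed $p$ we can triangulate $F_p$ compatibly with all the sets $\mathcal C(i)_p$. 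These sets then become subcomplexes, and the covers become covers by subcomplexes of a CW (indeed simplicial) complex.

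Next I would replace the closed cover by an open one without changing homotopy types. Take the open regular neighborhoods $U(i)_p$ of $\mathcal C(i)_p$ in (a subdivision of) the triangulation. Each intersection of the $U(i)_p$ deformation retracts onto the corresponding intersection of the $\mathcal C(i)_p$, and the nerves coincide. The problem is that the triangulations for different $p$ need not be compatible, so this open thickening is not obviously functorial.

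To handle functoriality I would avoid thickening and instead use the standard homotopy-colimit argument directly. Form the functor $\Delta F$ sending $(p,\sigma)$, with $\sigma\in\Nrv(\mathcal C)_p$, to $\bigcap_{i\in\sigma}\mathcal C(i)_p$, and consider its homotopy colimit over the face poset of the nerve. This is a Bousfield--Kan/Segal construction natural in $p$, and it comes with two natural projection maps:
\[\operatorname{hocolim}_\sigma \bigcap_{i\in\sigma}\mathcal C(i)_p\;\longrightarrow\; F_p,\qquad \operatorname{hocolim}_\sigma \bigcap_{i\in\sigma}\mathcal C(i)_p\;\longrightarrow\;|\Nrv(\mathcal C)_p|.\]
Both maps are natural in $p$ because inclusions $F_p\subset F_q$ induce maps of the diagrams.

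The second map is a pointwise homotopy equivalence because the cover is good: each term is contractible, and the homotopy colimit of a diagram of contractible spaces over the face poset is homotopy equivalent to the nerve. The first map is a pointwise homotopy equivalence once the cover $\{\mathcal C(i)_p\}$ is a cover by subcomplexes of a CW structure. This is exactly what the semi-algebraic triangulation gives for each fixed $p$, and the equivalence is a pointwise statement, so no compatibility across $p$ is needed. Composing the two maps yields the zigzag $F\leftarrow \operatorname{hocolim}\to\Nrv(\mathcal C)$, hence $F\simeq\Nrv(\mathcal C)$.

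The main obstacle I expect is this projection-to-union step for closed sets. I would cite the CW-subcomplex case of the projection lemma from \cite{bauerUnifiedViewFunctorial2023}, and the main care needed is checking that semi-algebraic triangulation indeed realizes all finitely many $\mathcal C(i)_p$ simultaneously as subcomplexes.
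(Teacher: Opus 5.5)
Your architecture is sound and is essentially the mechanism behind the result the paper cites. The paper's proof is a one-line appeal to Theorem 5.9 of \cite{bauerUnifiedViewFunctorial2023}, a functorial nerve theorem for closed covers whose extra hypothesis is that the relevant inclusions have the homotopy extension property. The paper verifies that hypothesis by citing \cite{delfs1984separation} for inclusions of closed semialgebraic sets. Your zigzag $F\leftarrow\operatorname{hocolim}\to\Nrv(\mathcal C)$ is the same kind of argument: the nerve leg is handled by goodness and the union leg pointwise. The real difference is that you verify the union leg by triangulating each $F_p$ so that the $\mathcal C(i)_p$ become subcomplexes, instead of using cofibrations.

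That triangulation step has a genuine gap. The semialgebraic triangulation theorem realizes $F_p$ as a finite simplicial complex with the $\mathcal C(i)_p$ as subcomplexes only when $F_p$ is compact. For an unbounded closed semialgebraic set you only get $F_p\cong|K|\setminus|L|$, where $K$ is a finite complex and $L$ a subcomplex (the part at infinity), with $\mathcal C(i)_p\cong|K_i|\setminus|L|$. These are not subcomplexes of any CW structure you have produced. This is not a corner case but exactly the case that matters. By the remark following the theorem, the compact case is already Theorem 4.10 of \cite{bauerUnifiedViewFunctorial2023}. The theorem is needed here for the cover $\TelDC$ in the Delaunay--\v{C}ech argument, whose members contain unbounded Voronoi cells.

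The gap can be closed within your framework:
\begin{enumerate}
\item Compactify via the semialgebraic homeomorphism $w\mapsto w/\sqrt{1+\|w\|^2}$ onto the open unit ball.
\item Triangulate the closure of the image of $F_p$ compatibly with the closures of the images of the $\mathcal C(i)_p$ and with its intersection with the unit sphere. This gives the $K$, $K_i$, $L$ above, all subcomplexes.
\item Use the standard deformation retraction of $|K|\setminus|L|$ onto the full subcomplex of the barycentric subdivision spanned by barycenters of simplices not in $L$. It moves points only within closed simplices, so it preserves every $|K_i|\setminus|L|$ and every intersection, and hence lifts to the homotopy colimit.
\item Apply two-out-of-three to reduce the union leg to the compact subcomplex case.
\end{enumerate}
This deformation need not be natural in $p$, since, as you note, only a pointwise statement is required. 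Alternatively, use the homotopy extension property as the paper does; it handles closed non-compact sets directly and avoids triangulations altogether.
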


\begin{proof}
This follows immediately from \cite[Theorem 5.9]{bauerUnifiedViewFunctorial2023}, using the fact that an inclusion of closed semi-algebraic sets satisfies the homotopy extension property \cite[Theorem 4]{delfs1984separation}.
\end{proof}

\begin{remark} \cref{Thm:Nerve_Semi_algebraic} is proven in \cite[Theorem 4.10]{bauerUnifiedViewFunctorial2023} under the additional assumption that each $\mathcal{C}(i)_p$ is compact, via a somewhat different argument.
\end{remark}

\subsection{Weak equivalence of Delaunay and offset trifiltrations}

\subparagraph{Telescopic offset trifiltration.} For $(p,r)\in\R^{2}\times \R^+$, the \deff{telescopic offset} $\TelOffset_{p,r} \subset \R^{d}\times \R^2$ is given by
\begin{equation}
  \TelOffset_{p,r} \coloneqq \bigcup_{q\leq p} \Offset(\gamma)_{q,r} \times \Set{q}.
\end{equation}
Equivalently, \[\TelOffset_{p,r}= \bigcup_{x\in X} \Ball_r(x)\times [\gamma(x),p],\] where $ [\gamma(x),p]=\{q\in \R^2\mid \gamma(x)\leq q\leq p\}$.  Varying $p$ and $r$, we obtain the \deff{telescopic offset trifiltration}
$\TelOffset\colon \R^{2}\times\R^{+} \to \Top$.  
Let $\psi\colon \TelOffset\to \Offset(\gamma)$ be the natural transformation induced by the projection of $\R^d\times \R^2$ onto the first $d$ coordinates. 
\begin{lemma}\label{prop:teloffset}
The natural transformation $\psi$ is a pointwise homotopy equivalence.
\end{lemma}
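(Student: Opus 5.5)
For each fixed $(p,r)$ I will exhibit a section $s_{p,r}\colon \Offset(\gamma)_{p,r}\to \TelOffset_{p,r}$ given by $s_{p,r}(w)=(w,p)$, and show that $s_{p,r}\circ\psi_{p,r}$ is homotopic to the identity of $\TelOffset_{p,r}$. Together with $\psi_{p,r}\circ s_{p,r}=\mathrm{id}$, this makes $\psi_{p,r}$ a homotopy equivalence (indeed, $\TelOffset_{p,r}$ deformation retracts onto $\Offset(\gamma)_{p,r}\times\{p\}$). Naturality of $\psi$ is automatic, since projection commutes with inclusions; only the pointwise statement needs proof.

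First I check that $s_{p,r}$ is well defined. If $w\in \Offset(\gamma)_{p,r}=\Offset(X_p)_r$, then $w\in \Ball_r(x)$ for some $x$ with $\gamma(x)\le p$. Hence $(w,p)\in \Ball_r(x)\times[\gamma(x),p]\subset \TelOffset_{p,r}$.

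Next I define the straight-line homotopy $H_t(w,q)=(w,(1-t)q+tp)$ for $t\in[0,1]$. The key step is to verify that $H_t$ stays inside $\TelOffset_{p,r}$. Take $(w,q)\in\TelOffset_{p,r}$; then there is some $x$ with $w\in\Ball_r(x)$ and $\gamma(x)\le q\le p$. Since $q\le (1-t)q+tp\le p$ coordinatewise, we get $\gamma(x)\le(1-t)q+tp\le p$. So the same $x$ witnesses $H_t(w,q)\in \Ball_r(x)\times[\gamma(x),p]$. The map $H$ is continuous, $H_0=\mathrm{id}$, and $H_1=s_{p,r}\circ\psi_{p,r}$. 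I use the equivalent description $\TelOffset_{p,r}=\bigcup_x \Ball_r(x)\times[\gamma(x),p]$, which was noted in the paper.

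The main point to watch is this containment check. It relies on the boxes $[\gamma(x),p]$ all sharing the common top corner $p$, which makes each box convex and "upward closed toward $p$". Once that observation is made, the remaining steps are routine.
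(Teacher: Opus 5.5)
Your proof is correct and is essentially the paper's argument: the same homotopy inverse $w\mapsto (w,p)$ and the same straight-line homotopy in the $\R^2$-coordinate toward $p$. Your explicit checks are a welcome addition. You verify that the inverse is well defined and that the homotopy stays inside $\TelOffset_{p,r}$, using that every box $[\gamma(x),p]$ has top corner $p$. The paper leaves both steps implicit.
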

\begin{proof}
For $(p,r)\in \R^2\times \R^+$, $\psi_{p,r}$ is the map given by $(w,q)\mapsto w$.  This map is a homotopy equivalence with homotopy inverse $f$ given by $w\mapsto (w,p)$; indeed, $f \circ \psi_{p,r}$ is homotopic to the identity on $\TelOffset_{p,r}$ via the straight-line homotopy $H((w,q),t)=(w,tq+(1-t)p)$.
\end{proof}

\subparagraph{A cover by telescopic Voronoi balls.}
We next define our cover of $\TelOffset$.  
For $x\in X$, we define the \deff{telescopic Voronoi cell} 
\[
  \Tel(x) \coloneqq \operatorname{cl}(\bigcup\limits_{ \gamma(x)\leq q} {\Vor(x, X_{q}) \times \{q\}}),
\]
where $\operatorname{cl}$ denotes the closure.  
For $(p,r)\in \R^2\times \R^+$, we define the \deff{telescopic Voronoi ball} 
 \begin{align*}
  \TelVor(x)_{p,r} &\coloneqq \operatorname{cl}(\bigcup\limits_{ \gamma(x)\leq q\leq p} {\V(x, X_{q})_r \times \{q\}})\\
 &=  \Tel(x) \cap (\Ball_r(x)\times \{s\in \R^2\mid s\leq p\}).
 \end{align*}
See \Cref{fig:telvorcells} for an illustration in the case $d=1$.  
Allowing $p$ and $r$ to vary,  the spaces $\TelVor(x)_{p,r}$ assemble into a trifiltration $\TelVor(x)\colon \R^2\times \R^+\to \Top$.  We let $\TelVor=\{\TelVor(x)\}_{x\in X}$.

    \begin{figure}[h!]
    \centering
    \vspace{1em}
    \includegraphics[width=0.65\textwidth, page=1, trim=20pt 30pt 20pt 20pt]{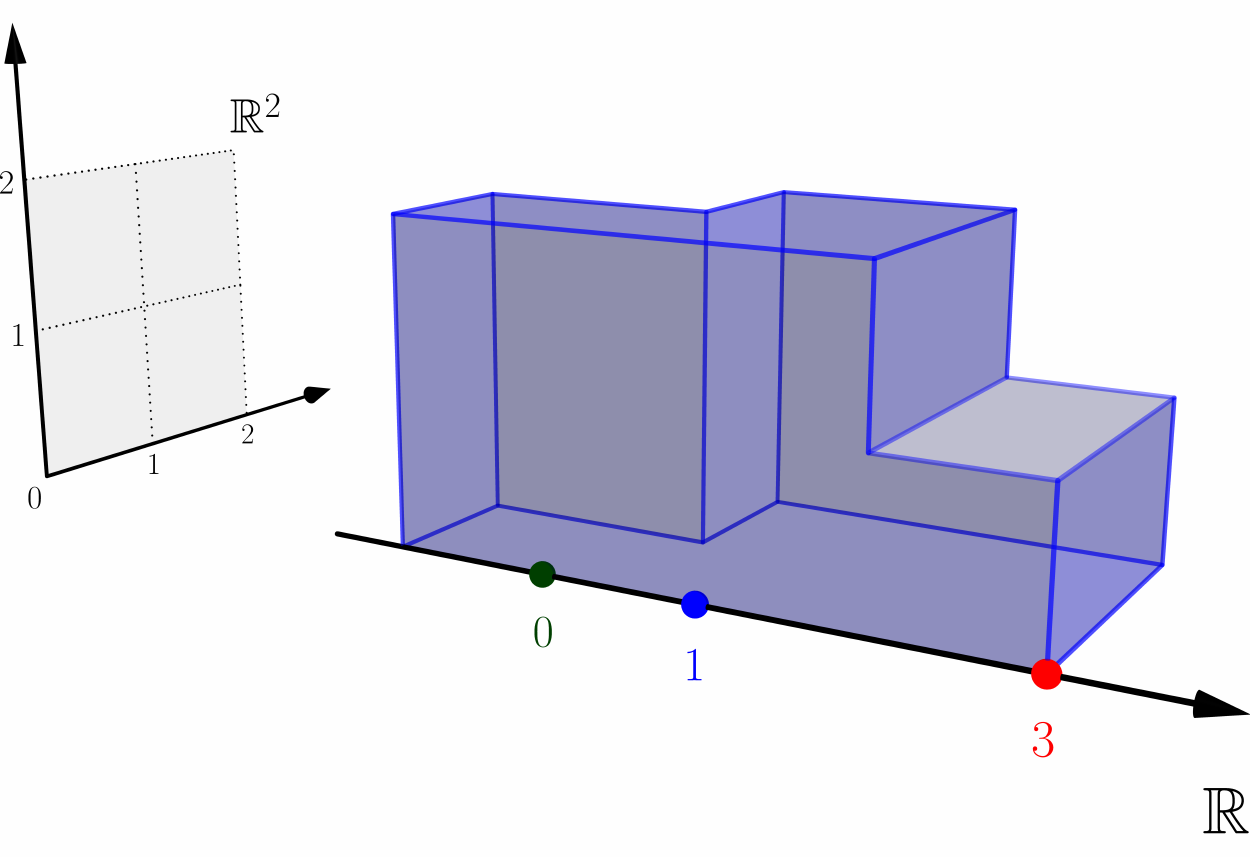} \\
    \vspace{1em}
%
%
%
%
%
    \caption{For $X=\{0,1,3\}$ with $\gamma(0) = (1,0)$, $\gamma(1) = (0,0)$, and $\gamma(3) = (0,1)$, the telescopic Voronoi ball $\TelVor(1)_{(2,2),2}$ is shown as the blue shaded polyhedron.}
        \label{fig:telvorcells}
\end{figure}

\begin{lemma} \label{lem:nrvd}
         The set $\TelVor$ is a good cover of $\TelOffset$. 
\end{lemma}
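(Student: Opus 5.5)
The plan is to fix $(p,r)\in\R^2\times\R^+$ and check three things: containment, covering, and contractibility of non-empty intersections. Throughout, I use that $\TelOffset_{p,r}=\bigcup_{x\in X}\Ball_r(x)\times[\gamma(x),p]$ is a finite union of closed sets, hence closed.

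\emph{Containment.} For $\gamma(x)\le q\le p$ we have $\V(x,X_q)_r\times\{q\}\subset \Ball_r(x)\times[\gamma(x),p]$. So the union defining $\TelVor(x)_{p,r}$ lies in the closed set $\TelOffset_{p,r}$, and therefore so does its closure.

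\emph{Covering.} Let $(w,q)\in\TelOffset_{p,r}$. Then $q\le p$ and $w\in\Ball_r(x)$ for some $x$ with $\gamma(x)\le q$, so $X_q\neq\emptyset$. Let $x'$ be a point of $X_q$ nearest to $w$. Then $\|w-x'\|\le\|w-x\|\le r$ and $w\in\Vor(x',X_q)$. Since $\gamma(x')\le q\le p$, this gives $(w,q)\in\V(x',X_q)_r\times\{q\}\subset\TelVor(x')_{p,r}$.

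\emph{Contractibility.} Let $\sigma\subset X$ with $I:=\bigcap_{x\in\sigma}\TelVor(x)_{p,r}\ne\emptyset$. Every point $(w,q)\in I$ satisfies $\gamma(\sigma)\le q\le p$, since the set $\{q: \gamma(x)\le q\le p\}$ is closed. I will deformation retract $I$ onto its fiber over $q_0:=\gamma(\sigma)$ using the straight-line homotopy
\[
H((w,q),t)=(w,(1-t)q+tq_0).
\]
The key monotonicity is this: if $\gamma(x)\le q'\le q$, then $X_{q'}\subset X_q$, so $\Vor(x,X_q)\subset\Vor(x,X_{q'})$, and hence $\V(x,X_q)_r\subset\V(x,X_{q'})_r$. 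Every $q_t=(1-t)q+tq_0$ satisfies $q_0\le q_t\le q$. So $H_t$ maps each non-closed union $U(x)=\bigcup_{\gamma(x)\le q\le p}\V(x,X_q)_r\times\{q\}$ into itself, for every $x\in\sigma$ simultaneously. Since $H_t$ is the restriction of a continuous map on $\R^d\times\R^2$, it maps $\operatorname{cl}U(x)$ into $\operatorname{cl}U(x)$. Hence $H$ is a homotopy $I\times[0,1]\to I$ from the identity to a retraction onto the fiber $I\cap(\R^d\times\{q_0\})$, and this fiber is non-empty because $H_1(I)$ lies in it.

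The main obstacle is the effect of the closure on the fiber over $q_0$. One must show this fiber equals $\bigcap_{x\in\sigma}\V(x,X_{q_0})_r$, which is an intersection of convex polytopes and balls, hence convex and contractible. Here $q_0$ is a grid point, and $X_q$ is constant on the half-open grid cells $[a_i,a_{i+1})\times[b_j,b_{j+1})$ (and $q\mapsto\V(x,X_q)_r$ is constant there). The only points of $U(x)$ with $q$-coordinate near $q_0$ and at least $q_0$ come from cells whose closed corner is $\le q_0$ and that meet $\{q\ge q_0\}$; for such $q$ one has $X_q\supseteq X_{q_0}$, so their fibers are contained in $\V(x,X_{q_0})_r$. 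That fiber is closed, so taking limits adds nothing new at $q_0$. The same argument shows that $\operatorname{cl}U(x)$ agrees fiberwise with the expression $\Tel(x)\cap(\Ball_r(x)\times\{s\le p\})$ used in the definition.

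Combining the three parts, each $\TelVor(x)_{p,r}$ is a subset of $\TelOffset_{p,r}$, these sets cover it, and every non-empty intersection is contractible. This proves that $\TelVor$ is a good cover of $\TelOffset$.
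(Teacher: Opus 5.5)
Your overall route matches the paper's: nearest-point covering, then a straight-line retraction of $I$ onto its slice over $q_0=\gamma(\sigma)$, then convexity. Your containment and covering arguments are fine. The gap is in the step you flagged yourself: the slice of $I$ over $q_0$ is in general \emph{not} $\bigcap_{x\in\sigma}\V(x,X_{q_0})_r$.

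You only consider limits of points of $U(x)$ whose parameter is $\ge q_0$. But if $\gamma_i(x)<q_{0,i}$, then $U(x)$ also contains points whose parameter approaches $q_0$ from below in coordinate $i$. There $X_q$ omits the vertex of $\sigma$ attaining $q_{0,i}$, so $\V(x,X_q)_r$ is strictly larger, and the closure carries this larger set onto the grid lines through $q_0$ and onto $q_0$ itself. For example, take $X=\{0,1\}\subset\R$, $\gamma(0)=(0,0)$, $\gamma(1)=(1,1)$, $\sigma=X$, $p=(1,1)$, $r=1$. Then $\TelVor(0)_{p,r}=[-1,1]\times[0,1]^2$ and $\TelVor(1)_{p,r}=[\tfrac12,2]\times\{(1,1)\}$, so $I=[\tfrac12,1]\times\{(1,1)\}$, while $\bigcap_{x\in\sigma}\V(x,X_{q_0})_r=\{\tfrac12\}$.

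This enlargement is essential, not a technicality. It is what produces the ``inside or on $S$'' witness condition of \cref{def:biincremental}, and the paper's proof of \cref{thm:deleq} depends on it. The paper's own proof of this lemma names the retraction target $C_{\gamma(\sigma)}\times\{\gamma(\sigma)\}$, so it glosses over the same point.

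The same oversight undermines your invariance step. Your monotonicity argument only shows that $H_t$ maps $U(x)\cap\{q\ge q_0\}$ into itself, not all of $U(x)$. So continuity only gives invariance of $\operatorname{cl}\bigl(U(x)\cap\{q\ge q_0\}\bigr)$, and $I$ is not contained in the intersection of these sets. In the example, $(1,(1,1))\in I$ lies outside $\operatorname{cl}\bigl(U(0)\cap\{q\ge q_0\}\bigr)=[-1,\tfrac12]\times\{(1,1)\}$.

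Both points are repaired by working cell by cell.
\begin{itemize}
\item For each half-open grid cell $c$ whose lower-left corner is $\ge\gamma(x)$, $X_q$ equals a fixed set $X_c$ for all $q\in c$. Hence $\operatorname{cl}U(x)$ is the finite union of the sets $\V(x,X_c)_r\times\operatorname{cl}(c\cap\{s\le p\})$.
\item Invariance: suppose $(w,q)$ lies in the piece for $c$ and $q\ge q_0$. Because $q_0$ is a grid point with $q_0\ge\gamma(x)$, each $q_t$ lies in the closure of a cell $c'$ whose corner is $\le$ that of $c$ and still $\ge\gamma(x)$. Since $X_{c'}\subset X_c$, we get $w\in\V(x,X_{c'})_r$. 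So $H_t$ preserves $\operatorname{cl}U(x)\cap\{q\ge q_0\}$, and hence preserves $I$.
\item The slice at $q_0$: the slice of $\operatorname{cl}U(x)$ is the union of $\V(x,X_c)_r$ over admissible cells $c$ with $q_0\in\bar c$. All of these are contained in the one whose corner $q^*(x)$ is componentwise smallest. So the slice equals $\V(x,X_{q^*(x)})_r$, where $q^*(x)_i$ is the predecessor of $q_{0,i}$ in $\im\gamma_i$ if $\gamma_i(x)<q_{0,i}$, and $q_{0,i}$ otherwise.
\end{itemize}
Thus the slice of $I$ is $\bigcap_{x\in\sigma}\V(x,X_{q^*(x)})_r$. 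This is an intersection of Voronoi balls taken relative to different sublevel sets, but it is still convex. With this correction your argument goes through.
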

\begin{proof}
    For each $q \leq p$, $\{\V(x, X_{q})_{r}\}_{x\in X}$ forms a (good) cover of $\Offset(\gamma)_{q,r}$, so $\{\TelVor(x)_{p,r} \}_{x\in X}$ is a cover of $\TelOffset_{p,r}$.  Thus,  $\TelVor$ is a cover of $\TelOffset$. 
    
    It remains to show that this cover is good.  We need to show that if $\bigcap_{i=1}^k \TelVor_{p,r}(x_i)$ is non-empty, then it is contractible.  Let $\sigma=\{x_1,\ldots,x_k\}$ and for $q\in \R^2$, let $C_q= \bigcap_{i=1}^k \V(x_i,X_q)_r$.  Note that \[\gamma(\sigma)=\min \{q\in \R^2\mid C_q\ne \emptyset\}.\]  Moreover, if $\gamma(\sigma)\leq q\leq q'\leq p$, then $X_{q}\subset X_{q'}$ and therefore $\V(x_i,X_{q'})_r\subset \V(x_i,X_{q})_r$ for each $i$.  Hence, $C_{q'}\subset C_{q}$.  It follows that if 
    \[(w,q')\in \bigcap_{i=1}^k \TelVor(x_i)_{p,r},\] then 
    \[(w,q)\in \bigcap_{i=1}^k \TelVor(x_i)_{p,r}.\]  
    Therefore, we have a deformation retraction $H$ of  $\bigcap_{i=1}^k \TelVor(x_i)_{p,r}$ onto $C_{\gamma(\sigma)}\times \{\gamma(\sigma)\}$ given by $ H((w,q),t)=(w,(1-t)q+t\gamma(\sigma))$.   Note that $C_{\gamma(\sigma)}$ is contractible, since it is a non-empty intersection of Voronoi balls, hence convex.  Thus, $\bigcap_{i=1}^k \TelVor(x_i)_{p,r}$ is contractible.
\end{proof}

In what follows, we identify simplices of $\Nrv(\TelVor)$ with their  corresponding subsets of $X$. 
\begin{lemma}\label{thm:deleq}
The trifiltrations $\Nrv(\TelVor)$ and  $\Del(\gamma)$ are equal.  
\end{lemma}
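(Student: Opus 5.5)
We must show that for every $(p,r)$, a non-empty $\sigma\subset X$ satisfies $\bigcap_{x\in\sigma}\TelVor(x)_{p,r}\neq\emptyset$ if and only if $\sigma\in\Incr$, $\gamma(\sigma)\le p$ and $\omega_\sigma\le r$. As in the proof of \cref{lem:nrvd}, set $C_q=\bigcap_{x\in\sigma}\V(x,X_q)_r$. The plan is to reduce the telescopic intersection to the single slice $q=\gamma(\sigma)$, and then to identify that slice with witnesses of $\sigma$.

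\textbf{Step 1 (reduction to one slice).} The monotonicity $C_{q'}\subset C_q$ for $\gamma(\sigma)\le q\le q'$ was shown in \cref{lem:nrvd}. Using it, I expect the intersection $\bigcap_{x}\TelVor(x)_{p,r}$ to be non-empty exactly when $\gamma(\sigma)\le p$ and the closure-slice at $q=\gamma(\sigma)$ is non-empty. The closure matters here. The slice of $\Tel(x)$ at $q=\gamma(\sigma)$ is not just $\Vor(x,X_{\gamma(\sigma)})$: it also contains limits coming from $q'$ slightly below $\gamma(\sigma)$ in one coordinate. That is impossible, however, since $x\in\sigma$ requires $q'\ge\gamma(x)$. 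So limits only come from $q'\ge\gamma(\sigma)$ approaching $\gamma(\sigma)$ from above. Because $\Grid(\gamma)$ is discrete and the sets $X_{q'}$ are locally constant from above (right-continuous), for $q'$ close to $q$ with $q'\ge q$ we get $X_{q'}=X_q$.

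This needs care, because $\Tel(x)$ includes every $q\ge\gamma(x)$, and at the slice $q=\gamma(\sigma)$ the points $\max_1(\sigma)$ and $\max_2(\sigma)$ are only just entering. I will therefore compute the closure of $\Tel(x)$ at the slice $q$ as $\Vor(x,X_q)$ when $q$ is in general position. The subtle case is when $q_1=\gamma_1(y)$ for some $y$ with $\gamma_2(y)\le q_2$. Points $q'$ with $q'_1<q_1$ and $q'\ge\gamma(x)$ exist only if $\gamma_1(x)<q_1$. In that case the limit includes $\Vor(x,X_q\setminus\{y\})$. So the slice of the closure at $q$ is the union of $\Vor(x,X_{q''})$ over the ``left/down/down-left limits'' $X_{q''}$ that are still $\ge\gamma(x)$.

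\textbf{Step 2 (slice to witnesses).} At $q=\gamma(\sigma)$ with $a=\max_1(\sigma)$ and $b=\max_2(\sigma)$, the closure slice of $\Tel(a)$ is $\Vor(a,X_q\setminus\{b\})$ (or $\Vor(a,X_q)$ if $a=b$), and symmetrically for $b$. For $z\in\tau=\sigma\setminus\{a,b\}$, the slice is a union that contains $\Vor(z,X_q\setminus\{a,b\})$, the largest piece. The natural intersection point is therefore a center $c$ that is equidistant from all points of $\tau$, nearer to $a,b$ than to $\tau$, and whose sphere is $(X_q\setminus\{a,b\})$-empty. This is precisely the center of a witness $S$ from \cref{def:biincremental}. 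The intersection with the balls $\Ball_r(x)$ then requires the radius of the witness sphere through $\tau$ to be at most $r$, with $a,b$ inside or on it; that is, $\omega_\sigma\le r$ after taking the minimal witness from \cref{Prop:Min_Rad_Witness}. The degenerate cases are handled separately: $\tau=\emptyset$ via \cref{Prop:Empty_Tau_Delaunay}, and $a=b$ directly.

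\textbf{Main obstacle.} The hard step is to describe the closure slices exactly and to check that the intersection over $z\in\tau$ is the witness-center set, not something larger. Concretely, I must show that if $c$ lies in the slice of $\Tel(z)$ via a piece $\Vor(z,X_{q''})$ with $q''\ne q\subdownleft$, then combined with $c\in\Tel(a)\cap\Tel(b)$ the sphere is still a witness. I would attempt this directly: for $c$ common to all the slices, the inequalities $\|c-a\|,\|c-b\|\le\|c-z\|$ come from $a,b$'s cells. These should force the emptiness condition with respect to $X_q\setminus\{a,b\}$ for each $z$. Failing that, I would instead use the monotonicity of Step 1 at nearby interior points $q'\ge q$.
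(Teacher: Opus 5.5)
Your plan follows the paper's proof. You reduce to the slice $q=\gamma(\sigma)$ using the monotonicity from \cref{lem:nrvd}, then identify the intersection of the slices with the set of centers of witnesses of radius at most $r$. The paper phrases the slice condition as $c\in\V(z,X_q)_r$ for all $q<\gamma(\sigma)$ in suitable ranges and appeals to closedness. It does not spell out why the closure adds nothing further at $\gamma(\sigma)$, and your explicit description of the closure slices addresses exactly that point.

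The step you list as the main obstacle closes in one line.
\begin{itemize}
\item Take any $q''$ near $\gamma(\sigma)$, or indeed near any $q'\geq\gamma(\sigma)$, with $q''\geq\gamma(z)$. By injectivity, each $w\in X_{\gamma(\sigma)}\setminus\{a,b\}$ has $\gamma(w)$ strictly below $\gamma(\sigma)$ in both coordinates, so $X_{q''}\supseteq X_{\gamma(\sigma)}\setminus\{a,b\}$.
\item Voronoi cells shrink as points are added, so every piece $\Vor(z,X_{q''})$ lies in $\Vor(z,X_{\gamma(\sigma)}\setminus\{a,b\})$. Hence the slice of $\Tel(z)$ at $\gamma(\sigma)$ is exactly this single cell, and no case analysis over pieces is needed. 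The same observation also justifies your Step 1 reduction in the presence of the closure.
\item For $\tau\neq\emptyset$, let $c$ lie in $\bigcap_{z\in\tau}\Vor(z,X_{\gamma(\sigma)}\setminus\{a,b\})\cap\Vor(a,X_{\gamma(\sigma)}\setminus\{b\})\cap\Vor(b,X_{\gamma(\sigma)}\setminus\{a\})$ and in the balls. Then $c$ is equidistant from $\tau$ at some distance $\rho\le r$, and the corresponding sphere is $(X_{\gamma(\sigma)}\setminus\{a,b\})$-empty.
\item Moreover $\|c-a\|,\|c-b\|\le\rho$, because $\tau$ is contained in both $X_{\gamma(\sigma)}\setminus\{b\}$ and $X_{\gamma(\sigma)}\setminus\{a\}$. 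Conversely, every witness center of radius at most $r$ lies in all these sets.
\end{itemize}

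Separately, delete the sentence in Step 1 asserting that limits come only from $q'\geq\gamma(\sigma)$. It is false: for $z\in\tau$, and for $a$ in the second coordinate, limits also come from below. Taken literally, it would make every slice $\Vor(x,X_{\gamma(\sigma)})$ and reduce the nerve condition to $\sigma\in\DelT(X_{\gamma(\sigma)})$, which is wrong for conflict simplices. Your correction in the following paragraph is the right statement.
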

\begin{proof}
For $\sigma\subset X$, let $x=\max_1(\sigma)$,  $y=\max_{2}(\sigma)$, and $\tau=\sigma\setminus \{x,y\}$.  
By definition, for $(p,r)\in \R^2\times \R^+$, we have $\sigma \in \Del(\gamma)_{p,r}$ if and only if  $\gamma(\sigma)\leq p$ and there exists a $(X_{\gamma(\sigma)}\setminus \{x,y\})$-empty circumsphere $S$ of $\tau$ with radius at most $r$ such that $x$ and $y$ lie inside or on $S$.  

We first consider the case that $x=y$.  Note that if $\sigma=\{x\}$, then $\sigma\in \Del(\gamma)_{p,r}$ exactly when $\gamma(\sigma)\leq p$, and also $\sigma\in \Nrv(\TelVor)_{p,r}$ exactly when $\gamma(\sigma)\leq p$.  Therefore, $\sigma\in  \Del(\gamma)_{p,r}$ if and only if  $\sigma\in \Nrv(\TelVor)_{p,r}$.  Now assume that  $\dim(\sigma)>0$.  Then $\sigma \in \Del(\gamma)_{p,r}$ is witnessed by a sphere $S$ with center $c$ and radius at most $r$ if and only if 
\[c\in \left(\bigcap_{z\in \tau} \V(z,X_{q})_r\right)\cap \V(x,X_{\gamma(x)})_r\]
for all $\gamma(\tau) \leq q< \gamma(x)$.  Since the elements of $\TelVor$ are closed sets, this holds if and only if 
\begin{equation}\label{eq:Tel_Vor}
(c,\gamma(x))\in \bigcap_{z\in \sigma} \TelVor(z)_{p,r}.
\end{equation}
 Moreover, $\sigma\in \Nrv(\TelVor_{p,r})$ if and only if there exists some $c$ satisfying \eqref{eq:Tel_Vor}.  Thus $\sigma\in  \Del(\gamma)_{p,r}$ if and only if $\sigma\in \Nrv(\TelVor_{p,r})$.

We next consider the case that $x\ne y$.  Note that if $\sigma=\{x,y\}$, then \cref{Prop:Empty_Tau_Delaunay} implies that $\sigma\in \Del(\gamma)_{p,r}$ exactly when $\sigma\in \Del(X_{p})_r$.  In addition, $\sigma\in \Nrv(\TelVor_{p,r})$ exactly when $\sigma\in \Del(X_{p})_r$.   Therefore, $\sigma\in  \Del(\gamma)_{p,r}$ if and only if  $\sigma\in \Nrv(\TelVor)_{p,r}$.  Now assume that  $\dim(\sigma)>1$.  Then $\sigma \in \Del(\gamma)_{p,r}$ is witnessed by a sphere $S$ with center $c$ and radius at most $r$ if and only if 
\[c\in \left(\bigcap_{z\in \tau} \V(z,X_{q})_r\right)\cap \V(x,X_{q'})_r \cap  \V(y,X_{q''})_r\]
for all $q$, $q'$, and $q''$ with 
\begin{align*}
\gamma(\tau) &\leq q< \gamma(\sigma),\\
(\gamma_1(x),\gamma_2(\tau))&\leq q'< \gamma(\sigma),\\
(\gamma_1(\tau),\gamma_2(y))&\leq q''< \gamma(\sigma).
\end{align*}
As above, since the elements of $\TelVor$ are closed sets, this holds if and only if $(c,\gamma(\sigma))\in \bigcap_{z\in \sigma} \TelVor(z)_{p,r}$, which in turn holds if and only if $\sigma \in  \Nrv(\TelVor_{p,r})$.  Thus, $\sigma\in  \Del(\gamma)_{p,r}$ if and only if $\sigma\in \Nrv(\TelVor_{p,r})$.
\end{proof}
\begin{proposition}\label{Prop:Del_Weak_Equiv} We have $\Del(\gamma)\simeq \Offset(\gamma)$.
\end{proposition}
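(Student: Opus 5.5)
The plan is to assemble the chain of weak equivalences already prepared in this section:
\[
\Del(\gamma) \;=\; \Nrv(\TelVor) \;\simeq\; \TelOffset \;\simeq\; \Offset(\gamma).
\]
The equality on the left is exactly \cref{thm:deleq}. The equivalence on the right is given by the natural transformation $\psi\colon \TelOffset\to\Offset(\gamma)$, which \cref{prop:teloffset} shows is a pointwise homotopy equivalence. A single pointwise homotopy equivalence is a zigzag of length one, so $\TelOffset\simeq\Offset(\gamma)$ in the sense of \cref{def:top_equiv}.

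For the middle equivalence I will apply \cref{Thm:Nerve_Semi_algebraic} to the cover $\TelVor=\{\TelVor(x)\}_{x\in X}$ of $\TelOffset$. Three of its hypotheses are immediate. The cover is good by \cref{lem:nrvd}. It is finite because $X$ is finite. It is a cover of $\TelOffset$ as a $\R^2\times\R^+$-indexed filtration, since for $(p,r)\le(p',r')$ we have $\TelVor(x)_{p,r}\subset\TelVor(x)_{p',r'}$, directly from the formula $\TelVor(x)_{p,r}=\Tel(x)\cap(\Ball_r(x)\times\{s\le p\})$ together with monotonicity of balls in $r$.

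The only hypothesis needing an argument, and the one I expect to be the main (if mild) obstacle, is that each $\TelVor(x)_{p,r}$ is closed and semi-algebraic. Closedness holds because $\Tel(x)$ is defined as a closure and $\Ball_r(x)\times\{s\le p\}$ is closed. For semi-algebraicity I will note that for each $q$, $\Vor(x,X_q)$ is cut out by the finitely many linear inequalities $\|w-x\|^2\le\|w-y\|^2$, $y\in X_q$. The set $X_q$ depends on $q$ only through which of the finitely many constraints $\gamma(y)\le q$ hold, so $\bigcup_{\gamma(x)\le q}\Vor(x,X_q)\times\{q\}$ is the set of $(w,q)$ such that $\gamma(x)\le q$ and, for every $y\in X$, either $\gamma(y)\not\le q$ or $\|w-x\|^2\le\|w-y\|^2$. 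This is a finite Boolean combination of polynomial inequalities in $(w,q)$, hence semi-algebraic. Closures of semi-algebraic sets are semi-algebraic, and intersecting with the semi-algebraic set $\Ball_r(x)\times\{s\le p\}$ preserves this. So \cref{Thm:Nerve_Semi_algebraic} gives $\TelOffset\simeq\Nrv(\TelVor)$.

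Finally, since weak equivalence is by definition the existence of a zigzag of pointwise homotopy equivalences, it is transitive: concatenating zigzags gives a zigzag. Composing the three links then yields $\Del(\gamma)\simeq\Offset(\gamma)$.
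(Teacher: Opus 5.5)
Your proposal is correct and follows the paper's proof: it uses the same chain $\Del(\gamma)=\Nrv(\TelVor)\simeq\TelOffset\simeq\Offset(\gamma)$, obtained from \cref{thm:deleq}, from \cref{Thm:Nerve_Semi_algebraic} together with \cref{lem:nrvd}, and from \cref{prop:teloffset}. The only difference is in how you justify that $\TelVor(x)_{p,r}$ is semi-algebraic: you write the pre-closure set as a finite Boolean combination of polynomial inequalities and use that closures of semi-algebraic sets are semi-algebraic, whereas the paper simply notes that $\Tel(x)$ is a finite union of convex polyhedra.
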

\begin{proof}
For each $x\in X$, $\Tel(x)$ is a finite union of convex polyhedra, hence is semi-algebraic.   Thus, for each $(p,r)\in \R^2\times \R^+$, the space  $\TelVor(x)_{p,r}$ is an intersection of semi-algebraic sets, hence itself semi-algebraic.  Moreover, $\TelVor(x)_{p,r}$ is closed by construction.  Since $\TelVor$ is a good cover of $\TelOffset$  by \cref{lem:nrvd}, \cref{Thm:Nerve_Semi_algebraic} implies that $\TelOffset\simeq  \Nrv(\TelVor)$.  We also have $\mathcal O(\gamma) \simeq \TelOffset$ by \cref{prop:teloffset} and $\Nrv(\TelVor)= \Del(\gamma)$ by \cref{thm:deleq}, which yields the result.  
\end{proof}

\subsection{Weak equivalence of Delaunay-\v Cech and offset filtrations over $\R$}\label{Sec:WE_DelCech_1_param}
The weak equivalence of the 1-parameter Delaunay-\v Cech and offset filtrations $\DelCech(X)$ and $\Offset(X)$ was first proven by Bauer and Edelsbrunner \cite{bauer2017morse}, using a discrete Morse theory argument.  Subsequently, Blaser and Brun \cite{blaser2022relative} gave a different proof of this result via a clever nerve construction, and also extended this to a relative version.  We next present a streamlined variant of this proof (in the absolute case only), as our proof that $\DelCech(\gamma)\simeq \Offset(\gamma)$ draws heavily on this.

For $x\in X$ and $r\geq 0$, let $\TelVO(x)_r= \Ball_r(x) \times \Vor(x,X)\subset \R^d\times \R^d$, and let 
\begin{align*}
\mathcal{OU}_r &= \bigcup_{x\in X} \mathcal{U}(x)_r \\
               &= \{(w,v)\in \R^d\times \R^d \mid w\in \Offset(X)_r,\, v\in \Vor(x,X) \\
               &\phantom{{}= \qquad \qquad \qquad} \textup{for some }x\in X\textup{ with }\|x-w\|\leq r\}.
\end{align*}
Varying $r$, we obtain filtrations $\mathcal{OU}=(\mathcal{OU}_r)_{r\in \R^+}$ and $\TelVO(x)=(\TelVO(x)_r)_{r\in \R^+}$ for all  $x\in X$.  
The projection of $\R^d\times \R^d$ onto the first factor induces a natural transformation $\psi\colon \mathcal{OU}\to \Offset(X)$.

\begin{lemma}\label{Lem:Star_Convex}
For each $w\in \Offset(X)_r$, the level set $\psi^{-1}_r(w)$ is star-convex with center $(w,w)$.  
\end{lemma}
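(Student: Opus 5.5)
We first make the fiber explicit. By definition, $\psi_r^{-1}(w)=\{w\}\times V_w$, where
\[
V_w=\bigcup_{x\in X,\ \|x-w\|\le r}\Vor(x,X).
\]
So the claim reduces to a statement in $\R^d$: the set $V_w$ is star-convex with center $w$. Two things must be checked. First, $w\in V_w$. Second, for every $v\in V_w$ the whole segment $[w,v]$ lies in $V_w$.

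For the first point, let $x^*$ be a nearest point of $X$ to $w$. Then $w\in\Vor(x^*,X)$. Since $w\in\Offset(X)_r$, some $x\in X$ has $\|x-w\|\le r$, and therefore $\|x^*-w\|\le \|x-w\|\le r$. Hence $\Vor(x^*,X)\subset V_w$, and so $w\in V_w$.

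For the second point, take $v\in\Vor(x,X)$ with $\|x-w\|\le r$, and set $u_t=(1-t)w+tv$ for $t\in[0,1]$. Let $x_t$ be a nearest point of $X$ to $u_t$, so that $u_t\in\Vor(x_t,X)$. It suffices to prove $\|x_t-w\|\le r$. We aim for the stronger bound $\|x_t-w\|\le\|x-w\|$, and argue via power distances. Because $u_t$ is nearer to $x_t$ than to $x$,
\[
f(u_t)\le 0,\qquad\text{where } f(u)=\|u-x_t\|^2-\|u-x\|^2 .
\]
The function $f$ is affine in $u$. Since $v\in\Vor(x,X)$, we have $f(v)\ge 0$. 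Writing $f(u_t)=(1-t)f(w)+tf(v)\le 0$ with $f(v)\ge0$ forces $f(w)\le 0$ whenever $t<1$. (When $t=1$ we have $u_1=v$, and we may simply choose $x_1=x$.) The inequality $f(w)\le 0$ says exactly $\|w-x_t\|\le\|w-x\|\le r$. Therefore $u_t\in\Vor(x_t,X)\subset V_w$.

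The only delicate step is this affine-function argument. It converts the condition that $u_t$ lies in the Voronoi cell of $x_t$, together with $v$ lying in the Voronoi cell of $x$, into a distance comparison at the endpoint $w$. One must also handle the edge cases $t=0$ (already covered by the first point) and $t=1$. Nothing else is needed: closedness of Voronoi cells is not even required, since each point of the segment is placed directly into some cell $\Vor(x_t,X)$ with $\|x_t-w\|\le r$.
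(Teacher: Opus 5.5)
Your proof is correct and is essentially the paper's argument. Both reduce the fiber to $\{w\}\times V$, with $V$ the union of Voronoi cells of sites within distance $r$ of $w$, and both use that ``closer to $x$ than to another site'' is a half-space condition (the difference of squared distances is affine), so it carries along the segment from $w$ to $v$; the paper applies this to every site $b$ with $\|w-b\|>r$, whereas you apply it to the nearest site $x_t$ of $u_t$, getting the slightly stronger bound $\|x_t-w\|\le\|x-w\|$ and also justifying $w\in V$, which the paper only asserts.
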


\begin{proof}
Let $A=\{x\in X\mid \|w-x\|\leq r\}$, let $B=X\setminus A$, and let \[V=\{v\in \R^d \mid v\in \Vor(x,X)\textup{ for some }x\in A\}.\]
  Then $\psi^{-1}_r(w)=\{w\}\times V$, so it suffices to show that $V$ is star-convex with center $w$.  Note that $w\in V$.  Suppose $w\ne v\in V$.   By definition, $v\in \Vor(x,X)$ for some $x\in X$ with $\|w-x\|\leq r$. Then for all $b\in B$, we have $\|w-x\|<\|w-b\|$ and $\|v-x\|\leq \|v-b\|$.  The space 
 \[H=\{h\in \R^d\mid \|h-x\|\leq \|h-b\|\}\] is a half-space, hence convex, so the entire line segment from $w$ to $v$ lies in $H$.  Thus, every point $v'$ on this line segment satisfies $\|v'-x\|\leq \|v'-b\|$ for all $b\in B$.  Since $x\in A$, it follows that $v'\in V$.  
\end{proof}

\begin{lemma}\label{Lem:PRojection_H_E_1_Parameter}
The natural transformation $\psi$ is a pointwise homotopy equivalence.
\end{lemma}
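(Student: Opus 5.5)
Our plan is to show, for each fixed $r\ge 0$, that $\psi_r\colon \mathcal{OU}_r\to \Offset(X)_r$ is a homotopy equivalence by writing down an explicit section and a fiberwise straight-line homotopy, in the same spirit as the proof of \cref{prop:teloffset}. First we check that $\psi_r$ is well defined and surjective: if $(w,v)\in\mathcal{OU}_r$ then $w\in\Offset(X)_r$ by definition. Conversely, given $w\in \Offset(X)_r$, pick $x\in X$ with $w\in\Vor(x,X)$. Then $\|w-x\|=\min_{y\in X}\|w-y\|\le r$, so $(w,w)\in \TelVO(x)_r\subset \mathcal{OU}_r$. Hence the diagonal map $f_r\colon \Offset(X)_r\to \mathcal{OU}_r$, $w\mapsto (w,w)$, is a well-defined continuous section of $\psi_r$, with $\psi_r\circ f_r=\mathrm{id}$.

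Next we define $H\colon \mathcal{OU}_r\times[0,1]\to \R^d\times\R^d$ by $H((w,v),t)=(w,(1-t)v+tw)$. By \cref{Lem:Star_Convex}, the fiber $\psi_r^{-1}(w)$ is star-convex with center $(w,w)$. Since $H$ moves points only within the fiber $\{w\}\times\R^d$ along the segment toward $(w,w)$, it takes values in $\mathcal{OU}_r$. It is continuous, satisfies $H(\cdot,0)=\mathrm{id}$, and satisfies $H(\cdot,1)=f_r\circ\psi_r$. Thus $f_r\circ \psi_r\simeq \mathrm{id}$, and $\psi_r$ is a homotopy equivalence with homotopy inverse $f_r$. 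Naturality of $\psi$ is immediate, since it is a restriction of a fixed projection and the structure maps are inclusions.

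The only real content is that the straight-line homotopy stays inside $\mathcal{OU}_r$. That is exactly the star-convexity statement of \cref{Lem:Star_Convex}, so the main obstacle is already handled. The remaining point to confirm is that the center $(w,w)$ lies in the fiber, which is the surjectivity argument above.
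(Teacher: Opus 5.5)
Your proposal is correct and follows essentially the same route as the paper: the diagonal map $w\mapsto (w,w)$ serves as homotopy inverse, and the fiberwise straight-line homotopy stays in $\mathcal{OU}_r$ by the star-convexity of \cref{Lem:Star_Convex}. Your explicit check that $(w,w)\in\mathcal{OU}_r$ (using a Voronoi cell containing $w$) is a detail the paper only notes in passing in the proof of that lemma.
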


\begin{proof}
It suffices to check that for each $r\geq 0$, the map $f\colon  \Offset(X)_r\to \mathcal{OU}_r$ given by $f(w)=(w,w)$ is a homotopy inverse to $\psi_r$.   We use \cref{Lem:Star_Convex} to obtain a homotopy $H$ from $f\circ \psi_r$ to the identity on $\mathcal{OU}_r$, as follows: $H((w,z),t)=(w,tz+(1-t)w)$.  Since $\psi_r\circ f$ is the identity map on $\Offset(X)_r$, the result follows.
\end{proof}

\begin{theorem}[\cite{bauer2017morse,blaser2022relative}]\label{Thm:1_Param_Weak_Equiv}
We have $\DelCech(X)\simeq \Offset(X)$.
\end{theorem}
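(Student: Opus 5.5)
The plan is to show that $\DelCech(X)$ is the nerve of the cover $\TelVO=\{\TelVO(x)\}_{x\in X}$ of $\mathcal{OU}$, that this cover is good, and then chain weak equivalences:
\[\DelCech(X)= \Nrv(\TelVO)\simeq \mathcal{OU}\simeq \Offset(X).\]
The last equivalence is \cref{Lem:PRojection_H_E_1_Parameter}. The middle one comes from \cref{Thm:Nerve_Semi_algebraic}, applied to the $[0,\infty)$-indexed filtration $\mathcal{OU}$ with cover $\TelVO$. Each $\TelVO(x)_r=\Ball_r(x)\times\Vor(x,X)$ is closed and semi-algebraic, being a product of a ball and a convex polyhedron, and by definition $\mathcal{OU}_r=\bigcup_x \TelVO(x)_r$. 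So the cover hypothesis of the nerve theorem holds by construction.

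\textbf{Goodness.} For $\sigma\subset X$ we have
\[\bigcap_{x\in\sigma}\TelVO(x)_r=\Bigl(\bigcap_{x\in\sigma}\Ball_r(x)\Bigr)\times\Bigl(\bigcap_{x\in\sigma}\Vor(x,X)\Bigr).\]
This is a product of two convex sets, hence convex, and so contractible whenever it is non-empty. Thus $\TelVO$ is a good cover.

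\textbf{Identifying the nerve.} We have $\sigma\in\Nrv(\TelVO)_r$ if and only if both
\[\bigcap_{x\in\sigma}\Ball_r(x)\neq\emptyset \quad\text{and}\quad \bigcap_{x\in\sigma}\Vor(x,X)\neq\emptyset.\]
The second condition says exactly that $\sigma\in\DelT(X)$, since $\DelT(X)$ is the nerve of the Voronoi cells. The first condition says there is a point within distance $r$ of every point of $\sigma$. This is equivalent to $\sigma$ lying in some closed ball of radius $r$, i.e.\ to $m_\sigma\le r$. Here $\DelCech(X)_r$ is understood as $\{\sigma\in\DelT(X)\mid m_\sigma\le r\}$, the one-parameter Delaunay-\v Cech filtration. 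Hence $\Nrv(\TelVO)_r=\DelCech(X)_r$ for every $r$, with the same inclusion maps, so the two filtrations are equal.

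\textbf{Main obstacle.} There is little difficulty; the key idea is the product cover decoupling the \v Cech condition from the Delaunay condition. The step needing the most care is that $\TelVO$ is genuinely a cover of $\mathcal{OU}$ in the filtered sense: it must be natural in $r$, which is immediate since $\Ball_r(x)$ increases with $r$. One must also check that \cref{Thm:Nerve_Semi_algebraic} applies even though the Voronoi factors are unbounded; this is fine because the theorem only requires closed semi-algebraic sets, not compact ones.
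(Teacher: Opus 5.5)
Your proposal is correct and follows the paper's proof. Both use the cover $\{\TelVO(x)\}_{x\in X}$ of $\mathcal{OU}$, note that it is pointwise closed, convex and semi-algebraic (hence good), apply \cref{Thm:Nerve_Semi_algebraic}, and finish with \cref{Lem:PRojection_H_E_1_Parameter}; your product decomposition of the intersections is exactly the nerve identification that the paper leaves as ``easily checked.''
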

\begin{proof}
By construction,  $\{\TelVO(x)\}_{x\in X}$ is a cover of $\mathcal{OU}$.  Note that for all $r\in \R^{+}$, $\TelVO(x)_r$ is closed, convex, and semi-algebraic, because it is the product of two closed, convex, and semi-algebraic sets.  Since the cover is pointwise convex, it is good. \cref{Thm:Nerve_Semi_algebraic} therefore implies that $\mathcal{OU}$ is weakly equivalent to the nerve of this cover, which is easily checked to be equal to $\DelCech(X)$.  In addition, \cref{Lem:PRojection_H_E_1_Parameter} implies that $\mathcal{OU} \simeq \Offset(X)$.  Hence, $\DelCech(X)\simeq\Offset(X)$, as desired.
\end{proof}

\subsection{Weak equivalence of the Delaunay-\v Cech and offset trifiltrations}\label{def:teldelcech}

We prove $\DelCech(\gamma)\simeq \Offset(\gamma)$, essentially by combining the proofs of \cref{Prop:Del_Weak_Equiv,Thm:1_Param_Weak_Equiv}.   For $x\in X$ and $p\in \R^2$, let 
\begin{align*}
\TelVor(x)_{p,\infty}&=\colim \TelVor(x)_{p,-}=\Tel(x)\cap (\R^d\times \{s\in \R^2\mid s\leq p\}),\\
\TelDC(x)_{p,r}& =  (\Ball_r(x)\times \TelVor(x)_{p,\infty}).
\end{align*} 
 Note that $\TelDC(x)_{p,r}\subset (\R^d\times \R^d\times \R^2)$.  Varying $p$ and $r$, we obtain a trifiltration $\TelDC(x)$.  Let $\TelDC$ be the trifiltration given by 
\begin{align*}
\mathcal O\TelDC_{p,r}&=\bigcup_{x\in X} \TelDC(x)_{p,r}\\
                          &=\{(w,v,q)\in \R^d\times \R^d\times \R^2 \mid w\in \Offset(X)_r,\ q\leq p,\\
                          &\, \quad \qquad v\in \Vor(x,X_q)\textup{ for some }x\in X_q \textup{ with }\|x-w\|\leq r\}.
                      \end{align*}

Let $\psi\colon \mathcal O\TelDC(\gamma)\to \Offset(\gamma)$ be the natural transformation induced by the projection of $\R^d\times \R^d\times \R^2$ onto the first $d$ coordinates.
\begin{proposition}\label{prop:TelVOequiv} The natural transformation $\psi$ is a pointwise homotopy equivalence.  
\end{proposition}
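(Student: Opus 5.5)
We show that for each $(p,r)$ the map $\psi_{p,r}\colon \mathcal O\TelDC_{p,r}\to \Offset(\gamma)_{p,r}=\Offset(X_p)_r$, $(w,v,q)\mapsto w$, is a homotopy equivalence. We do this by writing down a section and a fiberwise straight-line homotopy, combining \cref{prop:teloffset} with \cref{Lem:Star_Convex,Lem:PRojection_H_E_1_Parameter}.

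\textbf{Step 1: the section.} Define $f\colon \Offset(X_p)_r\to \mathcal O\TelDC_{p,r}$ by $f(w)=(w,w,p)$. This lands in $\mathcal O\TelDC_{p,r}$. Indeed, choose $x\in X_p$ with $\|x-w\|\le r$ and closest to $w$ among the points of $X_p$. Then $w\in\Vor(x,X_p)$, so $(w,p)\in\TelVor(x)_{p,\infty}$. The map $f$ is continuous, and $\psi_{p,r}\circ f=\mathrm{id}$.

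\textbf{Step 2: homotopy to the section.} We deform the identity of $\mathcal O\TelDC_{p,r}$ to $f\circ\psi_{p,r}$ in two stages.

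\emph{Stage 1} moves the $\R^2$-coordinate: $(w,v,q)\mapsto (w,v,(1-t)q+tp)$. This is the analogue of \cref{prop:teloffset}. It needs the monotonicity that if $v\in\Vor(x,X_q)$ with $x\in X_q$, then for $q\le q'\le p$ the point $v$ lies in a Voronoi cell $\Vor(x',X_{q'})$ with $\|x'-w\|\le r$. This may fail, because inserting points shrinks Voronoi cells. So I would reverse the order of the stages.

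\emph{First}, retract $v$ to $w$ while keeping $q$ fixed: $(w,v,q)\mapsto(w,(1-t)v+tw,q)$. For fixed $q$, the fiber $\{v \mid v\in\Vor(x,X_q)$ for some $x\in X_q$ with $\|x-w\|\le r\}$ is star-convex with center $w$. This is exactly \cref{Lem:Star_Convex} applied to the point set $X_q$, provided this fiber is nonempty, i.e.\ $w\in\Offset(X_q)_r$. That holds because some $x\in X_q$ with $\|x-w\|\le r$ is part of the data of the point.

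\emph{Second}, move $q$ up to $p$ while keeping $v=w$: $(w,w,q)\mapsto (w,w,(1-t)q+tp)$. Let $s=(1-t)q+tp$, so $q\le s\le p$. Then $X_q\subset X_s$, so $w\in\Offset(X_s)_r$. Taking $x'$ to be the point of $X_s$ nearest to $w$ gives $w\in\Vor(x',X_s)$ with $\|x'-w\|\le r$. Hence $(w,w,s)\in\mathcal O\TelDC_{p,r}$, and the homotopy stays inside the space.

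Concatenating the two stages gives a homotopy from the identity to $f\circ\psi_{p,r}$.

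\textbf{Main obstacle: continuity.} The delicate point is the closure used in defining $\Tel(x)$. Membership of $(v,q)$ in $\TelVor(x)_{p,\infty}$ means $v\in\Vor(x,X_q)$ only up to limits. At boundary values of $q$, where $X_q$ jumps, the closure allows $v\in\Vor(x,X_{q'})$ for $q'$ slightly below $q$ in the grid sense. So the fiber at $q$ is really the one computed with respect to $X_{q^-}$ for an appropriate lower sublevel set.

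I would handle this by characterizing $\TelVor(x)_{p,\infty}$ explicitly: $(v,q)$ lies in it iff $\gamma(x)\le q\le p$ and $v\in\Vor(x,X_{q'})$ for some $q'$ with $\gamma(x)\le q'\le q$ and $X_{q'}\supseteq X_{q}\setminus(\text{points with }\gamma\text{ on the boundary of }[q,\infty))$. Equivalently, $v\in\Vor(x,Y)$ for some $X_{<q}\cup\{x\}\subseteq Y$, where $X_{<q}$ denotes the points with $\gamma$ strictly below $q$ in some coordinate direction. Then I would rerun the star-convexity argument of \cref{Lem:Star_Convex} with $X_q$ replaced by this set $Y$. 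That argument uses only the half-space property, so it goes through verbatim. The second stage, with $v=w$ fixed, is unaffected.

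Both homotopies are given by explicit formulas, so continuity on the union is automatic once we know they stay inside $\mathcal O\TelDC_{p,r}$.
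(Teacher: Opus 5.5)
Your proof is correct and is essentially the paper's. It uses the same section $w\mapsto (w,w,p)$, then a straight-line retraction of $v$ onto $w$ in each fiber via \cref{Lem:Star_Convex}, then a straight-line push of $q$ up to $p$ along $(w,w,\cdot)$, and the paper concatenates exactly these two homotopies in this order.

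Your concern about the closure in $\Tel(x)$ is legitimate, since the paper's explicit description of $\mathcal O\TelDC_{p,r}$ silently drops it. However, neither of your two formulations gives the right set. Unwinding the closure gives $(v,q)\in\Tel(x)$ if and only if $\gamma(x)\le q$ and $v\in\Vor(x,Y_x)$, where $Y_x=\{z\in X\mid \gamma_1(z)<q_1,\ \gamma_2(z)<q_2\}\cup\{x\}$, i.e.\ strictly below $q$ in \emph{both} coordinates. Because $Y_x$ depends on $x$, the star-convexity argument does not go through literally verbatim. It does follow from \cref{Lem:Star_Convex} applied to $Y_x$, together with the inclusion $\Vor(y,Y_x)\subseteq\Vor(y,Y_y)$ for every $y\in Y_x$.
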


\begin{proof}
For each $(p,r)\in \R^2\times \R^+$, define the map $f\colon \Offset(\gamma)_{p,r}\to \mathcal O\TelDC_{p,r}$ by $f(w)=(w,w,p)$.  We show that $f$ is a homotopy inverse of $\psi_{p,r}$ by identifying a homotopy from $f\circ \psi_{p,r}$ to the identity map on $\mathcal O\TelDC_{p,r}$.  First note that we have a straight line homotopy $H$ from $f\circ \psi_{p,r}$ to the map $g$ given by $g(w,v,q)=(w,w,q)$, because if $(w,w,q)\in \mathcal O\TelDC_{p,r}$, then $(w,w,q')\in \mathcal O\TelDC_{p,r}$ for all $q\leq q'\leq p$.  Further,  \cref{Lem:Star_Convex} furnishes a homotopy $H'$ from $g$ to the identity map on $\mathcal O\TelDC_{p,r}$, namely $H((w,v,q),t)=(w,(1-t)w+tv,q)$.   The concatenation of $H$ and $H'$ is a homotopy from $f\circ \psi_{p,r}$ to the identity on $\mathcal O\TelDC_{p,r}$.
\end{proof}

Letting $\TelDC=\{\TelDC(x)\}_{x\in X}$, we have by construction that $\TelDC$ is a cover of~$\mathcal O\TelDC$.

\begin{lemma}\label{lem:nrvdc}
The cover $\TelDC$ of $\mathcal O\TelDC$ is good.  
\end{lemma}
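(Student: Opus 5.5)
Fix $(p,r)$ and a non-empty finite set $\sigma=\{x_1,\dots,x_k\}\subset X$. Since
\[\TelDC(x)_{p,r}=\Ball_r(x)\times \TelVor(x)_{p,\infty},\]
the intersection splits as a product:
\[\bigcap_{i=1}^k \TelDC(x_i)_{p,r}=\Bigl(\bigcap_{i=1}^k \Ball_r(x_i)\Bigr)\times\Bigl(\bigcap_{i=1}^k \TelVor(x_i)_{p,\infty}\Bigr).\]
If the intersection is non-empty, both factors are non-empty. A product of contractible spaces is contractible, so it suffices to show each factor is contractible when non-empty.

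The first factor is a non-empty intersection of closed balls. It is therefore convex, hence contractible.

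For the second factor, I would rerun the argument of \cref{lem:nrvd} with $r=\infty$. Let $C_q=\bigcap_{i}\Vor(x_i,X_q)$ for $q\geq\gamma(\sigma)$.
\begin{itemize}
\item Since $X_q\subset X_{q'}$ for $q\leq q'$, we get $\Vor(x_i,X_{q'})\subset\Vor(x_i,X_q)$, and hence $C_{q'}\subset C_q$.
\item The second factor consists of points $(w,q)$ with $q\leq p$ and $w$ in the appropriate (closure-adjusted) intersection. Any such point must satisfy $q\geq \gamma(\sigma)$, because $\TelVor(x)_{p,\infty}$ only contains points with $q\geq\gamma(x)$; this is preserved under closure since $\{q\geq \gamma(x)\}$ is closed.
\item Monotonicity then implies that if $(w,q')$ lies in the intersection, so does $(w,q)$ for every $\gamma(\sigma)\leq q\leq q'$. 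This uses that the closure in $\Tel(x)$ behaves well; as in \cref{lem:nrvd}, one notes that $\Tel(x)\cap(\R^d\times\{q\})$ is exactly $\Vor(x,X_q)$ union limits from below, so monotonicity persists after closure.
\end{itemize}
The straight-line homotopy $H((w,q),t)=(w,(1-t)q+t\gamma(\sigma))$ is then a deformation retraction of the second factor onto $C'\times\{\gamma(\sigma)\}$, where $C'$ is the slice of the intersection at $q=\gamma(\sigma)$. This slice is a non-empty intersection of closed convex polyhedra (Voronoi cells of $X_{\gamma(\sigma)}$, possibly enlarged by closure to cells of sub-sublevel sets, each still convex), hence convex and contractible.

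The main obstacle is the closure in the definition of $\Tel(x)$. I need to verify that the slice at $q=\gamma(\sigma)$ is convex and that the downward-monotonicity survives the closure. I would handle this exactly as the paper implicitly does in \cref{lem:nrvd}. At a fixed $q$, the slice of $\Tel(x)$ is $\Vor(x,X_{q'})$ for the smallest relevant $q'$, that is, the limit from below along the grid. This slice is a single Voronoi cell of some sublevel set, so the slice of the intersection is an intersection of Voronoi cells of nested sublevel sets. Such an intersection is convex, and it is monotone decreasing in $q$. Non-emptiness of the second factor guarantees that the slice at $\gamma(\sigma)$ is non-empty, which completes contractibility.
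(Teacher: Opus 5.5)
Your proposal is correct and follows the paper's proof. Like the paper, you split each intersection as a product of an intersection of closed balls and an intersection of the sets $\TelVor(x)_{p,\infty}$, and you handle the second factor with the straight-line deformation retraction from \cref{lem:nrvd}. Your extra care with the closure in $\Tel(x)$ (each slice is a single convex Voronoi cell that shrinks as $q$ increases) fills in a detail the paper leaves implicit.
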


\begin{proof}
Projecting each element $\TelDC(x)$ of $\TelDC$ onto the first $d$ coordinates of $\R^d\times \R^d\times \R^2$ yields the good cover of $\mathcal O(\gamma)$ by closed balls centered at the points of $X$.  Similarly, projecting each element of $\TelDC$ onto the last $d+2$ coordinates of $\R^d\times \R^d\times \R^2$ yields a cover of $\bigcup_{x\in X} 
\TelVor(x)_{p,\infty}$ which is also good, by essentially the same argument that we used to prove \cref{lem:nrvd}.  By definition, each trifiltration $\TelDC(x)$ is the pointwise cartesian product of its images under these two projections.  Thus, since intersections and cartesian products commute, while products of contractible sets are contractible, it follows that $\TelDC$ is good.
\end{proof}

\begin{lemma}\label{Lem:Incr_Vs_Nerve_Over_P}
For all $p\in \R^2$, $\sigma\in \Nrv(\TelVor_{p,\infty})$ if and only if $\sigma\in \Incr$ and $\gamma(\sigma)\leq p$.
\end{lemma}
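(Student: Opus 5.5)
The plan is to reduce the statement to the proof of \cref{thm:deleq}, letting $r\to\infty$. By definition, $\TelVor(x)_{p,\infty}=\bigcup_{r}\TelVor(x)_{p,r}$. For a finite set $\sigma$, a point in $\bigcap_{z\in\sigma}\TelVor(z)_{p,\infty}$ has bounded first coordinate, so it lies in $\bigcap_{z\in\sigma}\TelVor(z)_{p,r}$ for some finite $r$ (take $r\ge\max_{z\in\sigma}\|c-z\|$). Since $\TelVor(z)_{p,r}=\Tel(z)\cap(\Ball_r(z)\times\{s\le p\})$, this gives
\[\Nrv(\TelVor_{p,\infty})=\bigcup_{r\ge 0}\Nrv(\TelVor_{p,r}).\]
By \cref{thm:deleq}, the right-hand side equals $\bigcup_r\Del(\gamma)_{p,r}$, which is the set of $\sigma\in\Incr$ with $\gamma(\sigma)\le p$, because every $\sigma\in\Incr$ has a finite incremental Delaunay radius $\omega_\sigma$. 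This proves the statement.

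Concretely, the steps are as follows. First, for the forward direction, take $(c,q)\in\bigcap_{z\in\sigma}\TelVor(z)_{p,\infty}$ and set $r=\max_{z\in\sigma}\|c-z\|$. Then $(c,q)\in\bigcap_z\TelVor(z)_{p,r}$, so $\sigma\in\Nrv(\TelVor_{p,r})=\Del(\gamma)_{p,r}$. Hence $\sigma\in\Incr$ and $\gamma(\sigma)\le p$. Second, for the converse, given $\sigma\in\Incr$ with $\gamma(\sigma)\le p$, we have $\sigma\in\Del(\gamma)_{p,\omega_\sigma}$. By \cref{thm:deleq}, the intersection $\bigcap_z\TelVor(z)_{p,\omega_\sigma}$ is nonempty, and it is contained in $\bigcap_z\TelVor(z)_{p,\infty}$.

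The main obstacle is checking that the colimit description $\TelVor(x)_{p,\infty}=\Tel(x)\cap(\R^d\times\{s\le p\})$ really is the increasing union of the $\TelVor(x)_{p,r}$. This is immediate from the second formula for $\TelVor(x)_{p,r}$, since $\bigcup_r\Ball_r(x)=\R^d$. If instead one worried about closures in the first formula, I would argue directly from the intersection formula and avoid that formula altogether. With this identification in place, everything else is bookkeeping inherited from \cref{thm:deleq}, so no new geometric argument is needed.
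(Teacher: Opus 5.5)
Your proposal is correct and follows essentially the same route as the paper: identify $\Nrv(\TelVor_{p,\infty})$ with $\bigcup_{r}\Nrv(\TelVor)_{p,r}$, apply \cref{thm:deleq}, and use that each $\sigma\in\Incr$ with $\gamma(\sigma)\le p$ lies in $\Del(\gamma)_{p,r}$ for some finite $r$ (e.g.\ $r=\omega_\sigma$). The only difference is that you justify the first identification explicitly by choosing $r=\max_{z\in\sigma}\|c-z\|$, a step the paper simply asserts.
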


\begin{proof}
Note that $\sigma\in \Nrv(\TelVor_{p,\infty})$ if and only if $\sigma \in  \Nrv(\TelVor)_{p,r}$ for some $r$.  By \cref{thm:deleq}, $\sigma \in  \Nrv(\TelVor)_{p,r}$ if and only if $\sigma\in \Del(\gamma)_{p,r}$.  Moreover, $\sigma \in\Del(\gamma)_{p,r}$ for some $r$ if and only if $\sigma\in \Incr$ and $\gamma(\sigma)\leq p$.    This gives the result.  
\end{proof}

\begin{proposition} \label{thm:delcecheq}
     The trifiltrations $\Nrv(\TelDC)$ and  $\DelCech(\gamma)$ are equal.
 \end{proposition}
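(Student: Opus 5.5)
The plan is to fix $(p,r)\in\R^2\times\R^+$ and $\sigma\subset X$ non-empty, and to show directly that $\sigma\in\Nrv(\TelDC)_{p,r}$ if and only if $\sigma\in\Incr$, $\gamma(\sigma)\le p$ and $m_\sigma\le r$.

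The key observation is that intersections commute with the product structure of each $\TelDC(x)_{p,r}=\Ball_r(x)\times\TelVor(x)_{p,\infty}$. That is,
\[
\bigcap_{x\in\sigma}\TelDC(x)_{p,r}=\Bigl(\bigcap_{x\in\sigma}\Ball_r(x)\Bigr)\times\Bigl(\bigcap_{x\in\sigma}\TelVor(x)_{p,\infty}\Bigr),
\]
and a product is non-empty exactly when both factors are non-empty. Hence $\sigma\in\Nrv(\TelDC)_{p,r}$ if and only if two conditions hold:
\begin{enumerate}
\item $\sigma$ belongs to the \v Cech complex of radius $r$;
\item $\sigma\in\Nrv(\TelVor_{p,\infty})$.
\end{enumerate}

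For the first factor, I would use the standard fact that $\bigcap_{x\in\sigma}\Ball_r(x)\neq\emptyset$ if and only if there is a point within distance $r$ of all points of $\sigma$. This holds if and only if $\sigma$ has an enclosing ball of radius $r$, i.e., $m_\sigma\le r$. For the second factor, \cref{Lem:Incr_Vs_Nerve_Over_P} says precisely that $\sigma\in\Nrv(\TelVor_{p,\infty})$ if and only if $\sigma\in\Incr$ and $\gamma(\sigma)\le p$. Combining the two equivalences gives exactly the defining condition of $\DelCech(\gamma)_{p,r}$.

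Both filtrations have inclusions as structure maps, so pointwise equality of the simplex sets gives equality of trifiltrations.

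The only step needing care is confirming that $\TelVor(x)_{p,\infty}$, as defined via $\colim$, agrees with the set used in \cref{Lem:Incr_Vs_Nerve_Over_P}. In particular, the nerve of the cover at $\infty$ should coincide with the union over $r$ of the nerves at finite $r$. I would check this by noting that the finitely many sets $\Tel(x)\cap(\R^d\times\{s\le p\})$ have a non-empty common intersection if and only if some point $(c,q)$ lies in all of them. That point then lies in $\Ball_r(x)\times\R^2$ for all $x\in\sigma$ once $r\ge\max_{x\in\sigma}\|c-x\|$. This is exactly the argument already implicit in the proof of \cref{Lem:Incr_Vs_Nerve_Over_P}, so I expect no real obstacle.
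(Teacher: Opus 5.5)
Your proposal is correct and follows essentially the same route as the paper. Both split the intersection using the product structure $\Ball_r(x)\times\TelVor(x)_{p,\infty}$, identify the ball factor with the condition $m_\sigma\le r$, and invoke \cref{Lem:Incr_Vs_Nerve_Over_P} for the telescopic factor; your extra check relating the nerve at $\infty$ to the nerves at finite $r$ is exactly the first line of that lemma's proof.
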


\begin{proof}
 Since cartesian products and intersections commute, $\sigma \in \Nrv(\TelDC)_{p,r}$ if and only if $\bigcap_{x\in \sigma}\Ball_r(x) \neq\emptyset$ and $\bigcap_{x\in \sigma}\TelVor(x)_{p,\infty} \neq\emptyset$.  The latter equation holds if and only if $\sigma \in \Nrv(\{\TelVor(x)_{p,\infty}\}_{x\in X})$, which  in turn holds exactly when  $\sigma\in \Incr$ and $\gamma(\sigma)\leq p$, by \cref{Lem:Incr_Vs_Nerve_Over_P}.  Given this, the condition $\bigcap_{x\in \sigma}\Ball_r(x) \neq\emptyset$ holds if and only if $\sigma \in \DelCech(\gamma)_{p,r}$.
\end{proof}

\begin{proposition}\label{Prop:Del_Cech_Equivalence}
We have $\DelCech(\gamma)\simeq \Offset(\gamma)$.
\end{proposition}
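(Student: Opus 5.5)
The plan is to assemble the ingredients already in place, mirroring the proof of \cref{Prop:Del_Weak_Equiv}. The chain of weak equivalences we aim for is
\[
\DelCech(\gamma) = \Nrv(\TelDC) \simeq \mathcal O\TelDC \simeq \Offset(\gamma).
\]
The first equality is \cref{thm:delcecheq}. The last equivalence comes from \cref{prop:TelVOequiv}, since a pointwise homotopy equivalence is in particular a (length-one) zigzag of pointwise homotopy equivalences.

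For the middle equivalence we invoke \cref{Thm:Nerve_Semi_algebraic} for the cover $\TelDC=\{\TelDC(x)\}_{x\in X}$ of $\mathcal O\TelDC$. This requires checking four things.
\begin{itemize}
\item The cover is finite. This is immediate, since it is indexed by $X$.
\item The cover is good. This is \cref{lem:nrvdc}.
\item Each $\TelDC(x)_{p,r}$ is closed.
\item Each $\TelDC(x)_{p,r}$ is semi-algebraic.
\end{itemize}

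For the last two points, write $\TelDC(x)_{p,r}=\Ball_r(x)\times \TelVor(x)_{p,\infty}$. The ball is closed and semi-algebraic. For the second factor, $\TelVor(x)_{p,\infty}=\Tel(x)\cap(\R^d\times\{s\le p\})$. Here $\Tel(x)$ is a closed finite union of convex polyhedra, as already noted in the proof of \cref{Prop:Del_Weak_Equiv}: it is piecewise constant in $q$ over the finitely many cells of the grid induced by $\Grid(\gamma)$, and then closed up. The set $\{s\le p\}$ is a closed polyhedron. Hence their intersection is closed and semi-algebraic, and a product of closed semi-algebraic sets is again closed and semi-algebraic. With all hypotheses verified, \cref{Thm:Nerve_Semi_algebraic} gives $\mathcal O\TelDC\simeq\Nrv(\TelDC)$.

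Weak equivalence is transitive, because zigzags of pointwise homotopy equivalences concatenate. Combining the three steps therefore yields $\DelCech(\gamma)\simeq\Offset(\gamma)$.

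The only step needing any care is the semi-algebraicity of $\Tel(x)$. I would justify it explicitly as follows:
\begin{itemize}
\item $X_q$ is constant on each of the finitely many products of half-open intervals determined by $\im\gamma_1$ and $\im\gamma_2$.
\item On each such box $B$, the set $\bigcup_{q\in B}\Vor(x,X_q)\times\{q\}$ is the product of a convex polyhedron with $B$.
\item The closure of a finite union of semi-algebraic sets is semi-algebraic.
\end{itemize}
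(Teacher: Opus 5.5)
Your proposal is correct and follows the paper's proof essentially verbatim. You use the same chain $\DelCech(\gamma)=\Nrv(\TelDC)\simeq\mathcal O\TelDC\simeq\Offset(\gamma)$, justified by \cref{thm:delcecheq}, by \cref{Thm:Nerve_Semi_algebraic} together with \cref{lem:nrvdc}, and by \cref{prop:TelVOequiv}; your box-by-box argument that $\Tel(x)$ is semi-algebraic just spells out what the paper asserts in the proof of \cref{Prop:Del_Weak_Equiv}.
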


\begin{proof}
Note that for each $x\in X$, $\TelDC(x)$ is pointwise closed and semi-algebraic, since each space in $\TelDC(x)$ is a product of two closed, semi-algebraic sets. Since $\TelDC$ is a good cover of $\mathcal O\TelDC$  by \cref{lem:nrvdc}, \cref{Thm:Nerve_Semi_algebraic} implies that $\mathcal O\TelDC\simeq  \Nrv(\TelDC)$.  We also have $\mathcal O(\gamma) \simeq \mathcal O\TelDC$ by \cref{prop:TelVOequiv} and $\Nrv(\TelDC)= \DelCech(\gamma)$ by \cref{thm:delcecheq}, which yields the result.  
\end{proof}

\Cref{thm:main} follows immediately from \Cref{Prop:Del_Cech_Equivalence} and \cref{Prop:Del_Weak_Equiv}.

\section{Implementation}\label{sec:implement}
We have written a C++ program to compute the Delaunay-\v{C}ech trifiltration $\DelCech(\gamma)$ for a function $\gamma: X \to \R^2$, where $X\subset \R^d$. 
The code is
\href{https://bitbucket.org/mkerber/function_delaunay}{available on Bitbucket}\footnote{
\url{https://bitbucket.org/mkerber/function_delaunay} (\href{https://archive.softwareheritage.org/swh:1:dir:95fb3611456975ff368b3cfb5604f4bd9011264e;origin=https://bitbucket.org/mkerber/function_delaunay;visit=swh:1:snp:8ce3f15de54bedbe074deeba25703bdb373e71e3;anchor=swh:1:rev:a481f8d90af7a97795c467d81fb3db2cf952a58e}{archived here})}.  
The code builds upon a prior implementation of the algorithm from \cite{alonsoDelaunayBifiltrationsFunctions2024} for computing the Delaunay-\v{C}ech filtration of an $\R$-valued function.  The program accepts a text file as input, where each line represents the coordinate of a
point $x \in \R^d$ along with its function value $\gamma(x)$, and outputs a chain complex representation of 
$\DelCech(\gamma)$ in the \texttt{scc2020} format specified in
\cite{scc2020}.    Both the local and non-local algorithms are implemented.  
Our implementation computes the incremental Delaunay complex essentially as described in~\cref{sec:computation}.  However, there are minor differences, which arise because some simplifications that have been incorporated into this paper have not yet been implemented in the code.  
As in~\cite{alonsoDelaunayBifiltrationsFunctions2024},  simplices are stored in a
simplex tree \cite{boissonnat2014simplextree, gudhi:FilteredComplexes}, 
and smallest enclosing balls are computed in CGAL~\cite{cgal:fghhs-bv-24a}.

\subparagraph{Experiments.} The test suite we use, obtained from~\cite{kerber_2025_rxedk-qyq77}, consists of point cloud samples of the circle $S^1$ and unit square $[0, 1]^2$ in $\R^2$, and of the 2-sphere $S^2$, torus $S^1 \times S^1$, and unit cube $[0, 1]^3$ in $\R^3$, with 5\% noise drawn from a uniform distribution on a box and perturbation to ensure general position.  We consider point clouds of 500, 1000, 2000, 4000, 8000, 16000 points.

For each point cloud $X$ we computed the interlevel Delaunay-\v{C}ech trifiltration of four different functions $\delta: X \to \R$, using the local algorithm.  
The four functions are:
\begin{itemize}
    \item \emph{codensity},
    $\delta(x) = -\sum\limits_{y\neq x}{\exp{\left(-\frac{\|x-y\|^2}{\sigma^2}\right)}}$,
    with $\sigma$ chosen as the $0.1^{\mathrm{th}}$ percentile of the non-zero distances between points in $X$, 
    \item \emph{$L_1$-coeccentricity},
    $\delta (x) = -\sum\limits_{y\in X}{\frac{\|x-y\|}{|X|}}$,
    \item \emph{height}, $\delta(x) = x_d$, where $x_d$ is the last coordinate of $x$,
    \item \emph{random}, where for each $x\in X$, $\delta(x)$ is chosen uniformly at random from $[0,1]$.
\end{itemize}
All experiments were performed on a computer with an Intel Core i7-5960X CPU
@3.00GHz and 64GB of memory, running Ubuntu 20.04.6 LTS. The code was
compiled with \texttt{g++ 9.4.0}. 

The full results of the experiments are \href{https://doi.org/10.5281/zenodo.19227801}{available on Zenodo}\footnote{\url{https://doi.org/10.5281/zenodo.19227801}}.  A representative subset of the results is given in~\Cref{table:experiments}; this data is plotted in~\Cref{fig:growth_plots}.  
We observe that across all examples, the size of the incremental Delaunay complex and the memory usage grow nearly linearly as a function of the input size. In contrast, the runtime grows nearly quadratically.  Thus, our experiments indicate that our approach is memory efficient, and therefore that substantially larger computations should be feasible.

\cref{table:reduced} compares the performance of the local and non-local algorithms on three types of examples, with $|X|$ up to 4000.  While the local algorithm offers no improvement over the non-local algorithm in the first example type (interlevel trifiltrations), for the two other example types, the local algorithm is always faster by a factor of between 3 and 14, and the factor increases with the size of the data set.

\begin{figure}[tp]
\centering
\begin{tikzpicture}[scale=0.8]
\begin{axis}[
  width=0.90\textwidth,
  height=0.44\textwidth, 
  xmode=log, ymode=log,
  title={\emph{complex size}},
  title style={yshift=-5pt},
  xlabel={$|X|$},
  xtick={500, 1000, 2000,4000,8000,16000},
  xticklabels={500, 1k, 2k,4k,8k,16k},
  scaled x ticks=false,
  x tick label style={font=\small},
  y tick label style={font=\small},
  xlabel style={font=\small},
  grid=major,
  major grid style={line width=0.3pt, draw=gray!30},
  legend style={font=\tiny, at={(1,0)}, anchor=south east,
                draw=gray!40, fill=white, row sep=-3pt, inner sep=1pt},
    legend image post style={scale=0.6},
  legend cell align=left,
  major tick length=0pt,
]
\addplot[thick, color=blue]
  coordinates {(500,21763)(1000,50967)(2000,115775)(4000,256043)(8000,572891)(16000,1270049)};
\addlegendentry{$S^1$ height}
\addplot[only marks, forget plot, mark=*, mark size=1.5pt, color=blue]
  coordinates {(500,21763)(1000,50967)(2000,115775)(4000,256043)(8000,572891)(16000,1270049)};

\addplot[thick, color=teal]
  coordinates {(500,28415)(1000,66775)(2000,153219)(4000,349689)(8000,773433)(16000,1698421)};
\addlegendentry{$[0,1]^2$ ecc}
\addplot[only marks, forget plot, mark=*, mark size=1.5pt, color=teal]
  coordinates {(500,28415)(1000,66775)(2000,153219)(4000,349689)(8000,773433)(16000,1698421)};

\addplot[thick, color=orange!80!black]
  coordinates {(500,505301)(1000,1295335)(2000,3145589)(4000,7514870)(8000,17501199)(16000,39523383)};
\addlegendentry{$S^2$ density}
\addplot[only marks, forget plot, mark=*, mark size=1.5pt, color=orange!80!black]
  coordinates {(500,505301)(1000,1295335)(2000,3145589)(4000,7514870)(8000,17501199)(16000,39523383)};

\addplot[thick, color=red!70!black]
  coordinates {(500,250593)(1000,650151)(2000,1615669)(4000,3751973)(8000,8316989)(16000,18764567)};
\addlegendentry{$S^1{\times}S^1$ height}
\addplot[only marks, forget plot, mark=*, mark size=1.5pt, color=red!70!black]
  coordinates {(500,250593)(1000,650151)(2000,1615669)(4000,3751973)(8000,8316989)(16000,18764567)};

\addplot[thick, color=violet]
  coordinates {(500,677493)(1000,1741733)(2000,4309321)(4000,10550531)(8000,24740775)(16000,57038595)};
\addlegendentry{$[0,1]^3$ random}
\addplot[only marks, forget plot, mark=*, mark size=1.5pt, color=violet]
  coordinates {(500,677493)(1000,1741733)(2000,4309321)(4000,10550531)(8000,24740775)(16000,57038595)};

\addplot[thick, color=brown!80!black]
  coordinates {(500,607060)(1000,1614337)(2000,3885310)(4000,8711827)(8000,19140397)(16000,39868943)};
\addlegendentry{$S^1{\times}S^1$ density}
\addplot[only marks, forget plot, mark=*, mark size=1.5pt, color=brown!80!black]
  coordinates {(500,607060)(1000,1614337)(2000,3885310)(4000,8711827)(8000,19140397)(16000,39868943)};

\addplot[black!40, no marks, thick, dotted, forget plot,
  domain=500:16000, samples=100] {0.004*x^2};
\addplot[black!40, no marks, thick, dashed, forget plot,
  domain=500:16000, samples=100] {2.0*x};
\addlegendentry{$O(n^2)$}
\end{axis}
\end{tikzpicture}
\par\bigskip 
\begin{tikzpicture}[scale=0.8]
\begin{axis}[
width=0.90\textwidth,
height=0.50\textwidth,
  xmode=log, ymode=log,
  title={\emph{memory consumption (MB)}},
  title style={yshift=-5pt},
  xlabel={$|X|$},
  xtick={500, 1000, 2000,4000,8000,16000},
  xticklabels={500, 1k, 2k,4k,8k,16k},
  scaled x ticks=false,
  x tick label style={font=\small},
  y tick label style={font=\small},
  xlabel style={font=\small},
  grid=major,
  major grid style={line width=0.3pt, draw=gray!30},
    legend style={font=\tiny, at={(1,0)}, anchor=south east,
                draw=gray!40, fill=white, row sep=-3pt, inner sep=1pt},
    legend image post style={scale=0.6},
  legend cell align=left,
  major tick length=0pt,
]
\addplot[thick, color=blue]
  coordinates {(500,6.34)(1000,8.82)(2000,14.51)(4000,26.76)(8000,54.20)(16000,114.66)};
\addlegendentry{$S^1$ height}
\addplot[only marks, forget plot, mark=*, mark size=1.5pt, color=blue]
  coordinates {(500,6.34)(1000,8.82)(2000,14.51)(4000,26.76)(8000,54.20)(16000,114.66)};

\addplot[thick, color=teal] 
  coordinates {(500,7.02)(1000,10.54)(2000,18.80)(4000,36.57)(8000,75.99)(16000,162.12)};
\addlegendentry{$[0,1]^2$ ecc}
\addplot[only marks, forget plot, mark=*, mark size=1.5pt, color=teal]
  coordinates {(500,7.02)(1000,10.54)(2000,18.80)(4000,36.57)(8000,75.99)(16000,162.12)};

\addplot[thick, color=orange!80!black]
  coordinates {(500,46.64)(1000,111.53)(2000,263.06)(4000,621.77)(8000,1439.11)(16000,3239.97)};
\addlegendentry{$S^2$ density}
\addplot[only marks, forget plot, mark=*, mark size=1.5pt, color=orange!80!black]
  coordinates {(500,46.64)(1000,111.53)(2000,263.06)(4000,621.77)(8000,1439.11)(16000,3239.97)};

\addplot[thick, color=red!70!black]
  coordinates {(500,24.66)(1000,56.75)(2000,134.14)(4000,304.95)(8000,669.48)(16000,1502.64)};
\addlegendentry{$S^1{\times}S^1$ height}
\addplot[only marks, forget plot, mark=*, mark size=1.5pt, color=red!70!black]
  coordinates {(500,24.66)(1000,56.75)(2000,134.14)(4000,304.95)(8000,669.48)(16000,1502.64)};

\addplot[thick, color=violet]
  coordinates {(500,60.00)(1000,146.14)(2000,354.44)(4000,859.78)(8000,2008.65)(16000,4623.33)};
\addlegendentry{$[0,1]^3$ random}
\addplot[only marks, forget plot, mark=*, mark size=1.5pt, color=violet]
  coordinates {(500,60.00)(1000,146.14)(2000,354.44)(4000,859.78)(8000,2008.65)(16000,4623.33)};

\addplot[thick, color=brown!80!black]
  coordinates {(500,54.73)(1000,136.96)(2000,322.87)(4000,717.71)(8000,1571.01)(16000,3265.80)};
\addlegendentry{$S^1{\times}S^1$ density}
\addplot[only marks, forget plot, mark=*, mark size=1.5pt, color=brown!80!black]
  coordinates {(500,54.73)(1000,136.96)(2000,322.87)(4000,717.71)(8000,1571.01)(16000,3265.80)};
\addplot[black!40, no marks, thick, dotted, forget plot,
  domain=500:16000, samples=100] {0.000004*x^2};
\addplot[black!40, no marks, thick, dashed, forget plot,
  domain=500:16000, samples=100] {0.002*x};

\addlegendentry{$O(n^2)$}
\end{axis}
\end{tikzpicture}
\par\bigskip 
\begin{tikzpicture}[scale=0.8]
\begin{axis}[
width=0.90\textwidth,
height=0.44\textwidth,
  xmode=log, ymode=log,
  title={\emph{complex time (s)}},
  title style={yshift=-5pt},
  xlabel={$|X|$},
  xtick={500, 1000, 2000,4000,8000,16000},
  xticklabels={500, 1k, 2k,4k,8k,16k},
  scaled x ticks=false,
  x tick label style={font=\small},
  y tick label style={font=\small},
  xlabel style={font=\small},
  grid=major,
  major grid style={line width=0.3pt, draw=gray!30},
  legend style={font=\tiny, at={(1,0)}, anchor=south east,
                draw=gray!40, fill=white, row sep=-3pt, inner sep=1pt},
  legend image post style={scale=0.6},
  legend cell align=left,
  major tick length=0pt,
]
\addplot[thick, color=blue]
  coordinates {(500,1.85)(1000,8.49)(2000,39.42)(4000,181.79)(8000,831.17)(16000,3924.57)};
\addlegendentry{$S^1$ height}
\addplot[only marks, forget plot, mark=*, mark size=1.5pt, color=blue]
  coordinates {(500,1.85)(1000,8.49)(2000,39.42)(4000,181.79)(8000,831.17)(16000,3924.57)};

\addplot[thick, color=teal]
  coordinates {(500,1.78)(1000,8.10)(2000,35.71)(4000,157.55)(8000,723.86)(16000,3319.30)};
\addlegendentry{$[0,1]^2$ ecc}
\addplot[only marks, forget plot, mark=*, mark size=1.5pt, color=teal]
  coordinates {(500,1.78)(1000,8.10)(2000,35.71)(4000,157.55)(8000,723.86)(16000,3319.30)};

\addplot[thick, color=orange!80!black]
  coordinates {(500,25.16)(1000,112.14)(2000,501.26)(4000,2249.95)(8000,10064.50)(16000,44737.40)};
\addlegendentry{$S^2$ density}
\addplot[only marks, forget plot, mark=*, mark size=1.5pt, color=orange!80!black]
  coordinates {(500,25.16)(1000,112.14)(2000,501.26)(4000,2249.95)(8000,10064.50)(16000,44737.40)};

\addplot[thick, color=red!70!black]
  coordinates {(500,48.22)(1000,236.34)(2000,1082.43)(4000,5075.66)(8000,22835.40)(16000,101974)};
\addlegendentry{$S^1{\times}S^1$ height}
\addplot[only marks, forget plot, mark=*, mark size=1.5pt, color=red!70!black]
  coordinates {(500,48.22)(1000,236.34)(2000,1082.43)(4000,5075.66)(8000,22835.40)(16000,101974)};

\addplot[thick, color=violet]
  coordinates {(500,24.56)(1000,105.52)(2000,451.97)(4000,1916.20)(8000,8030.43)(16000,33816.30)};
\addlegendentry{$[0,1]^3$ random}
\addplot[only marks, forget plot, mark=*, mark size=1.5pt, color=violet]
  coordinates {(500,24.56)(1000,105.52)(2000,451.97)(4000,1916.20)(8000,8030.43)(16000,33816.30)};

\addplot[thick, color=brown!80!black]
  coordinates {(500,31.32)(1000,142.33)(2000,613.11)(4000,2548.13)(8000,10684.30)(16000,43863.60)};
\addlegendentry{$S^1{\times}S^1$ density}
\addplot[only marks, forget plot, mark=*, mark size=1.5pt, color=brown!80!black]
  coordinates {(500,31.32)(1000,142.33)(2000,613.11)(4000,2548.13)(8000,10684.30)(16000,43863.60)};
\addplot[black!40, no marks, thick, dotted, forget plot,
  domain=500:16000, samples=100] {0.0000004*x^2};
\addplot[black!40, no marks, thick, dashed, forget plot,
  domain=500:16000, samples=100] {0.0002*x};

\addlegendentry{$O(n^2)$}
\end{axis}
\end{tikzpicture}
\caption{Log-log plots of complex size (top), memory consumption in MB (middle), and time in seconds (bottom) as a function of input size $|X|$, for the computations reported in \cref{table:experiments}.  For reference, in each plot, the graphs of a line $y=c_1x$ and quadratic $y=c_2x^2$ are shown as dashed and dotted lines, respectively, for some choices of $c_1$ and $c_2$ that vary among the three plots.  
In the top plot, for each type of example, the sizes ($y$-values) are well fit by a translate of the line representing $y=c_1x$, indicating that the sizes grow nearly linearly.  In the middle plot, the same is very nearly true, indicating that memory consumption also grows nearly  linearly.  
In the bottom plot, for each type of example, the runtimes ($y$-values) are well fit by a translate of the line representing $y=c_2x^2$, indicating that the runtimes grow nearly quadratically.}
\label{fig:growth_plots}
\end{figure}

\begin{table}[h!]
\includegraphics[width=\textwidth]{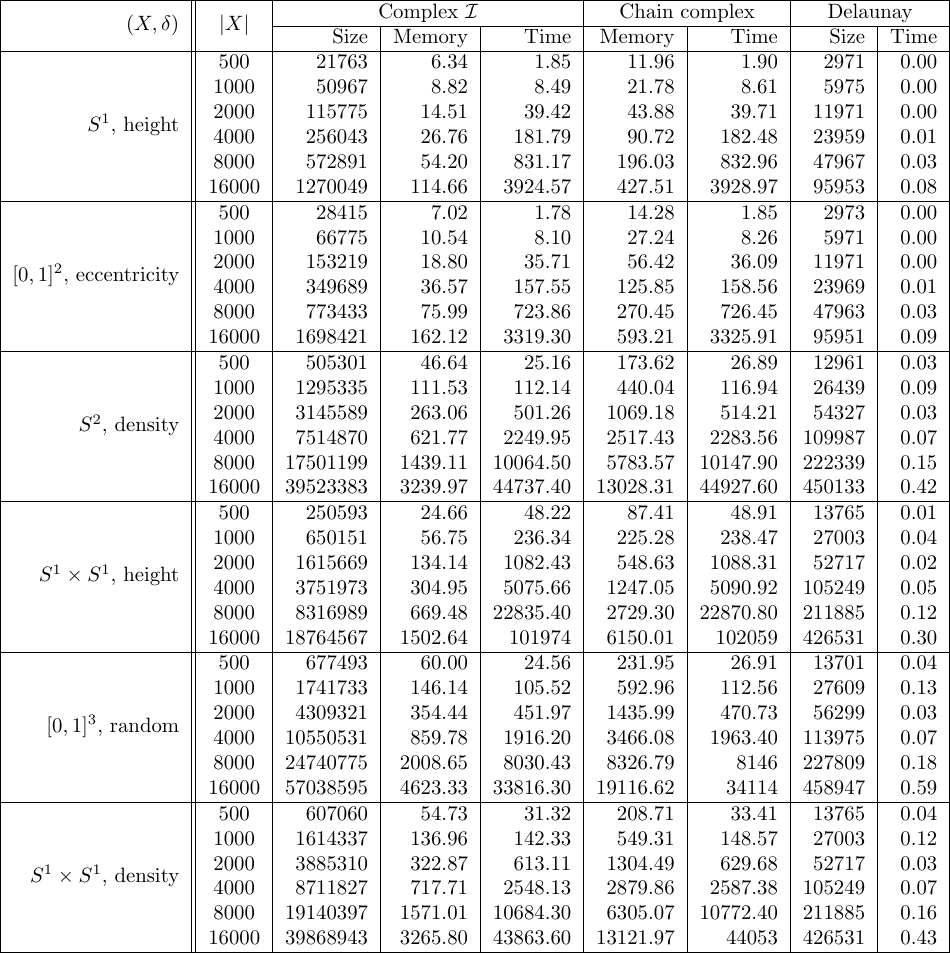}
\caption{Complex size, running time (in seconds), and the memory consumption (in
  MB) for computing the interlevel Delaunay-\v Cech trifiltration via the local algorithm. We also report the
  size of the Delaunay complex and the time required to compute it.
}
\label{table:experiments}
\end{table}

\begin{table}[h!]
\centering
\small
\includegraphics[width=0.7\textwidth]{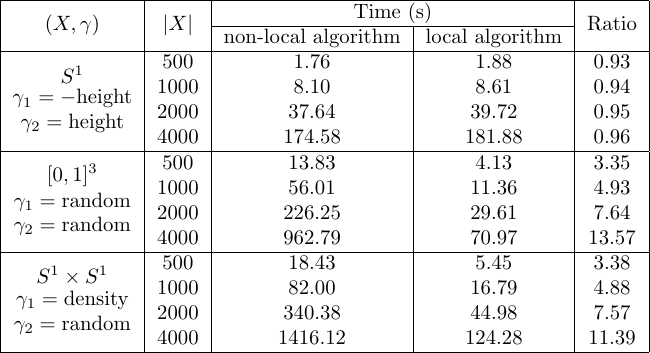}
\caption{Comparison of the non-local and local algorithms across several datasets.   In the interlevel case ($S^1$, $\gamma_1 = -\text{height}$, $\gamma_2 = \text{height}$), the local algorithm removes all points of $R$, but incurs a slight overhead from searching for vertices to remove, resulting in marginally worse running times than the non-local algorithm.  In the other cases, the local algorithm performs substantially better.}
\label{table:reduced}
\end{table}

\newpage
\appendix
\section{Computing incremental Delaunay radii}\label{sec:radiusDel}
For a point cloud $Z\subset \R^d$ in general position and $\sigma \in \DelT(Z)$, the \emph{Delaunay radius} of $\sigma$ is 
\begin{align*}
b_{\sigma}\coloneqq& \min\, \{r\in \R^+\mid \sigma\in \Del(Z)_r\}\\
               =&\min\, \{r\in \R^+\mid \exists \textup{ a $Z$-empty circumsphere of $\sigma$ with radius $r$}\}.
\end{align*}
Computing the Delaunay filtration $\Del(Z)$ amounts to computing both $\DelT(Z)$ and the Delaunay radii of all simplices in $\DelT(Z)$.  Assuming that $d$ is constant, a standard algorithm \cite{Edelsbrunner:alphashape,boissonnat2018geometric, gudhi:AlphaComplex} computes all Delaunay radii in time $O(|\DelT(Z)|)$ by iterating through the simplices in order of decreasing dimension.  In this section, we show that a slight variant of this algorithm computes the incremental Delaunay radii of all simplices of $\Incr$ in time $O(|\Incr|)$.

\subsection{Computation of ordinary Delaunay radii}\label{Sec:Ordinary_Del_Radii}
We begin by reviewing the standard algorithm to compute the Delaunay radii $(b_{\sigma})_{\sigma\in \DelT(Z)}$.  For $\sigma \in \DelT(Z)$, let $R_{\sigma}$ denote its smallest circumsphere.  It is a standard fact that $R_{\sigma}$ can be computed in constant time by solving a linear system.  We say that $\sigma$ is \deff{Gabriel}  if $R_{\sigma}$ is $Z$-empty. 
\begin{lemma}\label{Lem:Cofacet_Radii}
If $\sigma \in \DelT(Z)$ is not Gabriel, then $\sigma$ has a cofacet $\sigma \cup \{w\} \in \DelT(Z)$ such that $w$ lies inside $R_{\sigma}$.
\end{lemma}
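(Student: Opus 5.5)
The plan is a convexity argument of the same kind used in \cref{Lem:Conflict_Technicality_1}. Since $\sigma\in\DelT(Z)$, there is a $Z$-empty circumsphere $S^-$ of $\sigma$, with center $c^-$. By hypothesis $\sigma$ is not Gabriel, so $R_\sigma$, with center $c_\sigma$, contains some point of $Z$ in its interior. Note that $c_\sigma$ is the point of the affine subspace $A=\{c\mid \|c-u\|=\|c-v\| \text{ for all } u,v\in\sigma\}$ closest to any vertex of $\sigma$. Write $\rho(c)$ for the radius of the circumsphere of $\sigma$ centered at $c\in A$.

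The first step is to move along the segment from $c^-$ to $c_\sigma$ inside $A$ and track which points of $Z\setminus\sigma$ enter the sphere. For $w\notin\sigma$, the condition $\|c-w\|<\rho(c)$ is linear in $c$ on $A$, because $\|c-w\|^2-\|c-u\|^2$ is affine in $c$ for a fixed $u\in\sigma$. So along the segment each such condition defines a half-open subinterval at the $c_\sigma$ end, or is empty.

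Let $t^*$ be the first parameter at which some point of $Z\setminus\sigma$ reaches the sphere. This is well defined: at $t=0$ the sphere is $Z$-empty, and at $t=1$ some point is strictly inside. By general position (the condition on smallest circumspheres, extended generically), exactly one point $w$ lies on the sphere $S$ centered at $c(t^*)$, and no point lies inside it. Hence $S$ is a $Z$-empty circumsphere of $\sigma\cup\{w\}$, so $\sigma\cup\{w\}\in\DelT(Z)$. Moreover $w$ enters the sphere at $t^*$ and stays inside for all $t>t^*$, so by linearity $w$ lies strictly inside the sphere at $t=1$, which is $R_\sigma$. That gives the required cofacet.

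I expect the main obstacle to be the uniqueness of $w$, that is, ensuring that two points do not reach the sphere simultaneously. If the general position assumption on $X$ does not literally cover this, I would avoid the issue by taking $w$ to be any point on $S$ that lies inside $R_\sigma$. Every point first reached at $t^*$ is strictly inside for $t>t^*$, hence inside $R_\sigma$. And $\sigma\cup\{w\}$ is still Delaunay, because $S$ is a $Z$-empty circumsphere of it. The only extra check is the degenerate case where the affine function for $w$ is constant along the segment, which cannot happen since $w$ is outside at $t=0$ and on the sphere at $t^*$.
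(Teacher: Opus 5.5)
Your proof is correct and is essentially the paper's argument. The paper takes the closest point $v$ of $V=\bigcap_{z\in\sigma}\Vor(z,Z)\subset A$ to the center of $R_\sigma$, while you take the point where the segment from an empty center $c^-$ to $c_\sigma$ leaves $V$; in both cases the bisector constraint that becomes tight there is affine along the line and strictly violated at $c_\sigma$, which yields the cofacet $\sigma\cup\{w\}\in\DelT(Z)$ with $w$ inside $R_\sigma$. One tidy-up: define $t^*$ as the infimum of parameters at which some point of $Z$ lies strictly inside the sphere (your half-open-interval observation gives exactly this), which handles points already on $S^-$, where your ``outside at $t=0$'' remark can fail, and makes the uniqueness of $w$ irrelevant, as your fallback notes.
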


\begin{proof}
Let $\hat c$ be the center of $R_{\sigma}$ and let $V=\bigcap_{z\in \sigma} \Vor(z,Z)$.  Since $\sigma$ is not Gabriel, $\hat c \notin V$. Let $v$ be the closest point of $V$ to $\hat c$. Since $\hat c \notin V$, the point $v$ lies on the boundary $\partial V$.  Thus, there exist points $w \in Z$ and $z\in \sigma$ such that $\|v-w\|=\|v-z\|$ and $\|b-w\|<\|b-z\|$ for some point $b$ lying on the open ray $L\coloneqq \{v+t(\hat c-v)\mid t>0\}$.  The set $A$ of points equidistant from all points of $\sigma$ is an affine subspace of $\R^d$, so since $v,\hat c\in A$, we also have $b\in A$.  Therefore, $w\not\in\sigma$.  Since $\|v-w\|=\|v-z\|$, we have $v \in V\cap \Vor(w,Z)$, so $\sigma \cup \{w\} \in \DelT(Z)$.  Since $\|b-w\|< \|b-z\|$, the line containing $L$ intersects the hyperplane $\{c\in \R^d \mid \|c-w\|=\|c-z\|\}$ only at $v$, so by the intermediate value theorem, we must have  $\|c-w\|<\|c-z\|$ for all $c\in L$.  In particular, we have  $\|\hat c-w\|<\|\hat c-z\|$, so $w$ lies inside of $R_{\sigma}$.
\end{proof}

By \cref{Lem:Cofacet_Radii}, we can check whether a simplex $\sigma\in \DelT(Z)$ is Gabriel in time proportional to the number of cofacets of $\sigma$.  Using this, we obtain the standard algorithm to compute the Delaunay radii $(b_{\sigma})_{\sigma\in \DelT(Z)}$, which we summarize in a single sentence: For each $\sigma \in \DelT(Z)$ in order of decreasing dimension, set 
\[
b_\sigma \leftarrow
\begin{cases}
\textup{radius of } R_{\sigma} & \text{if } \sigma \text{ is Gabriel,} \\
\min\, \{ b_\mu\mid \mu \text{ a cofacet of } \sigma\} & \text{otherwise.}
\end{cases}
\]
Since $d$ is assumed to be constant, the total number of facet-cofacet pairs of $\DelT(Z)$ is $O(|\DelT(Z)|)$.  Hence, this algorithm runs in time $O(|\DelT(Z)|)$.  The correctness of the algorithm follows from the following result.

\begin{proposition} \label{Prop:Del_Rad_Coface_Formula}
If $\sigma \in \DelT(Z)$ is not Gabriel, then \[b_\sigma = \min\, \{ b_\mu\mid \mu \textup{ a cofacet of } \sigma\}.\]
\end{proposition}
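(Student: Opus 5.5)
We prove the two inequalities separately, working with the set $V=\bigcap_{z\in\sigma}\Vor(z,Z)$ of centers of $Z$-empty circumspheres of $\sigma$. On $V$, consider the radius function $\rho(c)=\|c-z\|$ for any fixed $z\in\sigma$; this does not depend on the choice of $z$. Then $b_\sigma=\min_{c\in V}\rho(c)$, and the minimum is attained because $V$ is closed and $\rho$ is coercive on the affine subspace $A$ of points equidistant from $\sigma$.

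\emph{Inequality $b_\sigma\le \min_\mu b_\mu$.} Let $\mu=\sigma\cup\{w\}$ be a cofacet in $\DelT(Z)$. Any $Z$-empty circumsphere of $\mu$ is also a $Z$-empty circumsphere of $\sigma$. Equivalently, $\bigcap_{z\in\mu}\Vor(z,Z)\subset V$. Hence $b_\sigma\le b_\mu$.

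\emph{Inequality $b_\sigma\ge \min_\mu b_\mu$.} Let $c^*\in V$ be the minimizer of $\rho$ over $V$. Since $\sigma$ is not Gabriel, the center $\hat c$ of $R_\sigma$ is not in $V$. Because $\hat c$ is the unique minimizer of $\rho$ on all of $A$, this gives $c^*\neq\hat c$.

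The main obstacle is to show that $c^*$ lies on the boundary of $V$ relative to $A$, in the sense that $c^*$ is equidistant from $\sigma$ and from some $w\in Z\setminus\sigma$. Suppose instead that every $w\in Z\setminus\sigma$ satisfies the strict inequality $\|c^*-w\|>\rho(c^*)$. Then a small neighborhood of $c^*$ in $A$ is contained in $V$. The restriction $\rho^2|_A$ is a strictly convex quadratic with unique minimizer $\hat c$, so moving from $c^*$ slightly toward $\hat c$ along the segment strictly decreases $\rho$ while staying in $V$. This contradicts the minimality of $c^*$.

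So there is some $w\in Z\setminus\sigma$ with $\|c^*-w\|=\rho(c^*)$. The sphere centered at $c^*$ with radius $\rho(c^*)$ is then a $Z$-empty circumsphere of $\mu=\sigma\cup\{w\}$. Therefore $\mu\in\DelT(Z)$ is a cofacet of $\sigma$ with $b_\mu\le\rho(c^*)=b_\sigma$.

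Combining the two inequalities yields the formula. Alternatively, one can mimic the argument of \cref{Lem:Cofacet_Radii} with $v=c^*$ to identify $w$. The direct convexity argument above is cleaner, and it is the one I would write.
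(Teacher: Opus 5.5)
Your proposal is correct and follows essentially the same route as the paper. Your set $V=\bigcap_{z\in\sigma}\Vor(z,Z)$ is exactly the paper's feasible set $F$, and both arguments show that if the minimizer $c^*$ were interior one could move it toward $\hat c$, so some $w\in Z\setminus\sigma$ is equidistant at $c^*$. The resulting sphere then witnesses a cofacet $\sigma\cup\{w\}$ with radius $b_\sigma$, which combines with the trivial bound $b_\sigma\le b_\mu$ over all cofacets $\mu$.
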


\begin{proof}
By definition, $b_{\sigma}$ is the radius of the smallest $Z$-empty circumsphere of $\sigma$.  Fixing $z\in \sigma$, the center $c^*$ of this sphere is the minimizer of the following optimization problem \[
\minimize_{c \in A} \|c - z\|^2 \quad \text{subject to} \quad \|c - w\| \geq \|c - z\| \quad \text{for all } w \in Z \setminus \sigma,
\] 
where $A$ is the set of points equidistant to all points of $\sigma$.   Note that the objective function of this problem is strictly convex and that $A$ is an affine subspace of $\R^d$, as for each $w\in \sigma$, the equality $\|c - w\| = \|c - z\|$ is equivalent to the equality $2\langle c, z- w \rangle+ \|w\|^2 - \|z\|^2  = 0$, which is linear in $c$.  Similarly, each inequality constraint of the optimization problem is equivalent to a linear one.  Thus, the feasible set $F$ of this optimization problem is a closed, convex polytope.    

We claim that $c^*\in \partial F$. To see this, let $\hat c$ denote the center of $R_{\sigma}$, i.e., $\hat c$ is the minimizer over $A$ of the corresponding unconstrained optimization problem.  Since $\sigma$ is not Gabriel by assumption, $\hat c\not \in F$.   If $c^*$ were an element of the interior of $F$, then one could move $c^*$ slightly toward $\hat c$ and remain in $F$ while strictly decreasing the objective, a contradiction.  Thus, we indeed have $c^*\in \partial F$, as claimed.  Therefore, at least one constraint is tight at $c^*$, i.e., $\|c^* - w\| = \|c^* - z\|$ for some $w \in Z \setminus \sigma$.  We then have $\nu\coloneqq \sigma\cup \{w\}\in \DelT(Z)$.

For any cofacet $\mu$ of $\sigma$, we have  $b_{\sigma}\leq b_{\mu}$.  In particular, $b_{\sigma}\leq b_{\nu}$.  The sphere with center $c^*$ and radius $b_{\sigma}$ is the smallest $Z$-empty circumsphere of $\sigma$ and also a circumsphere of $\nu$, so in fact $ b_{\sigma}=b_{\nu}$.  The result follows.
\end{proof}

\subsection{Computation of incremental Delaunay radii}
We next modify the above algorithm to compute the incremental Delaunay radii of
all simplices in $\Incr$.  For $\sigma \in \Incr$, let  $x=\max_1(\sigma)$,
$y=\max_2(\sigma)$, and $\tau=\sigma \setminus \{x, y\}$.  Let $S_{\sigma}$
denote the smallest circumsphere of $\tau$ such that $x$ and $y$ both lie on or
inside $S_{\sigma}$.  Since $\dim(\sigma)\leq d+2$, the sphere $S_{\sigma}$ can be computed using Welzl's
miniball algorithm~\cite{welzlSmallestEnclosingDisks1991} in worst-case constant time.  We say $\sigma$ is \deff{$\Incr$-Gabriel} if $S_{\sigma}$ is
$(X_{\gamma(\sigma)}\setminus \{x,y\})$-empty.  

The following is the incremental analogue of \cref{Lem:Cofacet_Radii}, and has a similar proof.

\begin{lemma}\label{Lem:Incremental_cofacet}
If $\sigma \in \Incr$ is not $\Incr$-Gabriel, then $\sigma$ has a cofacet $\sigma \cup \{w\} \in \Incr$ such that $w$ lies inside $S_{\sigma}$.  
\end{lemma}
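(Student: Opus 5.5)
We would mimic the proof of \cref{Lem:Cofacet_Radii}, with the optimization problem \eqref{eq:opt_Prob} from \cref{Prop:Min_Rad_Witness} taking the place of the ordinary Delaunay constraints. Write $x=\max_1(\sigma)$, $y=\max_2(\sigma)$, $\tau=\sigma\setminus\{x,y\}$ and $q=\gamma(\sigma)$.

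\textbf{Setup.} First treat the degenerate cases. If $\tau=\emptyset$, then either $\sigma=\{x\}$, which is trivially $\Incr$-Gabriel, or $\sigma=\{x,y\}$. In the latter case \cref{Prop:Empty_Tau_Delaunay} reduces the claim to \cref{Lem:Cofacet_Radii} for $Z=X_q$. So assume $\tau\neq\emptyset$ and fix $z\in\tau$. Let $A$ be the affine subspace of centers equidistant from the points of $\tau$. Let $F\subset A$ be the feasible set of \eqref{eq:opt_Prob}, namely the centers $c\in A$ with
\begin{itemize}
\item $\|c-x\|\le\|c-z\|$ and $\|c-y\|\le\|c-z\|$ (we call these the ``inside'' constraints), and
\item $\|c-w\|\ge\|c-z\|$ for all $w\in X_q\setminus\{x,y\}$ (the ``emptiness'' constraints).
\end{itemize}
All of these are linear in $c$, so $F$ is a closed convex polyhedron, and it is nonempty because $\sigma\in\Incr$. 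Let $\hat c$ be the center of $S_\sigma$. Then $\hat c$ lies in the set $G\subset A$ cut out by the inside constraints alone, and it minimizes $\|c-z\|$ over $G$. By hypothesis $\hat c\notin F$, so some emptiness constraint fails at $\hat c$.

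\textbf{Locating the tight constraint.} Let $v$ be the point of $F$ closest to $\hat c$, taking this segment to lie inside $G$, which is convex. Since $\hat c\notin F$, the point $v$ lies on the relative boundary of $F$ within $G$. Hence some emptiness constraint is tight at $v$, i.e.\ $\|v-w\|=\|v-z\|$ for some $w\in X_q\setminus\{x,y\}$, and this constraint is violated at points $b$ of the open ray from $v$ toward $\hat c$ arbitrarily close to $v$. As in \cref{Lem:Cofacet_Radii}, $b\in A$, so $w\notin\tau$. Also $w\notin\{x,y\}$, so $w\notin\sigma$. Because the constraint is affine along the ray, vanishes only at $v$, and is negative near $v$ on the ray, it is negative all along the ray. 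Since $\hat c$ lies on the ray, this gives $\|\hat c-w\|<\|\hat c-z\|$, i.e.\ $w$ lies inside $S_\sigma$.

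\textbf{Membership of the cofacet.} It remains to show $\sigma\cup\{w\}\in\Incr$. Since $w\in X_q$, we have $\gamma(\sigma\cup\{w\})=q$. Moreover $w$ is neither $\max_1$ nor $\max_2$ of $\sigma\cup\{w\}$, because $\gamma_1(w)<\gamma_1(x)$ and $\gamma_2(w)<\gamma_2(y)$ by injectivity, so $\max_1,\max_2$ are unchanged. The sphere centered at $v$ through $z$ then witnesses $\sigma\cup\{w\}$:
\begin{itemize}
\item it is a circumsphere of $\tau\cup\{w\}$;
\item it has $x$ and $y$ inside or on it, because $v\in F$;
\item it is $(X_q\setminus\{x,y\})$-empty, again because $v\in F$.
\end{itemize}

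\textbf{Main obstacle.} The delicate step is the boundary argument. One must make sure that the constraint which is tight at $v$ and violated toward $\hat c$ is an emptiness constraint and not an inside constraint. Choosing the projection of $\hat c$ within $G$ handles this: the whole segment from $v$ to $\hat c$ stays in $G$, so no inside constraint can cut it off. One must also confirm that $v$ is genuinely on the relative boundary of $F$, which holds because $\hat c\notin F$. Both points should be checked carefully.
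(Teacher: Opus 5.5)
Your proposal is correct and follows essentially the same route as the paper. Your $F$ and $G$ are exactly the paper's $V=U\cap U'$ and $U'\cap A$, your use of the convexity of $G$ is the paper's step showing $v\in\partial U$, and the $\tau=\emptyset$ case is reduced to \cref{Lem:Cofacet_Radii} via \cref{Prop:Empty_Tau_Delaunay} just as in the paper; your explicit check that adding $w$ leaves $\gamma(\sigma)$, $\max_1$, and $\max_2$ unchanged also fills in a detail the paper leaves implicit.
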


\begin{proof}
We first assume that $\tau\ne \emptyset$.  Let $\hat c$ be the center of $S_{\sigma}$.  Let 
\begin{align*}
U&=\bigcap_{z\in \tau} \Vor(z,X_{\gamma(\sigma)}\setminus \{x,y\}),\\ 
U'&=\Vor(x,\tau\cup\{x\})\cap  \Vor(y,\tau\cup\{y\}),
\end{align*}
and $V=U\cap U'$.  Note that since $\sigma\in \Incr$, we have $V\ne \emptyset$.  Moreover, since $\sigma$ is not $\Incr$-Gabriel, $\hat c\notin U$, hence $\hat c\not\in V$.     Let $v$ be the closest point of $V$ to $\hat c$. Since $\hat c \notin V$, we have $v\in \partial V$.    But since $v,\hat c\in U'$ and $U'$ is convex, every point on the line segment from $v$ to $\hat c$ is in $U'$, so $v\in \partial U$.  Thus, there exist points $w \in X_{\gamma(\sigma)}\setminus\{x,y\}$ and $z\in \tau$ such that $\|v-w\|=\|v-z\|$ and $\|b-w\|<\|b-z\|$ for some point $b$ lying on the open ray $L\coloneqq \{v+t(\hat c-v)\mid t>0\}$.  The set $A$ of points equidistant from all points of $\tau$ is an affine subspace of $\R^d$, so since $v,\hat c\in A$, we also have $b\in A$.  Therefore, $w\not\in\tau$, hence $w\not\in \sigma$.  Since $\|v-w\|=\|v-z\|$, we have $v \in V\cap \Vor(w,X_{\gamma(\sigma)}\setminus \{x,y\})$, so $\sigma \cup \{w\} \in \Incr$.  Since $\|b-w\|<\|b-z\|$, the line containing $L$ intersects the hyperplane $\{c\in \R^d \mid \|c-w\|= \|c-z\|\}$ only at $v$, so by the intermediate value theorem, we must have  $\|c-w\|<\|c-z\|$ for all $c$ on $L$.  In particular, we have  $\|\hat c-w\|<\|\hat c-z\|$, so $w$ lies inside of $S_{\sigma}$.
 
Now assume that $\tau= \emptyset$.  We use the notation of \cref{Sec:Ordinary_Del_Radii}, taking the ambient point set $Z$ to be $X_{\gamma(\sigma)}$.  Note that $x\ne y$, since otherwise $\sigma$ would be $\Incr$-Gabriel.   \cref{Prop:Empty_Tau_Delaunay} implies that $x$ and $y$ both lie on $S_\sigma$, so $S_{\sigma}=R_{\sigma}$.   \cref{Prop:Empty_Tau_Delaunay} also tells us that $\sigma\in \Del(Z)$, so since $\sigma$ is not $\Incr$-Gabriel, it follows that $\sigma$ is not Gabriel.  By \cref{Lem:Cofacet_Radii}, $\sigma$ has a cofacet $\sigma \cup \{w\} \in \Del(Z)$ such that $w$ lies inside $R_{\sigma}=S_{\sigma}$.  Since $\Del(Z)\subset \Incr$, the result follows. 
\end{proof}

In analogy with the above, \cref{Lem:Incremental_cofacet} implies that we can check whether a simplex $\sigma\in \Incr$ is $\Incr$-Gabriel in time proportional to the number of cofacets of $\sigma$.   In particular, it tells us that every maximal simplex of $\Incr$ is $\Incr$-Gabriel.  This yields the following algorithm to compute the Delaunay radii of all simplices in $\Incr$: For each $\sigma \in \Incr$ in order of decreasing dimension, set 
\[
\omega_{\sigma} \leftarrow
\begin{cases}
\text{radius of } S_\sigma & \text{if } \sigma \text{ is $\Incr$-Gabriel,} \\
\min\, \{\omega_{\mu} \mid \mu\in \Incr \textup{ a cofacet of } \sigma\} & \text{otherwise.}
\end{cases}
\]
By the same argument as in the non-incremental case, this algorithm runs in time $O(|\Incr|)$.  In further analogy to the non-incremental case, the 
proof of correctness hinges on the following result.

\begin{proposition}\label{Prop:Incr_Cofact_Formula}
If $\sigma \in \Incr$ is not $\Incr$-Gabriel, then \[\omega_\sigma =\min\, \{\omega_{\mu} \mid \mu\in \Incr \textup{ a cofacet of } \sigma\} .\]
\end{proposition}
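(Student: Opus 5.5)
We follow the proof of \cref{Prop:Del_Rad_Coface_Formula}, treating the optimization problem \eqref{eq:opt_Prob} as the incremental analogue. The case $\tau=\emptyset$ reduces to the ordinary setting. By the proof of \cref{Lem:Incremental_cofacet}, we have $x\ne y$, $S_\sigma=R_\sigma$, and $\sigma\in\DelT(Z)$ with $Z=X_{\gamma(\sigma)}$. Also $\omega_\sigma=b_\sigma$, because the witnesses of $\sigma$ of minimum radius are exactly the smallest $Z$-empty circumspheres (\cref{Prop:Empty_Tau_Delaunay}). By \cref{Prop:Del_Rad_Coface_Formula}, $b_\sigma=b_\nu$ for some cofacet $\nu=\sigma\cup\{w\}\in\DelT(Z)$. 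It then remains to check that $\omega_\nu\le b_\nu$: the smallest $Z$-empty circumsphere of $\nu$ witnesses $\nu\in\Incr$, since $\gamma(\nu)\le\gamma(\sigma)$ or $w$ becomes a max and lies on the sphere. We also need the general inequality $\omega_\sigma\le\omega_\mu$ for every cofacet $\mu$.

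\textbf{Monotonicity.} I would prove $\omega_\sigma\le\omega_\mu$ directly from \cref{def:biincremental}. Let $S$ be a witness of $\mu=\sigma\cup\{w\}$. If $w$ is not a maximum of $\mu$, then $S$ passes through $\tau\cup\{w\}$ and contains $x,y$ inside or on it. It is also $(X_{\gamma(\mu)}\setminus\{x,y\})$-empty, and $\gamma(\mu)=\gamma(\sigma)$, so $S$ is a witness of $\sigma$.

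If instead $w=\max_1(\mu)$, the old $x$ becomes an ordinary vertex lying on $S$, and $w$ lies inside or on $S$. Since $X_{\gamma(\sigma)}\subset X_{\gamma(\mu)}$, emptiness of $S$ only weakens when we pass to $\sigma$. Hence $S$ witnesses $\sigma$, as $x$ is allowed on $S$. The same holds when $w$ is a new $\max_2$, or both maxima. Thus $\omega_\sigma\le\omega_\mu$.

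\textbf{Reverse inequality for $\tau\ne\emptyset$.} Let $c^*$ be the minimizer of \eqref{eq:opt_Prob} over the feasible set $F$, which is a convex polyhedron in $A$. Let $\hat c$ be the center of $S_\sigma$, i.e.\ the minimizer over $A$ subject only to the constraints involving $x$ and $y$. Since $\sigma$ is not $\Incr$-Gabriel, $\hat c\notin F$. As in the ordinary case, strict convexity gives that some constraint $\|c-w\|\ge\|c-z\|$ with $w\in X_{\gamma(\sigma)}\setminus\{x,y\}$ is tight at $c^*$. If every such emptiness constraint were slack, then $c^*$ would minimize over the set cut out by the $x,y$ constraints alone, forcing $c^*=\hat c$, a contradiction. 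This step is the main point: the $x,y$ constraints may also be tight, and the argument must go through the reduced problem rather than ``moving toward $\hat c$'', since the segment from $c^*$ to $\hat c$ stays in the convex set defined by the $x,y$ constraints.

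Set $\nu=\sigma\cup\{w\}$. Here $\gamma(w)\le\gamma(\sigma)$ and $w\notin\{x,y\}$, so $\max_i(\nu)=\max_i(\sigma)$ and $\gamma(\nu)=\gamma(\sigma)$. The sphere with center $c^*$ and radius $\omega_\sigma$ passes through $\tau\cup\{w\}$, contains $x,y$ inside or on it, and is $(X_{\gamma(\nu)}\setminus\{x,y\})$-empty. Hence it witnesses $\nu\in\Incr$, giving $\omega_\nu\le\omega_\sigma$. Combined with monotonicity, $\omega_\sigma=\omega_\nu=\min_\mu\omega_\mu$.
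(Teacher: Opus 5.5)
Your proof is correct and follows essentially the same route as the paper's: a tight non-trivial constraint at the minimizer of \eqref{eq:opt_Prob} when $\tau\neq\emptyset$, and a reduction to \cref{Prop:Del_Rad_Coface_Formula} via \cref{Prop:Empty_Tau_Delaunay} when $\tau=\emptyset$; you also verify the monotonicity $\omega_\sigma\le\omega_\mu$ for cofacets, which the paper only asserts. One small fix: take the tight constraint with $w\in X_{\gamma(\sigma)}\setminus\sigma$ rather than $w\in X_{\gamma(\sigma)}\setminus\{x,y\}$, since the constraints with $w\in\tau$ hold with equality on all of $A$ and would only give $\nu=\sigma$.
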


\begin{proof}
The proof is a variant of the proof of \cref{Prop:Del_Rad_Coface_Formula}.  Let $S$ denote the smallest witness of $\sigma$.  Note that we may have $S\ne S_{\sigma}$, since $S$ must be $X_{\gamma(\sigma)}\setminus \{x,y\}$-empty.  Recall that  $\omega_{\sigma}$ is the radius of $S$.  

We first assume that $\tau\ne \emptyset$.  Let $c^*$ and $\hat c$ denote the respective centers of $S$ and $S_{\sigma}$. Recall that $c^*$ is the minimizer of the strictly convex, linearly constrained optimization problem \eqref{eq:opt_Prob} from the proof of \cref{Prop:Min_Rad_Witness}.  
Let $F$ denote the feasible set of this optimization problem, and note that $F$ is a closed, convex (possibly unbounded) polytope.  Since $\sigma$ is not $\Incr$-Gabriel, we have $\hat c\not \in F$.  Note that $\hat c$ is the minimizer of the optimization problem obtained from \eqref{eq:opt_Prob} by dropping the constraints 
\[  \|c - w\| \geq \|c - z\|
            \text{ for all } w \in X_{\gamma(\sigma)} \setminus \sigma.\]
Thus, arguing as in the proof of \cref{Prop:Del_Rad_Coface_Formula}, we find  that $c^*\in \partial F$.  Therefore, at least one constraint of the optimization problem \eqref{eq:opt_Prob} is tight at $c^*$.  Since $\hat c$ satisfies the first two constraints, which are linear, it follows that we in fact have a tight constraint of the form $\|c^* - w\| = \|c^* - z\|$ for some $w \in X_{\gamma(\sigma)} \setminus \sigma$.  Thus, $w$ lies on $S$.  Letting $\nu\coloneqq \sigma\cup\{w\}$, we have $\nu\in \Incr$, since $S$ is a witness of $\nu$.  

For $\mu\in \Incr$ any cofacet of $\sigma$, we have  $\omega_{\sigma}\leq \omega_{\mu}$.  In particular, $\omega_{\sigma}\leq \omega_{\nu}$.  Since $S$ is the smallest witness of $\sigma$ and also a witness of $\nu$, in fact $ \omega_{\sigma}=\omega_{\nu}$.  Thus, $\omega_\sigma = {\min\, \{ \omega_\mu\mid \mu \textup{ a cofacet of } \sigma\}}.$

Now assume that $\tau=\emptyset$.  We again use the notation of \cref{Sec:Ordinary_Del_Radii}, taking $Z=X_{\gamma(\sigma)}$.  
As noted in the proof of \cref{Lem:Incremental_cofacet}, $\sigma\in \DelT(Z)$ and $\sigma$ is not Gabriel.
Moreover, by \cref{Prop:Empty_Tau_Delaunay}, $x$ and $y$ both lie on $S$, so $\omega_{\sigma}=b_{\sigma}$.   Therefore, \cref{Prop:Del_Rad_Coface_Formula} implies that $b_{\sigma}=b_{\nu}$ for $\nu\in \DelT(Z)$ some cofacet of $\sigma$.  Since  $\DelT(Z)\subset \Incr$, we have $\nu\in \Incr$.  We also have $b_\nu=\omega_\nu$, because otherwise $b_\sigma=\omega_\sigma \leq \omega_\nu < b_\nu$, a contradiction.  Thus, $\omega_\sigma=b_\sigma=b_\nu=\omega_\nu$.  Since $\omega_{\sigma}\leq \omega_\mu$ for  $\mu\in \Incr$ any cofacet of $\sigma$, this shows that $\omega_{\sigma}= \min\, \{\omega_{\mu} \mid \mu\in \Incr \textup{ a cofacet of } \sigma\}.$
\end{proof}

\begin{proposition} 
    The above algorithm correctly computes the incremental Delaunay radius $\omega_\sigma$ of each simplex $\sigma \in \Incr$.
\end{proposition}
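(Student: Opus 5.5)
The plan is induction on $\dim(\sigma)$, processing simplices in order of decreasing dimension exactly as the algorithm does. The induction hypothesis is that when $\sigma$ is processed, $\omega_\mu$ has already been computed correctly for every cofacet $\mu\in\Incr$ of $\sigma$. The base case is the maximal simplices: by \cref{Lem:Incremental_cofacet}, a maximal simplex (one with no cofacet in $\Incr$) must be $\Incr$-Gabriel, so the algorithm falls into the first case for it. Since $\dim(\sigma)\le d+2$ for all $\sigma\in\Incr$, the induction terminates.

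\emph{The Gabriel case.} Suppose $\sigma$ is $\Incr$-Gabriel. I would show that $S_\sigma$ is the smallest witness of $\sigma$, so $\omega_\sigma$ equals its radius. First, $S_\sigma$ is a circumsphere of $\tau$ with $x,y$ on or inside it. By the Gabriel assumption it is also $(X_{\gamma(\sigma)}\setminus\{x,y\})$-empty, so $S_\sigma$ is a witness. Conversely, every witness satisfies the constraints defining $S_\sigma$, so its radius is at least that of $S_\sigma$. Hence $S_\sigma$ is a minimum-radius witness, and it is the unique one by \cref{Prop:Min_Rad_Witness}.

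\emph{The non-Gabriel case.} If $\sigma$ is not $\Incr$-Gabriel, then \cref{Prop:Incr_Cofact_Formula} gives $\omega_\sigma=\min\{\omega_\mu\}$ over cofacets $\mu\in\Incr$. By \cref{Lem:Incremental_cofacet} this set of cofacets is non-empty, and by the induction hypothesis each $\omega_\mu$ has already been computed correctly. So the value the algorithm assigns is correct.

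\emph{Checking the Gabriel property.} The algorithm must also decide correctly which case applies, using only the cofacets of $\sigma$. \cref{Lem:Incremental_cofacet} says that if $\sigma$ is not $\Incr$-Gabriel, some cofacet $\sigma\cup\{w\}\in\Incr$ has $w$ strictly inside $S_\sigma$. For the converse, a cofacet $\sigma\cup\{w\}$ with $w$ inside $S_\sigma$ must satisfy $w\notin\{x,y\}$, $\gamma(w)$ compatible with membership in $X_{\gamma(\sigma)}$, and so on. Rather than establish this converse, I would make the test precise: declare $\sigma$ non-Gabriel if and only if some cofacet $\sigma\cup\{w\}$ has $w\in X_{\gamma(\sigma)}\setminus\{x,y\}$ strictly inside $S_\sigma$. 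With that definition, both directions of the equivalence with the $\Incr$-Gabriel property follow immediately from the definition of $\Incr$-Gabriel and from \cref{Lem:Incremental_cofacet}.

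The main obstacle is this last step: matching the algorithm's cofacet-based test exactly to the definition of $\Incr$-Gabriel, in particular ensuring that the witness point $w$ belongs to $X_{\gamma(\sigma)}$ rather than coming from a cofacet that raises $\gamma$.
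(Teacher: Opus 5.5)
Your proof is correct and follows the paper's argument: downward induction over the face poset, with maximal simplices being $\Incr$-Gabriel by \cref{Lem:Incremental_cofacet}, the Gabriel case giving the radius of $S_\sigma$ (which you justify explicitly where the paper only asserts it), and the non-Gabriel case handled by \cref{Prop:Incr_Cofact_Formula}. Your extra care about the Gabriel test is well placed, since a cofacet vertex $w$ with $\gamma(w)\not\le\gamma(\sigma)$ lying inside $S_\sigma$ need not certify non-Gabriel-ness; note, however, that the forward direction of your refined test uses a fact that appears in the proof of \cref{Lem:Incremental_cofacet} rather than in its statement, namely that the vertex $w$ constructed there lies in $X_{\gamma(\sigma)}\setminus\{x,y\}$.
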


\begin{proof}
We proceed by downward induction on the face poset of $\sigma$.  In the base case that $\dim(\sigma)$ is maximal, $\sigma$ is $\Incr$-Gabriel, as noted above, so  $\omega_{\sigma}$ is the radius of $ S_{\sigma}$.  
For the induction step, assume that $\sigma$ is not top-dimensional and that $\omega_{\mu}$ has been correctly computed for all cofacets $\mu$ of $\sigma$.  As above, if $\sigma$ is $\Incr$-Gabriel, then $\omega_{\sigma}$ is the radius of $S_\sigma$.  If $\sigma$ is not $\Incr$-Gabriel, then in view of the induction hypothesis, \cref{Prop:Incr_Cofact_Formula} implies that the algorithm correctly computes $\omega_{\sigma}$. 
\end{proof}

\newpage

\bibliographystyle{plainurl}
\bibliography{refs}
\end{document}